\definecolor{myblue}{RGB}{10,110,230}
\definecolor{mygreen}{RGB}{10,150,110}
\definecolor{myred}{RGB}{170,40,40}
\renewcommand{\epsilon}{\varepsilon}
\DeclareMathOperator{\E}{\ensuremath{\normalfont \textbf{E}}}
\newcommand{\hiddencomment}[1]{}
\newcommand{\mc}[1]{\ensuremath{\mathcal{#1}}}
\newcommand{\GMM}[1]{\ensuremath{\mathsf{GMM}(#1)}}
\newcommand{\finalapxgl}[0]{\ensuremath{.501}}
\DeclareMathOperator{\poly}{poly}
\crefname{lemma}{Lemma}{Lemmas}
\crefname{theorem}{Theorem}{Theorems}
\crefname{property}{Property}{Properties}
\crefname{claim}{Claim}{Claims}
\crefname{definition}{Definition}{Definitions}
\crefname{observation}{Observation}{Observations}
\crefname{proposition}{Proposition}{Propositions}
\crefname{assumption}{Assumption}{Assumptions}
\crefname{line}{Line}{Lines}
\crefname{figure}{Figure}{Figures}
\crefname{equation}{}{}
\crefname{section}{Section}{Sections}
\crefname{appendix}{Appendix}{Appendices}
\crefname{openproblem}{Open Problem}{Open Problems}
\crefname{algCounter}{Algorithm}{Algorithms}
\Crefname{algCounter}{Algorithm}{Algorithms}
\newtheorem{openproblem}{Open Problem}
\newtheorem{theorem}{Theorem}
\newtheorem{lemma}{Lemma}[section]
\newtheorem{proposition}[lemma]{Proposition}
\newtheorem{definition}[lemma]{Definition}
\newtheorem{claim}[lemma]{Claim}
\newtheorem{fact}[lemma]{Fact}
\newtheorem{observation}[lemma]{Observation}
\newtheorem{remark}[lemma]{Remark}
\definecolor{mylightgray}{RGB}{230,230,230}
\newcommand{\smparagraph}[1]{\vspace{-0.2cm}\paragraph{#1}}
\algnewcommand{\IIf}[2]{\textbf{if} #1 \textbf{then} #2}
\algnewcommand{\EndIIf}{\unskip\ \algorithmicend\ \algorithmicif}
\newenvironment{graytbox}{
\par\addvspace{0.1cm}
\begin{tcolorbox}[width=\textwidth,
                  boxsep=5pt,
                  left=1pt,
                  right=1pt,
                  top=2pt,
                  bottom=2pt,
                  boxrule=0pt,
                  arc=0pt,
                  colback=mylightgray,
                  colframe=black,
                  ]
}{
\end{tcolorbox}
}
\newenvironment{whitetbox}{
\par\addvspace{0.1cm}
\begin{tcolorbox}[width=\textwidth,
                  boxsep=5pt,
                  left=1pt,
                  right=1pt,
                  top=2pt,
                  bottom=2pt,
                  boxrule=1pt,
                  arc=0pt,
                  colframe=black,
                  colback=white,
                  breakable
                  ]
}{
\end{tcolorbox}
}
\newenvironment{myproof}{
\vspace{-0.5cm}
\begin{proof}
}{
\end{proof}
}
\newcounter{algCounter}
\newenvironment{alglist}[2]{
    \begin{whitetbox}
        \refstepcounter{algCounter}
        \textbf{Algorithm~\thealgCounter:} #1
        \label{#2}
        
        \vspace{-0.2cm}
}{
    \end{whitetbox}
}
\renewcommand{\paragraph}{%
  \@startsection{paragraph}{4}%
  {\z@}{10pt}{-1em}%
  {\normalfont\normalsize\bfseries}%
}
\title{Dynamic Algorithms for Maximum Matching Size}
\author{
Soheil Behnezhad\thanks{Khoury College of Computer Sciences, Northeastern University. Webpage: \texttt{behnezhad.com}.}
}
 \date{}
\begin{document}

\maketitle

\thispagestyle{empty}

\begin{abstract}
    We study fully dynamic algorithms for maximum matching. This is a well-studied problem, known to admit several update-time/approximation trade-offs. For instance, it is known how to maintain a 1/2-approximate matching in $(\poly\log n)$ update time or a $2/3$-approximate matching in $O(\sqrt{n})$ update time, where $n$ is the number of vertices. It has been a long-standing open problem to determine whether either of these bounds can be improved.
    
    \smallskip\smallskip
    
    In this paper, we show that when the goal is to maintain just the size of the matching (and not its edge-set), then these bounds can indeed be improved. First, we give an algorithm that takes $(\poly\log n)$ update-time and maintains a $.501$-approximation ($.585$-approximation if the graph is bipartite). Second, we give an algorithm that maintains a $(2/3 + \Omega(1))$-approximation in $O(\sqrt{n})$ time for bipartite graphs.
    
    \smallskip\smallskip
    
    Our results build on new connections to sublinear time algorithms. In particular, a key tool for both is an algorithm of the author for estimating the size of maximal matchings in $\widetilde{O}(n)$ time [Behnezhad; FOCS 2021]. Our second result also builds on the {\em edge-degree constrained subgraph (EDCS)} of Bernstein and Stein [ICALP'15, SODA'16]. In particular, while it has been known that EDCS may not include a better than 2/3-approximation, we give a new characterization of such tight instances which allows us to break it. We believe this characterization might be of independent interest.
\end{abstract}

\clearpage

{
\hypersetup{hidelinks}
\vspace{1cm}
\renewcommand{\baselinestretch}{0.1}
\setcounter{tocdepth}{2}
\tableofcontents{}
\thispagestyle{empty}
\clearpage
}

\setcounter{page}{1}

\clearpage
\section{Introduction}

We study approximate maximum matchings in fully dynamic graphs. We are given an $n$-vertex graph $G=(V, E)$ that is subject to both edge insertions and deletions. The goal is to maintain (the size of) an approximate maximum matching of $G$ while spending a small time per update.  For exact maximum matchings, there are conditional lower bounds that rule out any $O(n^{1-\epsilon})$ update-time algorithm \cite{AbboudW14,HenzingerKNS15,Dahlgaard-ICALP16} (see also \cite{Sankowski07,BrandNS19} for upper bounds). As such, much of the work in the literature of dynamic matching has been on approximate solutions; see \cite{OnakR10,BaswanaGS11,BaswanaGS18,NeimanS-STOC13,GuptaPeng-FOCS13,BhattacharyaHI-SiamJC18,BernsteinS-ICALP15,BernsteinSteinSODA16,BhattacharyaHN-SODA17,BhattacharyaHN-STOC16,Solomon-FOCS16,CharikarS18-ICALP,ArarCCSW-ICALP18,BernsteinFH-SODA19,BehnezhadDHSS-FOCS19,BehnezhadLM-SODA20,Wajc-STOC20,BernsteinDL-STOC21,BhattacharyaK21-ICALP21,RoghaniSW-ITCS22,Kiss-ITCS22,GrandoniSSU-SOSA22,BehnezhadK-SODA22} and their references.

\paragraph{$1/2$-Approximation:} When the goal is to maintain a $1/2$-approximation, then extremely fast $(\poly\log n)$ update time algorithms have been known since the work of \citet*{BaswanaGS11} in 2011 (see also \cite{Solomon-FOCS16,BernsteinFH-SODA19,BehnezhadDHSS-FOCS19}). These algorithms maintain a greedy maximal matching, for which the 1/2-approximation guarantee is tight. In sharp contrast, all known $(\frac{1}{2}+\Omega(1))$-approximate algorithms require a polynomial update-time of $n^{\Omega(1)}$ \cite{BhattacharyaHN-STOC16,BehnezhadLM-SODA20,Wajc-STOC20,BehnezhadK-SODA22}. The following question, in particular, has been a major open problem of the area for more than a decade:

\begin{openproblem}\label{open:half}
    Is it possible to maintain a $(\frac{1}{2} + \Omega(1))$-approximation in $\poly\log n$ update-time?
\end{openproblem}

To our knowledge, \Cref{open:half} was first asked by \citet*{OnakR10} in 2010. Several subsequent papers have also imposed it as an important open problem. See e.g. \cite[Section~4]{BhattacharyaHN-STOC16}, \cite[Section~7]{BernsteinSteinSODA16}, \cite[Section~1]{CharikarS18-ICALP}, \cite[Section~1]{BehnezhadLM-SODA20}, \cite[Section~5]{Wajc-STOC20}, and \cite[Section~5]{BehnezhadK-SODA22}.

The fastest known better-than-1/2 approximation requires $O(n^\epsilon)$ update time and obtains a $(\frac{1}{2}+f(\epsilon))$-approximation, where $\epsilon > 0$ can be any constant. This was proved for general graphs by \citet*{BehnezhadLM-SODA20}. Prior to that, an algorithm with the same trade-off was proposed by \citet*{BhattacharyaHN-STOC16} for maintaining the size (but not the edges) of the matching in bipartite graphs (see also \cite{Wajc-STOC20,BhattacharyaK21-ICALP21,BehnezhadK-SODA22}).

Our first main result is a positive resolution of \cref{open:half} for maintaining the size (but not the edges) of the matching.

\begin{graytbox}
\begin{theorem}\label{thm:main}
    There is a fully dynamic algorithm that maintains a $\finalapxgl$-approximation of the size of maximum matching in $(\poly\log n)$ worst-case update time. The algorithm is randomized but works against adaptive adversaries.
\end{theorem}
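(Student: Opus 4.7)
The strategy is to build, as a subroutine, a static randomized algorithm that in $\widetilde{O}(n)$ time returns a $\finalapxgl$-approximation to $\mu^*(G)$, and then turn it into a dynamic algorithm by periodically invoking it with the work spread across updates. For the static subroutine, the key observation is that the Behnezhad (FOCS 2021) $\widetilde{O}(n)$-time estimator can be pointed at the size of a \emph{random-order} maximal matching, and a careful analysis of such a matching (in the spirit of Aronson--Dyer--Frieze--Suen and the more recent beating-$1/2$ lines of work) shows that its expected size already exceeds $\mu^*(G)/2$ by a positive absolute constant, which one can push to at least $\finalapxgl \cdot \mu^*(G)$. Call this static routine $\mathcal{A}$.

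To make this dynamic, I would maintain in parallel (i) a $\poly\log n$-update-time maximal matching $M$ (via Baswana--Gupta--Sen or Solomon), yielding a cheap running $2$-factor estimate $|M|\le \mu^* \le 2|M|$, and (ii) a cached high-accuracy estimate $\widehat{\mu}$, refreshed every $T$ updates by invoking $\mathcal{A}$ on a snapshot of the current graph, with the $\widetilde{O}(n)$ rebuild cost spread uniformly over those $T$ updates so that the worst-case per-update cost is $\widetilde{O}(n/T)$. Choosing $T \approx |M|/\poly\log n$ ensures that the true $\mu^*$ drifts by only an $o(1)$ fraction between successive rebuilds, so $\widehat{\mu}$ remains a $(1+o(1))\cdot\finalapxgl$-approximation; and it gives $\poly\log n$ worst-case update time whenever $|M| = \widetilde{\Omega}(n)$. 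For the range $|M| \ll n/\poly\log n$, where the raw $\widetilde{O}(n)$ rebuild cost is too large to amortize, I would first apply a vertex-sparsification step that reduces the instance to a subgraph on $\widetilde{O}(|M|)$ vertices while preserving $\mu^*$ up to a $(1+o(1))$ factor; the sublinear estimator then runs in $\widetilde{O}(|M|)$ time and the amortization still goes through. The extreme regime $|M| = \poly\log n$ can be handled by simply falling back on an exact matching algorithm with $\poly(|M|)$ update time.

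The main obstacle I expect is correctly interleaving the vertex-sparsifier with the rebuild schedule and the concurrent update stream: the sparsifier must be cheap to maintain, oblivious enough that the $T$ updates occurring during a background rebuild do not invalidate the estimator's guarantees, and itself robust to adaptive updates. Adaptive-adversary robustness for the overall algorithm should then follow from (i) drawing fresh randomness at every rebuild, (ii) the fact that only the integer $\widehat{\mu}$ (rather than any edges of the internal random-greedy matching) is ever revealed to the adversary, so essentially no bits of the estimator's randomness leak, and (iii) the known adaptive-adversary guarantees of the underlying BGS/Solomon maximal-matching subroutine. Composing the approximation losses across the regimes and the $(1+o(1))$ drift slack yields the claimed $\finalapxgl$-approximation in $\poly\log n$ worst-case update time against adaptive adversaries.
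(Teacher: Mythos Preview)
Your reduction framework (periodic rebuild, spreading, vertex sparsification for small $\mu$) matches the paper's \cref{lem:A-to-B} and is fine. The gap is in your static subroutine $\mathcal{A}$.

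First, the central claim --- that the expected size of a random-order greedy maximal matching is at least $\finalapxgl\cdot\mu(G)$ on general graphs --- is not established. Aronson--Dyer--Frieze--Suen does show $\E_\pi|\GMM{G,\pi}|\ge(\tfrac12+c)\mu(G)$ for an absolute $c>0$, but their $c$ is on the order of $10^{-5}$, far below the $10^{-3}$ you need for $\finalapxgl$. The paper does \emph{not} rely on any such bound. Instead, it maintains a near-$\tfrac12$-approximate matching $M$ dynamically and, at query time, \emph{estimates the number of length-three augmenting paths for $M$}: it randomly bipartitions $V\setminus V(M)$ to form a bipartite graph $G_B$ containing $M$, subsamples $M'\subseteq M$ at rate $p$, and uses the \cite{behnezhad2021} oracle on the two induced bipartite pieces $H_L,H_R$ to estimate, for each $e=(u,v)\in M'$, the probability that both endpoints get matched in the respective greedy matchings. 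The resulting lower bound (\cref{lem:mu-lb-gl}) is what yields $0.5018$; random-greedy on $G$ itself plays no role beyond being the primitive that the sublinear oracle can estimate.

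Second, your adaptive-adversary argument breaks at step~(i): the BGS/Solomon maximal-matching algorithms you invoke are \emph{only} known to work against oblivious adversaries (the paper states this explicitly in \cref{sec:bipartite}). Whether a maximal matching can be maintained in $\poly\log n$ time against adaptive adversaries is open. The paper sidesteps this by maintaining only a $(\tfrac12-\epsilon)$-approximate matching via \cite{Wajc-STOC20}, which does work adaptively, and then compensates for the non-maximality of $M$ by separately estimating the size of a maximal matching in $G[V\setminus V(M)]$ (the term $\ell$ in \cref{alg:static-gl}) and taking a max. This two-term max is what drives the final optimization to $0.5018$.
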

\end{graytbox}

As a warm-up to our techniques, we prove an equivalent of \cref{thm:main} for bipartite graphs in \cref{sec:bipartite} (see \cref{thm:main-bipartite}). The approximation ratio turns out to be much better for bipartite graphs: we get a $(2-\sqrt{2}) \approx .585$-approximation against oblivious adversaries, and a $.542$-approximation against adaptive adversaries.

\paragraph{$2/3$-Approximation:} Another notable approximation/update-time trade-off was obtained by \citet*{BernsteinS-ICALP15,BernsteinSteinSODA16} in 2015, who showed that a $(2/3-\epsilon)$-approximation can be maintained in $O(\sqrt{n} \cdot \poly(1/\epsilon))$ update time for bipartite graphs (see also \cite{BernsteinSteinSODA16,GrandoniSSU-SOSA22,Kiss-ITCS22,BehnezhadK-SODA22} for generalizations of this result to non-bipartite graphs). They achieved this by introducing an elegant matching sparsifier that they called an {\em edge-degree constrained subgraph} (EDCS). It is known that the guarantee of 2/3-approximation is tight for EDCS. That is, there are inputs on which the algorithm of \citet*{BernsteinSteinSODA16} does not obtain a better-than-2/3 approximation.  Once we go slightly above 2/3-approximation, then the fastest known algorithms require a much larger nearly-linear-in-$n$ update-time \cite{GuptaPeng-FOCS13}. In particular, a recent algorithm of \citet*{AssadiBKL-Arxiv22} can be used to maintain a $(1-o(1))$-approximate matching in $n/(\log^* n)^{\Theta(1)}$ update-time.


It is worth noting that the 2/3-approximation has been a barrier in several settings. Most notably, is the {\em two-party one-way communication} model. For that model, \citet*{AssadiBernstein19} showed that an (almost) $2/3$-approximation can be achieved with $O(n)$ communication. On the flip side, \citet*{GoelKK12} proved that a $(2/3+\Omega(1))$-approximation requires a much larger $n^{1+\Omega(1/\log\log n)} \gg n \poly\log n$ communication. See also the paper of \citet*{AssadiBehnezhad21} on the importance of the 2/3-approximation barrier, and discussions about it in the random-order streaming model. This motivates the next open question, see e.g. \cite[Section~5]{BehnezhadK-SODA22}.

\begin{openproblem}\label{open:2/3}
    Is it possible to maintain a $(\frac{2}{3} + \Omega(1))$-approximation in $O(\sqrt{n})$ update-time?
\end{openproblem}


Our second main result is a positive resolution of \cref{open:2/3} for bipartite graphs, also for the matching size.

\begin{graytbox}
\begin{theorem}\label{thm:2/3}
    For an absolute constant $\delta_0 > 10^{-6}$, there is an algorithm that maintains a $(\frac{2}{3}+\delta_0)$-approximation of the size of maximum matching in $O(\min\{m^{1/4}, \sqrt{\Delta}\} + \poly\log n) = O(\sqrt{n})$ worst-case update time. Here $\Delta$ is a fixed upper bound on the maximum degree and $m$ is the number of edges. The algorithm is randomized but works against adaptive adversaries.
\end{theorem}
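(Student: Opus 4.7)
The backbone of the algorithm is to maintain an edge-degree constrained subgraph (EDCS) $H$ of $G$ following Bernstein--Stein, with degree thresholds $(\beta,(1-\lambda)\beta)$ for a small constant $\lambda=\Theta(\delta_0)$. Such an $H$ has $\widetilde{O}(n)$ edges, its maximum matching satisfies $\mu(H)\ge(2/3-O(\lambda))\mu(G)$, and it can be maintained in $O(\sqrt{\Delta}+\poly\log n)$ worst-case update time against adaptive adversaries using existing techniques. Since EDCS can be exactly tight at the $2/3$ bound on certain instances, merely reading off $\mu(H)$ does not suffice; the crux of the proof is to detect and exploit these near-tight cases.

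The algorithm outputs the maximum of two estimates: a baseline $\widehat{\mu}_1\approx\mu(H)$ giving the $(2/3-O(\lambda))$-approximation, and a boosted estimate $\widehat{\mu}_2$ obtained by adding to $\widehat{\mu}_1$ the output of the author's $\widetilde{O}(n)$-time sublinear maximal-matching-size estimator run on a carefully constructed auxiliary graph $G^\star$. Because $H$ has $\widetilde{O}(n)$ edges and $G^\star$ can be designed so that its edges are queryable in $\poly\log n$ time from the maintained $H$-data, a single recomputation costs $\widetilde{O}(n)$ time. Amortizing one recomputation over $\Theta(n)$ updates contributes $\widetilde{O}(1)$ per update, and a standard progressive de-amortization schedule turns this into $\poly\log n$ worst-case time, which is dominated by the $O(\sqrt{\Delta})$ EDCS update cost.

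The technical heart of the argument is a new structural characterization of near-tight EDCS instances, which I would formulate as a dichotomy: for every $(\beta,(1-\lambda)\beta)$-EDCS $H$ of a bipartite $G$, either $\mu(H)\ge(2/3+\delta_0)\mu(G)$ directly, or there exists a matching $M\subseteq H$ whose vertex set admits $\Omega(\mu(G))$ vertex-disjoint length-$3$ $M$-alternating paths whose end-edges lie in $G\setminus H$. Given the dichotomy, $G^\star$ is defined as the subgraph of $G\setminus H$ consisting of exactly those candidate augmenting end-edges at the endpoints identified by the characterization. A maximal matching of $G^\star$ of size $\Omega(\mu(G))$ then corresponds to $\Omega(\mu(G))$ simultaneously realizable augmentations of $M$, so invoking the sublinear estimator on $G^\star$ yields the additive boost needed to push $\widehat{\mu}_2$ strictly past $(2/3)\mu(G)$ by an absolute constant.

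The main obstacle is proving the dichotomy. Existing EDCS analyses only deliver one-sided bounds on $\mu(H)$ via a fractional-matching / dual-balancing argument that saturates exactly at $2/3$; extracting the opposite combinatorial witness (an explicit vertex-disjoint family of augmenting paths) from near-tightness requires a finer identification of which dual vertices must be tight, together with a density argument using the EDCS degree constraints to force the augmenting end-edges to appear in $G\setminus H$. Two secondary obstacles are (a) simulating adjacency queries to $G^\star$ in $\poly\log n$ time from the maintained EDCS data, and (b) refreshing all internal randomness between recomputation epochs so that correctness holds against adaptive adversaries; both reduce to standard epoch-based rebuild and bucketing arguments. The explicit constant $\delta_0>10^{-6}$ should then fall out of optimizing $\lambda$ against the constants hidden in the characterization.
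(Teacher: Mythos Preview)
Your high-level architecture---maintain an EDCS, take the max of $\mu(H)$ and a boosted estimate obtained by running the $\widetilde{O}(n)$-time maximal-matching estimator on an auxiliary graph, then amortize via lazy recomputation---is indeed what the paper does. But the structural characterization you propose is not the one the paper proves, and the gap matters.

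You posit that in the near-tight case some matching $M\subseteq H$ admits $\Omega(\mu(G))$ vertex-disjoint \emph{length-three} augmenting paths with end-edges in $G\setminus H$, and that a maximal matching of the graph $G^\star$ of candidate end-edges certifies the boost. The difficulty is that a maximal matching of $G^\star$ gives you vertex-disjoint \emph{end-edges}, not vertex-disjoint augmenting paths: to realize a length-three augmentation you need \emph{both} endpoints of a single $M$-edge to be matched outside simultaneously, and a one-shot greedy matching on $G^\star$ does not control this pairing. (The paper's Section~4 handles length-three paths for the $1/2$ bound, but only via a random subsampling $M'\subseteq M$ and the inequality $g-|M'|\le\#\{\text{augmenting paths}\}$; you have no analogue of that here.) The paper's actual characterization (Lemma~5.2) sidesteps this entirely: it buckets vertices by $H$-degree into $V_{\mathsf{mid}}=\{v:\deg_H(v)\in[.4\beta,.6\beta]\}$ and $V_{\mathsf{low}}=\{v:\deg_H(v)\le .2\beta\}$, takes a near-maximum matching $M$ of $H':=H[V_{\mathsf{low}},V_{\mathsf{mid}}]$ (not of $H$), and proves that in the near-tight case $\mu\bigl(G[V_{\mathsf{mid}}\setminus V(M)]\bigr)\ge(\tfrac{1}{3}-O(\delta))\mu(G)$. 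These are \emph{length-one} augmentations---edges that can be added directly to $M$---so the sublinear estimator run on $F:=G[V_{\mathsf{mid}}\setminus V(M)]$ literally outputs a number $g$ with $|M|+g\le\mu(G)$, no pairing needed. The degree-bucket decomposition and the choice of $H'$ rather than $H$ are the missing structural ingredients.

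A few quantitative corrections. The EDCS has $O(n\beta)=O(n\sqrt{\Delta})$ edges, not $\widetilde{O}(n)$; consequently the query cost is $O(n\sqrt{\Delta})+\widetilde{O}(n)$ (computing the matching in $H'$ already needs time linear in $|H|$). Answering adjacency queries in the auxiliary graph $F$ requires adjacency-matrix access to $G$ itself, which the paper maintains explicitly; ``$H$-data'' alone does not suffice since $F$ is an induced subgraph of $G$. Finally, the lazy recomputation is over $\Theta(\epsilon\widetilde{\mu})$ updates using the multiplicative high-probability guarantee of the semi-dynamic oracle, not over $\Theta(n)$ updates; this is why the paper does not need vertex sparsification for Theorem~\ref{thm:2/3} and keeps the update time at $O(\sqrt{\Delta})+\poly\log n$ rather than picking up an extra $\poly\log n$ factor.
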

\end{graytbox}

To achieve \cref{thm:2/3}, we prove a new characterization of tight instances of EDCS that might be of independent interest given the versatility of EDCS. We briefly overview this in \cref{sec:techniques}. See \cref{sec:EDC-char-statement} for the formal statement of the characterization and a comparison with prior work.

\paragraph{Concurrent Work:} In an independent and concurrent work, and similar to our \cref{thm:main}, \citet*{BhattacharyaKSW-SODA23} show that a better-than-$1/2$ approximation of maximum matching size can be maintained in $(\poly\log n)$ update time, also answering \cref{open:half} in the affirmative. The high-level approach of both works is the same. In particular, the key new ingredient in both is to use the sublinear time algorithm of \citet*{behnezhad2021} for augmenting a 1/2-approximate matching. However, the details are different (in particular under adaptive adversaries, \cite{BhattacharyaKSW-SODA23} obtains a better quantitative improvement over 1/2). We note that \cref{thm:2/3} for beating $2/3$-approximations is unique to our paper.


\vspace{-0.4cm}
\subsection{Perspective: On Maintaining Size vs. Edges}

While majority of the algorithms in the literature maintain the edge-set of the matching, dynamic algorithms for maintaining the matching size have also been studied in several works. For instance, progress on dynamic algorithms for exact maximum matchings has been merely made on maintaining its size \cite{Sankowski07,BrandNS19}. These algorithms use algebraic techniques. Maintaining the size has also been studied for approximate matchings. In particular, \citet*{BhattacharyaHN-STOC16} studied dynamic algorithms for $(\frac{1}{2} + \Omega(1))$-approximating the maximum matching size in bipartite graphs. The algorithm of \cite{BhattacharyaHN-STOC16} maintains a fractional matching, but not an integral one. The bound on the size of the maximum integral matching, nonetheless, follows from the fact that fractional matchings have integrality gap 1 for bipartite graphs. We note that after the work of \cite{BhattacharyaHN-STOC16}, algorithms have been developed for ``losslessly'' rounding fractional matchings into integral ones dynamically \cite{Wajc-STOC20,BhattacharyaK21-ICALP21}. As such, fractional matchings can no longer be a source of separation for maintaining the size vs. the edge-set.

The fact that our algorithms in \cref{thm:main,thm:2/3} estimate the size of the maximum matching as opposed to its edge-set is due to a very different reason than those mentioned above. One of the main building blocks of both algorithms  is a recent {\em sublinear time} algorithm of the author \cite{behnezhad2021} that 1/2-approximates the maximum matching size in $\widetilde{O}(n)$ time.\footnote{Here and throughout the paper, $\widetilde{O}(f) = f \cdot \poly\log n$.} Had the algorithm in \cite{behnezhad2021} worked for {\em finding} a 1/2-approximate matching instead of just its size, then our algorithms could have also maintained the edges of the matching. But this is known to be impossible: $\Omega(n^2)$ queries are information theoretically needed to find any $O(1)$-approximate maximum matching in the adjacency matrix model, even when the graph is promised to have a perfect matching \cite{behnezhad2021}.
\footnote{The lower bound is simple. Suppose that the graph is a perfect matching with its vertex IDs being randomly permuted. Finding this matching requires $\Omega(n^2)$ queries to the adjacency matrix.}

In light of the new connection to sublinear time algorithms discovered in this work and the existing separation for approximating the size vs. finding the edge-set of the matching for sublinear time algorithms, it remains a tantalizing future direction to explore whether such separations exist in the dynamic model. Conversely, resolving either of \cref{open:half,open:2/3} positively for maintaining the edge-set of the matching would be an excellent result.

\vspace{-0.4cm}

\section{Technical Overview}\label{sec:techniques}

Here we give an overview of our algorithms for \cref{thm:main,thm:2/3}. To convey the main intuitions in this section, let us make the simplifying assumption that the maximum matching size $\mu(G)$ of the graph $G$ remains $\Omega(n)$ at all times. We note that this assumption can easily be lifted with just a $(\poly\log n)$ overhead in the update-time using a randomized ``vertex sparsification'' idea of the literature \cite{AssadiKL17} (which can be made work against adaptive adversaries too \cite{Kiss-ITCS22}).

\paragraph{A Reduction to Sublinear Algorithms:} Suppose that we have an algorithm $\mc{A}$ that in $T$ time $\alpha$-approximates the maximum matching size of a static $n$-vertex graph. This can be turned into a dynamic algorithm with the ``lazy'' approach: run $\mc{A}$, keep the solution unchanged for $\Theta(\epsilon n)$ updates, then repeat. Since the maximum matching size changes by at most one in each update and $\mu(G) = \Omega(n)$, then we have an $(\alpha-\epsilon)$-approximation at all times. The (amortized) update-time of the algorithm, on the other hand, is only $O(T/\epsilon n)$. How is this useful? There is a rich body of work on sublinear time algorithms for maximum matching. Building on a long and beautiful line of work \cite{ParnasRon07,YoshidaYISTOC09,NguyenOnakFOCS08,OnakSODA12,ChenICALP20,Kapralov21,behnezhad2021}, the author showed that a $(1/2-\epsilon)$-approximation can be obtained only in $\widetilde{O}(n/\epsilon^2)$ time \cite{behnezhad2021}, which is much smaller than the number of edges in the graph that can be as large as $\Omega(n^2)$. Plugged into the framework above, this gives an (almost) 1/2-approximation in $(\poly\log n)$ update-time. Up until very recently, no $o(n^2)$ time algorithm was known for beating 1/2-approximation. This long-standing barrier was very recently broken by \citet*{BehnezhadRRS-SODA23}, who gave an $O(n^{1+\epsilon})$ time algorithm that obtains a $(\frac{1}{2}+\Omega_\epsilon(1))$-approximation. Thus, the same approximation can be maintained in $\widetilde{O}(n^\epsilon)$ update-time using the framework above. Unfortunately, however, this is slower than our desired $(\poly\log n)$ time, and somewhat curiously, matches the guarantee of the previous algorithms for maintaining the {\em edges} of the matching \cite{BhattacharyaHN-STOC16,BehnezhadLM-SODA20}.

\paragraph{Beyond Sublinear Algorithms:} For traditional sublinear time algorithms, the only input provided is the graph itself. That is, the algorithm is only given query access to the graph (either to its adjacency list or its adjacency matrix). This, however, does not need to be the case for our target application in the framework above. Let us explain this with an example. As discussed, there are algorithms that take $\poly\log n$ worst-case update-time, and maintain the edges of a maximal matching of a fully dynamic graph \cite{BaswanaGS18,Solomon-FOCS16,BernsteinFH-SODA19,BehnezhadDHSS-FOCS19}. One thing we can do, is to run such dynamic algorithms in the background. This way, when the time comes to call the static sublinear time algorithm to estimate the maximum matching size, in addition to providing query access to the graph, we can also feed this maximal matching into the sublinear time algorithm for free. This is, in fact, exactly what we do to prove \cref{thm:main}. In essence, we show that provided adjacency matrix access and given the edge-set of a maximal matching, there is a sublinear time algorithm that $(\frac{1}{2}+\Omega(1))$-approximates the maximum matching size in $\widetilde{O}(n)$ time. Plugged into the framework above, this leads to a $(\frac{1}{2} + \Omega(1))$-approximate fully dynamic algorithm with $\poly\log n$ update-time. For bipartite graphs, this sublinear time algorithm turns out to be very simple and clean, and achieves a much better approximation than half. We prove this as a warm-up in \cref{sec:bipartite}.

\paragraph{Beating 2/3:} For our \cref{thm:2/3} which beats 2/3-approximation, we follow the same framework discussed above. However, instead of feeding a maximal matching into the sublinear algorithm, we maintain an edge-degree constrained subgraph (EDCS) and feed that into the sublinear algorithm. The EDCS, introduced by \citet*{BernsteinS-ICALP15}, is
a sparse subgraph that, for the right parameters, includes a 2/3-approximate maximum matching of its base graph. It is known that this bound is tight --- that there are graphs for which an EDCS does not include a better than 2/3-approximation. To beat 2/3-approximation, one natural idea is to find the 2/3-approximate matching inside the EDCS, and then try to augment it. In general, a 2/3-approximate matching may not leave any augmenting path of length shorter than 5. It seems challenging to find (or even estimate the number of) length-5 augmenting paths efficiently in our model. To get around this, we prove a new characterization of the tight instances of EDCS. We show that when an EDCS $H$ does not include a strictly larger than $2/3$-approximate matching of its base graph $G$, then it must include a $2/3$-approximate matching $M$ that is far from being maximal. In fact, we show that there is a $1/3$-approximate matching in $G$ whose edges can be directly added to $M$. We believe this property might be of independent interest given the versatility of EDCS. See \cref{sec:EDC-char-statement} for more about this and a comparison to a previous characterization by Assadi and the author \cite{AssadiBehnezhad21}.

\vspace{-0.4cm}
\section{Preliminaries}
\vspace{-0.2cm}

Given a graph $G=(V, E)$ and a subset $U \subseteq V$, we use $G[U]$ to denote the induced subgraph of $G$ on $U$. Given two disjoint subsets $A$, $B$ of $V$, we use $G[A, B]$ to denote the bipartite subgraph of $G$ including any of its edges between $A$ and $B$. Given a vertex $v$, we use $N_G(v)$ to denote the set of neighbors of $v$ in $G$. We may drop the subscript when the graph $G$ is clear from the context.

A matching $M$ for $G$ is a subset of its edges such that no two of them share an endpoint. A maximum matching in $G$ is a matching of the largest possible size. We use $\mu(G)$ to denote the size of the maximum matching in $G$. We say a number $\widetilde{\mu}(G)$ is an $\alpha$-approximation of $\mu(G)$ for $\alpha \in (0, 1]$ if $\alpha \mu(G) \leq \widetilde{\mu}(G) \leq \mu(G)$. 

Given a matching $M$, we use $V(M)$ to denote the set of vertices in $M$. For simplicity, given a vertex $v$ we may write $v \in M$ instead of $v \in V(M)$. Given two matchings $M, M^\star$, we use $M \oplus M^\star$ to denote the symmetric difference of the two matchings, i.e., the edges that belong to exactly one of $M$ and $M^\star$. Given a permutation $\pi$ over the edge-set $E$ of a graph $G$, we use $\GMM{G, \pi}$ to denote the matching obtained by greedily processing the edges of $G$ in the order of $E$ and adding each edge possible to the matching.

Throghout the paper {\em w.h.p.} abbreviates {\em with high probability} by which we mean probability at least $1-1/n^c$ for any desirably large constant $c \geq 1$.

\section{Warm-Up: Beating Half for Bipartite Graphs}\label{sec:bipartite}

In this section, we prove the following theorem for bipartite graphs.\footnote{We note that in the first version of the paper, we obtained a $.534$-approximation for bipartite graphs. We thank an anonymous SODA'23 reviewer who suggested a tweak in the analysis, leading to the improved bound of \cref{thm:main-bipartite}.}

\begin{theorem}\label{thm:main-bipartite}
    For bipartite graphs, there is a randomized fully dynamic algorithm that maintains a $(1-\epsilon)(2-\sqrt{2}) \approx .585$-approximation of the size of maximum matching in $(\poly\log n)$ worst-case update-time against oblivious adversaries. If the adversary is adaptive, then there is an algorithm that maintains a $.542$ approximation in $(\poly\log n)$ worst-case update-time.
\end{theorem}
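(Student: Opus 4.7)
The plan is to instantiate the reduction sketched in \cref{sec:techniques}. In the background we maintain a maximal matching $M$ of $G$ in $(\poly\log n)$ worst-case update time via a known dynamic maximal-matching algorithm that works against the relevant adversary model. Every $\Theta(\epsilon n)$ updates we invoke a static sublinear-time algorithm that, given adjacency-matrix access to $G$ together with $M$, returns a $(2-\sqrt{2})$-approximation of $\mu(G)$ in $\wt{O}(n)$ time; amortized over the $\Theta(\epsilon n)$ updates and combined with the standing assumption $\mu(G)=\Omega(n)$, this yields $(\poly\log n)$ amortized update time, and since $\mu(G)$ changes by at most one per update we lose only a $(1-\epsilon)$ factor by reporting a stale estimate between recomputations. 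A standard buffering trick converts the amortized bound to a worst-case bound.

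The core step is the static sublinear algorithm. Since $M$ is maximal, $G$ has no length-$1$ augmenting paths, so any improvement over $|M|$ must come from augmenting paths of length $3$ or longer, each of which uses exactly one matched edge $(v,w)\in M$ with $v$ having an unmatched neighbor on one side and $w$ on the other. I would analyze a randomized augmentation driven by a uniformly random permutation $\pi$ over the matched edges and over the unmatched vertices on each side: scan the matched edges in $\pi$-order and, for each $(v,w)$, augment using the $\pi$-earliest still-available unmatched neighbors on the two sides. A witness/charging analysis against a fixed optimum $M^\star$ should show that the expected size of the resulting matching is at least $(2-\sqrt{2})\,\mu(G)$. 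A crude exchange argument on $M\oplus M^\star$ already yields a $2/3$ bound (since every remaining augmenting path of length $\geq 5$ consumes at least two edges of $M$), but reaching $2-\sqrt{2}$ requires the randomized analysis to leverage the independence of $\pi$ from $M^\star$.

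To implement this in $\wt{O}(n)$ time I would estimate the size of the augmented matching by local sampling, in the spirit of \cite{behnezhad2021,NguyenOnakFOCS08,YoshidaYISTOC09}: draw $\wt{O}(n)$ random matched edges and, for each one, locally simulate the greedy scan restricted to the constant-radius neighborhood relevant to its decision. Because all augmenting paths considered have length exactly $3$, the simulation depth is $O(1)$; combined with the fact that a uniformly random unmatched neighbor of any vertex can be located in $\wt{O}(1)$ expected time via the adjacency-matrix oracle when $\mu(G)=\Omega(n)$, each sample costs $\wt{O}(1)$ in expectation. The main obstacles I anticipate are (i) nailing down the $(2-\sqrt{2})$ constant in the greedy analysis, which requires more than the crude exchange argument above; and (ii) handling adaptive adversaries, where the maintained $M$ must come from an adaptive-adversary maximal-matching subroutine and the randomized-greedy estimate cannot rely on $M$ being independent of future updates. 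For (ii), the randomness in $\pi$ is refreshed on every invocation, but a more conservative augmentation rule appears necessary, producing the weaker $.542$ bound claimed in the statement.
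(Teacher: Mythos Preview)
Your high-level framework (maintain a matching, periodically call a sublinear estimator, amortize) matches the paper, but the core static algorithm and its analysis diverge from the paper's and contain real gaps.

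\textbf{The algorithm.} The paper does \emph{not} scan all of $M$ and greedily augment. Instead it subsamples $M'\subseteq M$ with probability $p=\sqrt{2}-1$, forms the bipartite graph $H=G[V(M'),\,V\setminus V(M)]$, estimates $g=\E_\pi|\GMM{H,\pi}|$, and returns $|M|+\max\{0,g-|M'|\}$. The $(2-\sqrt{2})$ constant comes from optimizing $p$ in $(1-p)+\tfrac{2p}{1+p}$, via the Konrad--Magniez--Mathieu lemma on greedy matching in a vertex-sampled bipartite graph. Your proposal has no subsampling parameter and no analogous lemma; you concede that you do not know how to derive $2-\sqrt{2}$, and indeed your ``scan $M$ in random order and augment'' process does not obviously achieve it.

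\textbf{The sublinear implementation.} Your claim that ``because all augmenting paths have length exactly $3$, the simulation depth is $O(1)$'' is incorrect. Whether a given $(v,w)\in M$ can be augmented depends on whether the unmatched neighbors of $v$ and $w$ were already consumed by \emph{earlier} matched edges, which in turn depends on their unmatched neighbors, and so on; this dependency chain is not bounded by the augmenting-path length. The paper sidesteps this entirely by reducing to a single call to the $\wt{O}(n)$-time estimator of \cite{behnezhad2021} for $\E_\pi|\GMM{H,\pi}|$, using only adjacency-matrix access to $H$. Your additional claim that a random unmatched neighbor can be found in $\wt{O}(1)$ time is also unjustified: $\mu(G)=\Omega(n)$ says nothing about the number of unmatched neighbors of a particular vertex.

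\textbf{Adaptive adversaries.} You assume an adaptive-adversary \emph{maximal} matching subroutine in $\poly\log n$ time; none is known. The paper instead uses a $(1/2-\epsilon)$-approximate (non-maximal) matching maintained against adaptive adversaries \cite{Wajc-STOC20}, then separately estimates the length-$1$ augmenting paths via a second sublinear call on $G[V\setminus V(M)]$, and re-optimizes $p$ (to $p=0.3$) to balance the two estimators, yielding $.542$.
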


First, we start with the algorithm for oblivious adversaries. Then show how to lift the oblivious adversary assumption.

Let us first state a general lemma that we use in all of our results. The lemma guarantees that given a ``semi-dynamic'' algorithm $\mc{A}$ (which only upon being queried produces an estimate of the matching size), one can turn it into a dynamic algorithm $\mc{B}$ for maintaining the maximum matching size at all times.

\newcommand{\lemAtoBstatement}[0]{Let $G$ be an $n$-vertex fully dynamic graph and let $\epsilon > 0$ be a parameter. Suppose that there is a (randomized) data structure $\mc{A}$ (this is the semi-dynamic algorithm) that takes $U(n)$ worst-case time per update to $G$ and, upon being queried, $\mc{A}$ produces in $Q(n, \epsilon)$ time a number $\widetilde{\mu}$, such that $\alpha \mu(G) - \epsilon n \leq \E[\widetilde{\mu}] \leq \mu(G)$. Then there is a randomized  data structure $\mc{B}$ (this is the fully-dynamic algorithm) that maintains a number $\widetilde{\mu}'$ such that at any point during the updates, w.h.p., $(\alpha-\epsilon)\mu(G) \leq \widetilde{\mu}' \leq \mu(G)$. Algorithm $\mc{B}$ takes $O\left(\left(U(n) + \frac{Q(n, \epsilon^2)}{n} \right) \poly(\log n, 1/\epsilon) \right)$ worst-case update-time. Moreover, if $\mc{A}$ works against adaptive adversaries, so does  $\mc{B}$.}
\begin{lemma}[\textbf{From Semi-Dynamic Algorithms to Fully-Dynamic Algorithms}]\label{lem:A-to-B}
\lemAtoBstatement{}
\end{lemma}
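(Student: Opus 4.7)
The plan is to use the standard lazy recomputation framework combined with two refinements: (i) probability amplification, to boost $\mc{A}$'s expectation guarantee into a high-probability one, and (ii) vertex sparsification in the style of Assadi--Khanna--Li/Kiss, which lets us convert the additive error of $\epsilon n$ into a multiplicative error of $\epsilon \cdot \mu(G)$. The core observation driving the lazy framework is that a single edge insertion or deletion changes $\mu(G)$ by at most one, so a fresh estimate of $\mu(G)$ remains valid, up to $\pm t$ additive error, for the next $t$ updates.

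First I would handle amplification. Since $\wt{\mu} \in [0, n]$, I run $T = \Theta(\log n / \epsilon^2)$ independent copies of $\mc{A}$ in parallel, each with its own randomness, at total update cost $\wt{O}(U(n))$. Upon being queried, the median (or a truncated mean) of these estimates satisfies $\alpha \mu(G) - 2\epsilon n \le \wt\mu \le \mu(G)$ with high probability by a standard Hoeffding/median-of-means argument. For adaptive adversaries, since each copy of $\mc{A}$ already tolerates adaptivity by assumption, so does the aggregated estimate.

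Next, to turn additive error into multiplicative error, I would maintain $O(\log n)$ scale-indexed instances. For each $k \in \{1, 2, 4, \ldots, n\}$, maintain an instance on an induced subgraph $G[V_k]$ where $V_k \subseteq V$ is a uniformly random subset of size $n_k = \Theta(k \cdot \poly(\log n, 1/\epsilon))$. A standard sparsification lemma guarantees that whenever $\mu(G) \ge k$, one has $\mu(G[V_k]) = (1 \pm \epsilon)(n_k/n)\, \mu(G)$ w.h.p.\ (see Assadi--Khanna--Li or the adaptive version in Kiss). Running the amplified $\mc{A}$ with accuracy parameter $\epsilon^2$ on $G[V_k]$ yields an estimate with additive error $\epsilon^2 n_k$, which after rescaling by $n/n_k$ translates to multiplicative $O(\epsilon)$ error on $\mu(G)$ whenever $\mu(G) \ge k$. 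Combining all scales, I report the rescaled estimate at the largest scale $k$ whose (sparsified) estimate exceeds an appropriate threshold. To match the target update time, for the scale-$k$ instance I re-query $\mc{A}$ only once every $\Theta(\epsilon k)$ updates: between re-queries the estimate drifts by at most $\epsilon k \le \epsilon \mu(G)$, which is absorbed into the error budget. The amortized query cost at scale $k$ is then
\[
    \frac{Q(n_k, \epsilon^2)}{\epsilon k} \;=\; \poly(\log n, 1/\epsilon)\cdot \frac{Q(n_k, \epsilon^2)}{n_k} \;\le\; \poly(\log n, 1/\epsilon)\cdot \frac{Q(n, \epsilon^2)}{n},
\]
under the mild monotonicity assumption $Q(n,\epsilon)/n$ is non-decreasing in $n$. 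Summed over scales and added to the background $\wt{O}(U(n))$ per update, this matches the desired bound. To convert amortized to worst-case, I would spread each periodic re-query evenly across the $\Theta(\epsilon k)$ updates between it and the previous one, running $\mc{A}$'s internal state-freezing trick to answer queries from the previous completed estimate while the next one is being computed in the background.

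The main obstacle is the adaptive-adversary regime: if $V_k$ were fixed, the adversary could simply place all edges in $V\setminus V_k$ and force the sparsified estimate to be zero. I would handle this by resampling $V_k$ every $\Theta(\epsilon k)$ updates (coupled with the re-query schedule above) and rebuilding $\mc{A}$'s state on $G[V_k]$ from scratch; since between resamplings the adversary sees only deterministic outputs that depend on bits already committed, adaptivity is inherited from $\mc{A}$. The rebuild cost at scale $k$ is $O(U(n_k)\cdot n_k)$ which, spread over $\Theta(\epsilon k)$ updates, is again $\poly(\log n, 1/\epsilon) \cdot U(n)$ per update after using standard deamortization to push it into the worst-case bound.
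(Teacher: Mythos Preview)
Your high-level outline---lazy recomputation, probability amplification via independent copies, and vertex sparsification to convert additive $\epsilon n$ error into multiplicative $\epsilon$---is exactly what the paper does. The paper, however, does not re-derive the sparsification step: it proves the multiplicative-additive guarantee (your first two ingredients) and then invokes Kiss's black-box reduction \cite{Kiss-ITCS22} to obtain the purely multiplicative bound while preserving the adaptive-adversary guarantee. Your attempt to re-derive that reduction has two concrete gaps.

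\emph{First, the sparsification lemma you quote is not correct for random vertex subsets.} You write that for a uniformly random $V_k\subseteq V$ of size $n_k$ one has $\mu(G[V_k]) = (1\pm\epsilon)(n_k/n)\,\mu(G)$ w.h.p., citing Assadi--Khanna--Li. Their lemma is about vertex \emph{contraction} (random hashing of $V$ onto $\Theta(\mu/\epsilon^2)$ supervertices), which preserves $\mu$ up to a $(1-\epsilon)$ factor; there is no $(n_k/n)$ rescaling. Random vertex-subset sampling does not give any fixed rescaling: if $G$ is a perfect matching then $\mu(G[V_k])\approx (n_k/n)^2\mu(G)$, whereas if $G$ is complete bipartite then $\mu(G[V_k])\approx (n_k/n)\mu(G)$. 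So you cannot recover $\mu(G)$ from $\mu(G[V_k])$ by a universal scaling, and the step ``rescaling by $n/n_k$'' breaks.

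\emph{Second, the rebuild cost in the adaptive case is miscounted.} You claim rebuilding $\mc{A}$'s state on $G[V_k]$ costs $O(U(n_k)\cdot n_k)$. But after resampling $V_k$, you must enumerate the edges of $G[V_k]$---either by scanning all $m$ edges of $G$ or by querying $\binom{n_k}{2}$ adjacency-matrix entries---and then feed up to $\Theta(n_k^2)$ insertions into $\mc{A}$. Spread over only $\Theta(\epsilon k)$ updates, this is $\Omega(n_k\cdot U(n_k))$ per update (and potentially $\Omega(m/k)$ just for scanning), far exceeding the target bound. Kiss's construction avoids resampling entirely by keeping $\poly\log n$ independent contractions in parallel and arguing that an adaptive adversary cannot defeat all of them; this is precisely why the paper invokes it as a black box rather than rolling its own.
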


We emphasize that we do not claim any novelty for \cref{lem:A-to-B}; its proof, which we present in \cref{sec:A-to-B-proof}, stitches together known ideas from the literature. Our main novelty for all of our results, is to provide the semi-dynamic algorithm that we plug into \cref{lem:A-to-B}. As a warm-up in this section, we describe this semi-dynamic algorithm for beating half-approximation in bipartite graphs. In particular, we prove the following lemma.

\begin{lemma}\label{lem:A-for-bipartite}
    For any $\epsilon > 0$ and any $n$-vertex fully dynamic graph $G$, there is a (randomized) data structure $\mc{A}$ that takes $(\poly\log n)$ worst-case update-time against an oblivious adversary, and upon being queried takes $\widetilde{O}(n/\epsilon^3)$ time to produce a number $\widetilde{\mu}$ such that 
    $
        (2-\sqrt{2}) \cdot \mu(G) - \epsilon n \leq \E[\widetilde{\mu}] \leq \mu(G).
    $
\end{lemma}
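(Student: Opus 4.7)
The data structure $\mathcal{A}$ is split into an update routine that runs between updates and a query routine that is invoked on demand. Between updates, $\mathcal{A}$ maintains a maximal matching $M$ of $G$ using any known fully dynamic maximal-matching algorithm that achieves $\poly\log n$ worst-case update time against oblivious adversaries, e.g.\ the algorithm of \cite{BehnezhadDHSS-FOCS19}; since $G$ is bipartite, this ensures $|M| \geq \mu(G)/2$ at all times. Denote the bipartition by $L \cup R$, and let $L_M, R_M$ (resp.\ $L_F, R_F$) be the matched (resp.\ free) vertices on each side.

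Upon a query, $\mathcal{A}$ estimates the number of vertex-disjoint length-$3$ augmenting paths in $(G, M)$ and adds this (appropriately) to $|M|$. A length-$3$ augmenting path has the form $u'\text{--}v\text{--}u\text{--}v'$ with $u' \in L_F$, $v' \in R_F$, $(u,v) \in M$, and $(u',v), (u, v') \in E$. I would encode the maximum number $p_{\max}$ of vertex-disjoint length-$3$ augmenting paths through an auxiliary bipartite graph $H$ on $O(n)$ vertices with $\mu(H) = |M| + p_{\max}$: introduce a pair of nodes $\{e_L, e_R\}$ per $M$-edge $e = (u, v)$ joined by a ``coupling'' edge $(e_L, e_R)$, connect $e_R$ to every $u' \in L_F$ with $(u', v) \in E$, and connect $e_L$ to every $v' \in R_F$ with $(u, v') \in E$. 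A short combinatorial check (``couple each non-augmented $e$'' and ``use both external edges on each augmented $e$'') shows $\mu(H) = |M| + p_{\max}$, and each edge of $H$ can be tested by $O(1)$ adjacency queries to $G$ given $M$. The algorithm then invokes the sublinear matching-size estimator of \cite{behnezhad2021} on $H$, which in $\widetilde O(n/\epsilon^3)$ time returns a random $\widetilde\mu_H$ satisfying $(1/2 - \epsilon)\mu(H) - \epsilon n \le \E[\widetilde\mu_H] \le \mu(H)$, and outputs a carefully chosen function of $|M|$ and $\widetilde\mu_H$ as its final estimate.

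Writing $|M| = \alpha \mu$ with $\alpha \in [1/2, 1]$, a standard $M \oplus M^*$ decomposition (using that every augmenting path has length $\ge 3$ and that the total number of $M$-edges in augmenting paths is at most $|M|$) yields the structural bound $p_{\max} \geq (2 - 3\alpha)^+ \mu$, hence $\mu(H) \geq \alpha \mu + (2 - 3\alpha)^+ \mu$. Combining the exact value $|M| = \alpha \mu$ with the $(1/2 - \epsilon)$-approximation of $\mu(H)$ through the chosen output rule and optimizing over $\alpha$ should yield an expected estimate of at least $(2-\sqrt 2)\mu - \epsilon n$; the $-\epsilon n$ slack in the lemma statement absorbs the corresponding additive loss of the sublinear primitive.

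The subtle step is picking the output rule that realizes the $(2-\sqrt 2)$ ratio. The obstacle is that a naive linear combination $c_1 |M| + c_2 \widetilde\mu_H$ subject to the upper bound $\E[\widetilde\mu] \le \mu$ (which forces $c_1 + c_2 \le 1$) cannot exceed a $1/2$-approximation in the worst case, so a more structured (likely piecewise) rule is needed, and the constant $2 - \sqrt 2$ should emerge as the tight worst-case value of the resulting case analysis over $\alpha$. The remaining ingredients---maintaining the maximal matching, constructing $H$, and invoking the sublinear estimator---are essentially routine.
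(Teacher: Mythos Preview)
Your proposal has a genuine gap that you yourself flag: you never specify the output rule, and in fact no such rule exists under the guarantee you use. You only invoke the estimator of \cite{behnezhad2021} through its black-box $(1/2-\epsilon)$-approximation guarantee on $\mu(H)$, and with that guarantee alone the approach provably cannot reach $2-\sqrt{2}$. Consider two instances on the same input pair $(|M|,\widetilde\mu_H)=(\mu_0,\mu_0)$: (A) $M$ is a perfect matching, so $\mu(G)=\mu_0$, $p_{\max}=0$, $\mu(H)=\mu_0$, and $\widetilde\mu_H=\mu_0$ is consistent; (B) $\mu(G)=2\mu_0$, $|M|=\mu_0$, $p_{\max}=\mu_0$, so $\mu(H)=2\mu_0$, and $\widetilde\mu_H=\mu_0=\tfrac12\mu(H)$ is again consistent with the only inequality you invoke. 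Any output rule $f$ must then satisfy $\E f(\mu_0,\mu_0)\le \mu_0$ from (A) and $\E f(\mu_0,\mu_0)\ge (2-\sqrt{2})\cdot 2\mu_0\approx 1.17\mu_0$ from (B), a contradiction. So the ``piecewise rule'' you hope for does not exist; the difficulty is not a calibration detail but a structural obstruction.

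The paper sidesteps this entirely with an idea you are missing: instead of building a coupling-edge gadget graph and appealing to a generic $1/2$-approximation, it \emph{randomly subsamples} $M'\subseteq M$ at rate $p=\sqrt{2}-1$, runs the random greedy estimator on $H:=G[V(M'),\,V\setminus V(M)]$, and outputs $|M|+\max\{0,\,g-|M'|\}$. The upper bound $\widetilde\mu'\le\mu(G)$ holds because any matching of $H$ of size $g$ forces at least $g-|M'|$ edges of $M'$ to be doubly hit, yielding that many disjoint length-$3$ augmentations. The lower bound comes from the lemma of Konrad, Magniez and Mathieu: since each side of $V(M)$ is subsampled independently at rate $p$, random greedy on $H$ recovers a $\tfrac{p}{1+p}$ fraction of the augmenting-path endpoints on each side, giving $\E[g]\ge \tfrac{2p}{1+p}(\mu(G)-|M|)$; the choice $p=\sqrt{2}-1$ makes the $|M|$-coefficient vanish and yields $\E[\widetilde\mu']\ge (2-\sqrt{2})\mu(G)$. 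The random subsampling and the Konrad--Magniez--Mathieu lemma are the essential ingredients, and neither appears in your plan.
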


\cref{lem:A-to-B} and \cref{lem:A-for-bipartite} together prove \cref{thm:main-bipartite} against oblivious adversaries.

\begin{proof}[Proof of \cref{thm:main} against oblivious adversaries]
    Follows by plugging the semi-dynamic algorithm given by \cref{lem:A-for-bipartite} into \cref{lem:A-to-B}.
\end{proof}

\subsection{The Semi-Dynamic Algorithm (Proof of \cref{lem:A-for-bipartite})}

We start by describing the data structures that our semi-dynamic algorithm maintains. \clearpage

\smparagraph{Data Structures:} We maintain the \underline{adjacency matrix} of the graph $G$.\footnote{Storing the adjacency matrix naively requires $O(n^2)$ space. While space-complexity is often not a concern for dynamic algorithms, we note that the space can also be reduced to $\widetilde{O}(m)$ by simply storing the non-zero entries of the adjacency matrix in a binary search tree. This only blows up the time-complexity by a $O(\log n)$ factor.} Additionally, we maintain a maximal matching $M$ of $G$ which can be done in $(\poly\log n)$ worst-case update-time against oblivious adversaries \cite{BernsteinFH-SODA19,BehnezhadDHSS-FOCS19} (this is the only part of the algorithm that requires the oblivious adversary assumption, we show how this can be lifted at the end of this section). Therefore, overall, the dynamic algorithm requires $(\poly\log n)$ worst-case update-time.

It remains to show how to produce the number $\widetilde{\mu}$ in $\widetilde{O}(n/\epsilon^3)$ time using these data structures, which is what we focus on in the rest of this section.

\smparagraph{The Query Algorithm:} Our starting point is the following \cref{alg:static}, which is inspired by a random-order streaming algorithm of \citet*{KonradMM12}.

\begin{alglist}{}{alg:static}
\begin{enumerate}[leftmargin=15pt, itemsep=0pt]
    \item Let $M$ be the maximal matching of $G$ that we maintain.
    
    \item Let $M' \subseteq M$ include each edge of $M$ independently with probability $p = \sqrt{2}-1$.
    
    \item Let $V' := V(M')$ be the set of vertices matched by $M'$, and let $U := V \setminus V(M)$ be the vertices left unmatched by maximal matching $M$.
    
    \item Let $H := G[V', U]$ be the induced bipartite subgraph of $G$ between $V'$ and $U$. (We will not construct this graph $H$ explicitly.)
    
    \item Let $g := \E_\pi[|\GMM{H, \pi}| \mid M']$.
    
    \item Return $\widetilde{\mu}' := |M| + \max\{0, g - |M'|\}$.
\end{enumerate}
\end{alglist}

 Implementing \cref{alg:static}, as stated, requires $\Omega(n^2)$ time which is much higher than our desired time in \cref{lem:A-for-bipartite}. However, we show that one can estimate its output up to an additive $\epsilon n$ error (which note can be tolerated by \cref{lem:A-for-bipartite}) much faster  in $\widetilde{O}(n/\epsilon^3)$ time (\cref{lem:implementation}). This quick implementation of \cref{alg:static} is what we use when the semi-dynamic algorithm is queried.

\begin{lemma}\label{lem:implementation}
    For any $\epsilon > 0$, there is an algorithm that w.h.p. takes $\widetilde{O}(n/\epsilon^3)$ time and returns a number $\widetilde{\mu}''$ such that $\widetilde{\mu}' - \epsilon n \leq \widetilde{\mu}'' \leq \widetilde{\mu}'$. Here $\widetilde{\mu}'$ is the output of \cref{alg:static}.
\end{lemma}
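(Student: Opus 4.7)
The plan is to reduce the computation of $\widetilde{\mu}''$ to a single invocation of the sublinear-time greedy maximal matching size estimator of \cite{behnezhad2021}, applied to the implicit subgraph $H = G[V', U]$. Among the quantities that define $\widetilde{\mu}'$, both $|M|$ and $|M'|$ (together with the membership sets $V(M), V(M')$) are computed in $O(n)$ time: $|M|$ is stored by the dynamic maximal-matching data structure, and $|M'|$ is produced by independently flipping a $p$-biased coin for each edge of $M$. Thus the only task is estimating $g = \E_\pi[|\GMM{H, \pi}| \mid M']$ up to additive error $\epsilon n / c$ for a small constant $c$.

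To apply the estimator of \cite{behnezhad2021} to $H$, I would first observe that we have $O(1)$-time adjacency-matrix access to $H$ without ever materializing it: to answer ``is $(u,v) \in E(H)$?'', we look up $(u,v)$ in the stored adjacency matrix of $G$ and check, using indicator arrays for $V(M')$ and $V(M)$ that are built alongside $M'$, that exactly one of $u,v$ lies in $V(M')$ and the other in $V \setminus V(M)$. Feeding this access into the algorithm of \cite{behnezhad2021} yields, in $\widetilde{O}(n/\epsilon^2)$ time, an estimate $\hat g$ with $|\hat g - g| \le \epsilon n / c$ with high probability; the extra $1/\epsilon$ factor in the target bound $\widetilde{O}(n/\epsilon^3)$ can absorb the polylogarithmic amplification of the success probability and the constant-factor slack in $c$.

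To meet the one-sided requirement $\widetilde{\mu}'' \le \widetilde{\mu}'$, I would shift the estimate and return $\widetilde{\mu}'' := |M| + \max\{0,\ \hat g - \epsilon n / c - |M'|\}$. Conditioning on the high-probability event $|\hat g - g| \le \epsilon n / c$, the shifted value $\hat g - \epsilon n/c$ lies in $[g - 2\epsilon n/c,\ g]$. Monotonicity of $x \mapsto \max\{0, x - |M'|\}$ then gives $\widetilde{\mu}'' \le \widetilde{\mu}'$, while the same monotonicity bounds the shortfall by $2\epsilon n / c \le \epsilon n$ once $c \ge 2$, yielding $\widetilde{\mu}'' \ge \widetilde{\mu}' - \epsilon n$.

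The main obstacle is confirming that the guarantee of \cite{behnezhad2021} really bounds deviation from the conditional expectation $g$, rather than merely from $|\GMM{H,\pi}|$ for the single permutation sampled inside the estimator. If only the latter form is available, I would close the gap by a standard bounded-differences (McDiarmid) argument for $|\GMM{H,\pi}|$ as a function of $\pi$: swapping two positions in $\pi$ changes the greedy matching size by $O(1)$, so $|\GMM{H,\pi}|$ concentrates around its mean with deviation $O(\sqrt{n \log n}) \ll \epsilon n$ w.h.p., reducing to the case already handled. The remaining steps --- maintaining the $V(M'), V(M)$ indicator arrays in $O(1)$ time per sampled edge, union-bounding over the w.h.p.\ events, and verifying that no query of the sublinear estimator can touch an edge outside $H$ --- are routine bookkeeping.
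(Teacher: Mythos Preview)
Your approach is the same as the paper's: construct $M',V',U$ in $O(n)$ time, simulate $O(1)$-time adjacency-matrix access to $H$ by checking membership in $V'$ and $U$ together with the stored adjacency matrix of $G$, and apply the $\widetilde O(n/\epsilon^3)$ estimator of \cite{behnezhad2021} (recorded in the paper as \cref{prop:sublinear}) to $H$. That proposition already returns a one-sided estimate of the \emph{expectation} $g=\E_\pi[|\GMM{H,\pi}|]$, namely $g-\epsilon n\le \widetilde g\le g$, so your manual shift and your McDiarmid fallback are both unnecessary; as a side note, the fallback as written would not go through, since the bounded-differences deviation over a random permutation of $E(H)$ scales like $\sqrt{|E(H)|}$ rather than $\sqrt{n}$, and $|E(H)|$ can be $\Theta(n^2)$.
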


To prove \cref{lem:implementation}, we use the following sublinear-time greedy matching estimator of the author \cite{behnezhad2021}. See \cref{sec:sublinear-apx} for more details about \cref{prop:sublinear}.

\newcommand{\propSize}[0]{
Let $G = (V,E)$ be an $n$-vertex graph to which we have adjacency matrix query access. For any $\epsilon > 0$, there is a randomized algorithm that w.h.p. takes $\widetilde{O}(n/\epsilon^3)$ time and returns a number $\widetilde{g}$ where 
$
    \E_\pi[|\GMM{G, \pi}|] - \epsilon n \leq \widetilde{g} \leq \E_\pi[|\GMM{G, \pi}|].
$
}
\begin{proposition}[\cite{behnezhad2021}]\label{prop:sublinear}
\propSize{}
\end{proposition}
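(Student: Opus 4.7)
The plan is to estimate $\E_\pi[|\GMM{G,\pi}|]$ by (i) averaging a per-vertex ``matched?'' indicator over a uniformly sampled set of vertices, (ii) answering each such indicator query with a local rank-based oracle on a single random rank assignment, and (iii) bounding the oracle's adjacency-matrix probe count by $\widetilde{O}(n/\epsilon)$ per query, so that over $\widetilde{O}(1/\epsilon^2)$ sampled vertices the total work is $\widetilde{O}(n/\epsilon^3)$.

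For step (i), replace the uniform permutation $\pi$ by i.i.d.\ ranks $r(e)\sim\unifzeroone$; the matching $\GMM{G,r}$ built by scanning edges in increasing $r$ is identically distributed to $\GMM{G,\pi}$, and
\[
\E_r\!\left[|\GMM{G,r}|\right] \;=\; \tfrac{1}{2}\sum_{v\in V}\Pr_r\!\left[v\text{ matched in }\GMM{G,r}\right].
\]
Draw one rank vector $r$ and a set $S$ of $k=\Theta(\log n/\epsilon^2)$ uniformly random vertices, and output
\[
\widetilde{g} \;=\; \tfrac{n}{2k}\sum_{v\in S}\mathbf{1}\!\left[v\text{ matched in }\GMM{G,r}\right] \;-\; \tfrac{\epsilon}{2} n.
\]
A Hoeffding bound over the vertex sample gives $\left|\widetilde{g}+\tfrac{\epsilon}{2}n-\E_r[|\GMM{G,r}|]\right|\leq \tfrac{\epsilon}{2} n$ w.h.p., and the $-\tfrac{\epsilon}{2}n$ shift enforces the required one-sided inequality $\widetilde{g}\leq \E_r[|\GMM{G,r}|]$ while keeping the two-sided deviation inside $\epsilon n$.

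For step (ii), use the classical rank-based certificate: an edge $(v,u)$ lies in $\GMM{G,r}$ iff every incident edge with strictly smaller rank fails the same certificate; hence $v$ is matched iff its lowest-rank incident edge $(v,u)$ for which $u$ is not already consumed by a still-lower-rank edge is itself in $\GMM{G,r}$. In the adjacency-matrix model I implement the key primitive ``sample a uniformly random neighbor $u$ of $v$ with $r(v,u)\in[a,b]$'' by rejection sampling: probe uniformly random entries of the $v$-row and keep the first probed $u$ that is an edge and has rank in $[a,b]$; the expected cost of one call is $O(n/N)$ matrix probes, where $N$ is the number of valid neighbors.

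The main obstacle is step (iii): bounding the total probe count of the recursive oracle. A YYI-style charging~\cite{YoshidaYISTOC09} shows that under i.i.d.\ ranks the expected number of recursive edge visits decays geometrically in the rank along the recursion; the technical challenge is to interleave that analysis with the matrix-sampling primitive so that each visit pays $\widetilde{O}(n/N)$ probes rather than $O(1)$ traversal steps, and to prove that the resulting weighted sum telescopes to $\widetilde{O}(n/\epsilon)$ per queried vertex independent of the average degree. This is the technical heart of \cite{behnezhad2021}; in the adjacency-list model, the analogous analysis yields only $\widetilde{O}(m/\epsilon^{O(1)})$, and the whole point of the matrix-access primitive is to shave the $m$ down to $n$ in the final runtime.
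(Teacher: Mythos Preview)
Your high-level framework (sample vertices, query a local rank-based oracle, bound the oracle cost) is the right shape, but your step~(iii) defers the decisive technical work to \cite{behnezhad2021} in a way that misrepresents what that paper actually does. The adjacency-matrix algorithm of \cite{behnezhad2021} does \emph{not} proceed by rejection-sampling neighbors directly in $G$ and then redoing the YYI charging with per-visit weights $n/N$; no such ``weighted telescoping'' is proved there. Instead, it goes through a reduction: build an auxiliary graph $H$ consisting of two copies $G_1,G_2$ of $G$, some edges between them, and $\Theta(n/\epsilon)$ leaves attached to every vertex of $G_2$. The point of the leaves is that any \emph{adjacency-list} query to $H$ can be answered with a single adjacency-matrix query to $G$ (the leaves pad the neighbor lists so that a uniform neighbor is easy to produce). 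One then runs the adjacency-list algorithm of \cite{behnezhad2021} on $H$; its charging analysis is the technical heart you allude to, but it lives entirely in the list model. The paper's proof here carries out exactly this reduction and supplies the one missing piece: showing that the restriction of $\GMM{H,\pi}$ to $V_1\times V_1$ differs from $\GMM{G,\pi}$ in the match-status of at most $O(\epsilon n)$ vertices in expectation, because with probability $\geq 1-\epsilon$ the lowest-rank edge of each $V_2$-vertex goes to a leaf and hence decouples the two copies. Your rejection-sampling route may be viable, but the bound you need for it is not established in \cite{behnezhad2021}, so you cannot simply cite it.

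A smaller issue in step~(i): with a \emph{single} rank vector $r$ shared across all sampled vertices, Hoeffding over the vertex sample only concentrates your estimator around $|\GMM{G,r}|$, not around $\E_r[|\GMM{G,r}|]$; the samples are not independent in $r$. You need either a fresh $r$ per sampled vertex, or a separate argument that $|\GMM{G,r}|$ itself concentrates around its mean.
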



\begin{proof}[Proof of \cref{lem:implementation}]
The construction of $M'$, $V'$, $U$ in \cref{alg:static} takes $O(n)$ time since the edges of $M$ are given. But we cannot afford to explicitly construct $H$ as it may have $\Omega(n^2)$ edges. We get around this by using \cref{prop:sublinear} to estimate $g$ without constructing $H$.  To do this, it suffices to show that we can provide adjacency matrix access to $H$. Observe that an edge $e = (u, v)$ belongs to $H$ iff one of its endpoints is in $V'$, the other is in $U$, and additionally $(u, v) \in E$. This can be checked for any $(u, v)$ in $O(1)$ time since we explicitly store $V', U$ and have adjacency matrix access to $G$. Thus, we can apply \cref{prop:sublinear} on graph $H$ in $\widetilde{O}(n/\epsilon^3)$ time and obtain $g - \epsilon n \leq \widetilde{g} \leq g$. Using $\widetilde{g}$ instead of $g$ in the output of \cref{alg:static} proves the lemma.
\end{proof}

Next, we turn to prove the approximation guarantee of \cref{alg:static}. To do this, we start with the following useful proposition, proved by \citet*{KonradMM12} in the context of streaming algorithms. It shows that a randomized greedy maximal matching obtains a much better than 1/2-approximation if it is run on a vertex-sampled subgraph of a bipartite graph. 

\begin{proposition}[{\cite[Theorem~3]{KonradMM12}}]\label{prop:greedy-random-vertex}
    Let $0 < p \leq 1$, let $G=(A, B, E)$ be a bipartite graph, let $A' \subseteq A$ include each vertex of $A$ independently with probability $p$, and let $H$ be the induced subgraph of $G$ on vertex-set $A' \cup B$. Then for any permutation $\pi$ over the edge-set of $H$, $$\E_{A'}[|\GMM{H, \pi}|] \geq \frac{p}{1+p} \cdot \mu(G).$$
\end{proposition}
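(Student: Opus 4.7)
The plan is to use a charging argument that exploits the maximality of the greedy matching $M = \GMM{H,\pi}$ in $H$. First, fix a maximum matching $M^*$ of $G$ and set $A^* = V(M^*) \cap A$, $B^* = V(M^*) \cap B$, and $S = A^* \cap A'$, so that $\E|S| = p\cdot\mu(G)$. By the maximality of $M$ in $H$, for every $a^* \in S$ the edge $(a^*, M^*(a^*))$ lies in $H$, and hence at least one of $a^*$ or $M^*(a^*)$ must lie in $V(M)$.

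Next, I would assign one unit of charge to each $a^* \in S$ and route it to an $M$-edge: if $a^* \in V(M)$, charge the $M$-edge matching $a^*$; otherwise charge the $M$-edge matching $M^*(a^*)$, which is guaranteed to exist by maximality. A careful case analysis shows that an edge $(a,b) \in M$ receives at most one unit from its $A$-side (when $a \in A^*$) and at most one unit from its $B$-side (when $b \in B^*$ and $M^*(b) \in A' \setminus V(M)$). Aligned edges $(a,b) \in M \cap M^*$ only ever pick up a single unit (the would-be $a$-side and $b$-side charges collapse onto the same vertex $a = M^*(b)$), and the only edges receiving the full two units are the non-aligned edges satisfying all three conditions $a \in A^*$, $b \in B^*$, and $M^*(b) \in A'\setminus V(M)$ simultaneously. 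Letting $D$ denote the number of such doubly-charged edges, one gets $|S| \le |M| + D$, i.e., $|M| \ge |S| - D$.

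The final step is to establish $\E[D] \le \tfrac{p^2}{1+p}\mu(G)$, whence $\E[|M|] \ge p\mu(G) - \tfrac{p^2}{1+p}\mu(G) = \tfrac{p}{1+p}\mu(G)$, as desired. This last bound on $\E[D]$ is the hard part. The heuristic reason to expect it is that every doubly-charged edge requires two \emph{distinct} $A$-vertices, namely $a$ and $M^*(b)$, to lie simultaneously in $A'$, contributing a factor of $p^2$; one then has to convert the residual structural constraint that $M^*(b)$ is sampled but unmatched into a quantity proportional to $\mu(G)/(1+p)$. I would attempt this by writing $\E[D]$ as a sum over candidate edges, fixing everything in $A'$ except the inclusion of $M^*(b)$, and averaging over that last independent coin to isolate one factor of $p$; the remaining factor would then be bounded via a coupling argument comparing greedy's execution with and without $M^*(b) \in A'$. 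The main obstacle is that the event $(a,b) \in M$ is not monotone in the addition of $M^*(b)$ to $A'$ (greedy can match $(a,b)$ in one case but not the other depending on $\pi$), so the coupling has to combine the independence of individual sampling events with a careful global accounting of the resulting edge substitutions inside greedy's execution.
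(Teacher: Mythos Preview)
The paper does not give its own proof of this proposition; it is quoted verbatim from \cite[Theorem~3]{KonradMM12} and used as a black box. So there is nothing in the paper to compare your argument against.

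Turning to the proposal itself: the charging argument is set up correctly and the inequality $|S| \le |M| + D$ is valid. The gap is exactly where you flag it. You need $\E[D] \le \tfrac{p^2}{1+p}\,\mu(G)$ (equivalently, $\E[D]\le p\,\E|M|$, which rearranges to $(1+p)\E|M|\ge p\mu$), but you only offer a heuristic. The difficulty is real: for a doubly-charged edge $(a,b)\in M$ you want to ``peel off'' one factor of $p$ by resampling the coin of $c:=M^\star(b)$, yet the very event $(a,b)\in M$ depends on whether $c\in A'$. Greedy can match $(a,b)$ in one world and not in the other, so the naive conditioning does not isolate an independent Bernoulli$(p)$ factor. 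Your coupling sketch recognises this non-monotonicity but does not say how the global accounting of edge substitutions actually closes; without that, the proof is incomplete.

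Two concrete observations that may help if you want to finish this route. First, the target inequality is equivalent to $\E|S_2|\le p\,\E|M|$ where $S_2=\{a\in A^\star\cap A': a\notin V(M)\}$, since $D\le |S_2|$ and $|S|-|M|\le |S_2|$; this reformulation avoids the extra condition $a\in A^\star$ on the $M$-edge and may be easier to couple. Second, the map $A'\mapsto (\text{minimum maximal matching size of }G[A'\cup B])$ is monotone under adding vertices (removing one vertex from $A'$ drops at most one $M$-edge and can be repaired by at most one new edge), which gives you a clean one-vertex coupling to work with. Neither observation is a proof, but they narrow where the missing idea has to go.
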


We are now ready to analyze the approximation ratio. The following lemma gives a lower bound on the expected value of the estimate $\widetilde{\mu}'$.

\begin{lemma}\label{lem:mu-lb}
    For \cref{alg:static}, it holds that $\E[\widetilde{\mu}'] \geq (2-\sqrt{2}) \mu(G)$.
\end{lemma}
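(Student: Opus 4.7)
The plan is to decompose $H$ along the bipartition of $G$ into two vertex-disjoint pieces, apply \cref{prop:greedy-random-vertex} to each piece, and lower-bound each piece's maximum matching via an augmenting-path argument. Write $G=(A,B,E)$ and let $A_M, B_M$ be the $A$- and $B$-endpoints of $M$, with $U_A := A\setminus A_M$ and $U_B := B\setminus B_M$. Because $M$ is maximal, $U$ is an independent set, so every edge of $H=G[V',U]$ joins $V'$ to $U$; combined with the bipartiteness of $G$, this forces $H$ to be the vertex-disjoint union of $H_A := G[V'\cap B,\, U_A]$ and $H_B := G[V'\cap A,\, U_B]$. Consequently $|\GMM{H,\pi}|=|\GMM{H_A,\pi_A}|+|\GMM{H_B,\pi_B}|$, and a uniform $\pi$ induces uniform $\pi_A,\pi_B$ on the two pieces.

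Next, I would apply \cref{prop:greedy-random-vertex} to each piece. Since edges of $M$ are sampled into $M'$ independently with probability $p$, the set $V'\cap B=\{b_i:(a_i,b_i)\in M'\}$ is exactly an independent $p$-sample of $B_M$, so viewing $H_A$ as the induced subgraph of the base bipartite graph $G_A := G[B_M,\,U_A]$ after $p$-sampling one side, \cref{prop:greedy-random-vertex} gives $\E[|\GMM{H_A,\pi_A}|]\geq \tfrac{p}{1+p}\,\mu(G_A)$. The analogous inequality holds for $H_B$ against $G_B := G[A_M,\,U_B]$ (the correlation between $V'\cap A$ and $V'\cap B$ is irrelevant because $H_A$ and $H_B$ are analyzed in isolation).

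To lower-bound $\mu(G_A)$ and $\mu(G_B)$, I would argue via augmenting paths. Let $M^\star$ be a maximum matching of $G$; then $\mu(G)-|M|$ equals the number of augmenting paths in $M\oplus M^\star$. Since $M$ is maximal and $G$ is bipartite, every augmenting path has length at least $3$ and alternates sides, so each such path has its \emph{first} $M^\star$-edge between $U_A$ and $B_M$ and its \emph{last} $M^\star$-edge between $A_M$ and $U_B$. The first edges across the $\mu(G)-|M|$ augmenting paths are pairwise vertex-disjoint, hence form a matching in $G_A$; symmetrically for $G_B$. This yields $\mu(G_A),\,\mu(G_B)\geq \mu(G)-|M|$.

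Putting the pieces together gives $\E[g]\geq \tfrac{2p}{1+p}\bigl(\mu(G)-|M|\bigr)$. Using $\widetilde{\mu}'\geq |M|+g-|M'|$ and $\E[|M'|]=p|M|$, this yields
\[
\E[\widetilde{\mu}']\;\geq\;(1-p)\,|M|\;+\;\tfrac{2p}{1+p}\bigl(\mu(G)-|M|\bigr).
\]
The choice $p=\sqrt{2}-1$ is exactly what makes $1-p=\tfrac{2p}{1+p}=2-\sqrt{2}$, so the right-hand side collapses to $(2-\sqrt{2})\mu(G)$ uniformly in $|M|\in[\mu(G)/2,\mu(G)]$; indeed, balancing these two coefficients is how one discovers the optimal $p$. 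The step I expect to need the most care is the augmenting-path identification: one must verify that the ``first'' $M^\star$-edges across distinct augmenting paths are truly vertex-disjoint and land on the claimed bipartite side, which is precisely where bipartiteness is essential—odd alternating structures in non-bipartite graphs would break this clean two-sided decomposition and force a different (and weaker) analysis.
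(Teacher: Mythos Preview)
Your proposal is correct and follows essentially the same approach as the paper's proof: decompose $H$ along the bipartition into two vertex-disjoint pieces, lower-bound the maximum matching of each base graph by $\mu(G)-|M|$ via the endpoint edges of augmenting paths in $M\oplus M^\star$, apply \cref{prop:greedy-random-vertex} to each piece (using that the sampled side is an independent $p$-sample within one vertex part), and then combine with $\E[|M'|]=p|M|$ and the choice $p=\sqrt{2}-1$. The paper's argument is the same up to notation (its $F_L,F_R,H_L,H_R$ are your $G_B,G_A,H_B,H_A$), and it likewise notes that the correlation between the two sampled sides is harmless because the two pieces are analyzed separately and combined by linearity of expectation.
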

\begin{myproof}
    Let $L$, $R$ be the two vertex parts in $G$. Define the following bipartite induced subgraphs of $G$:
    $$
        F_L := G[V(M) \cap L, U \cap R], \qquad F_R := G[V(M) \cap R, U \cap L],
    $$
    $$
        H_L := G[V(M') \cap L, U \cap R], \qquad H_R := G[V(M') \cap R, U \cap L].
    $$
    
    Let us fix an arbitrary maximum matching $M^\star$ of $G$. Since $M^\star$ is a maximum matching of $G$, we have exactly $|M^\star| - |M| = \mu(G) - |M|$ augmenting paths for $M$ in $M^\star \oplus M$. Let $L_1$ be the number of length one augmenting paths in $M^\star \oplus M$ for $M$, noting that $L_1 = 0$ here since $M$ is maximal.\footnote{The reason that we define $L_1$ is that later when we switch to adaptive adversaries, $M$ will not be maximal.} Note that every augmenting path of length at least three in $M^\star \oplus M$ has one of its endpoint edges in $F_L$ and the other in $F_R$. Moreover, these endpoint edges form a matching as they all belong to $M^\star$. Hence,
    \begin{equation}\label{eq:mtlr-1290}
        \mu(F_L) \geq \mu(G) - |M| - L_1, \qquad\qquad \mu(F_R) \geq \mu(G) - |M| - L_1.
    \end{equation}
    
    Moreover, note that since $M'$ is a random subsample of $M$, then each vertex in $V(M) \cap L$ belongs to $V(M') \cap L$ independently from the vertices in $L$ (but not $R$) with probability $p$. As a result, $H_L$ is an induced subgraph of $F_L$ which includes all the vertices in one part, and $p$ fraction of the vertices in the other part independently. Applying \cref{prop:greedy-random-vertex}, we thus get that
    \begin{equation*}
        \E_{M'}[|\GMM{H_L, \pi}|] \geq \frac{p}{1+p} \cdot \mu(F_L) \stackrel{(\ref{eq:mtlr-1290})}{\geq} \frac{p}{1+p} \cdot (\mu(G) - |M| - L_1).
    \end{equation*}
    With essentially the same proof, we get the same lower bound for $\E_{M'}[|\GMM{H_R, \pi}|]$.
    Since $G$ is bipartite, $H_L$ and $H_R$ are vertex disjoint. Combined with $H = H_L \cup H_R$ this implies that
    $$
        \E_{M'}[|\GMM{H, \pi}|] = \E_{M'}[|\GMM{H_L, \pi}|] + \E_{M'}[|\GMM{H_R, \pi}|] \stackrel{}{\geq} \frac{2p}{1+p} (\mu(G) - |M| - L_1).
    $$
    This in turn implies that 
    \begin{equation}\label{eq:mtlrc-31999}
        \E_{M'}[g] = \E_{M',\pi}[|\GMM{H, \pi}|] \geq \frac{2p}{1+p}(\mu(G)-|M|-L_1).
    \end{equation} 
    Taking expectation over $M'$ in the output $\widetilde{\mu}'$ of \cref{alg:static}, we get
    \begin{flalign}
        \nonumber \E_{M'}[\widetilde{\mu}'] &= \E_{M'}[|M| + \max\{0, g - |M'|\}] = |M| + \max\{0, \E_{M'}[g] - p|M|\} \\
        \nonumber &\geq (1-p)|M| + \frac{2p}{1+p} (\mu(G)-|M|-L_1) \tag{By (\ref{eq:mtlrc-31999}).}\\
        &= \left(1 - p - \frac{2p}{1+p} \right)|M| + \frac{2p}{1+p} \mu(G) - \frac{2p}{1+p} L_1.\label{eq:hgcsii-13289}
    \end{flalign}
    Since $p = \sqrt{2} - 1$, we get $1-p - \frac{2p}{1+p} = 0$ and $\frac{2p}{1+p} = 2 - \sqrt{2}$. Combined with $L_1 = 0$, this implies 
    $$
        \E_{M'}[\widetilde{\mu}'] \geq (2-\sqrt{2}) \mu(G).\qedhere
    $$
\end{myproof}

Next, we show that $\widetilde{\mu}'$ does not over-estimate $\mu(G)$.

\begin{lemma}\label{lem:mu-ub}
    For \cref{alg:static}, it holds with probability 1 that $\widetilde{\mu}' \leq \mu(G)$.
\end{lemma}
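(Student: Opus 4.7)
The plan is to show that $\widetilde{\mu}'$ can be realized as the size of an actual matching in $G$, hence cannot exceed $\mu(G)$. The output is $\widetilde{\mu}' = |M| + \max\{0, g - |M'|\}$. In the case $g \leq |M'|$, we have $\widetilde{\mu}' = |M|$, and since $M$ itself is a matching in $G$, the bound $\widetilde{\mu}' \leq \mu(G)$ is immediate. So the interesting case is $g > |M'|$, where $\widetilde{\mu}' = (|M| - |M'|) + g$.

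For this case, my plan is to exhibit, for every permutation $\pi$, an actual matching in $G$ of size $(|M| - |M'|) + |\GMM{H,\pi}|$, and then take expectation over $\pi$. The matching is the disjoint union $(M \setminus M') \cup \GMM{H,\pi}$. The key observation, which I would verify explicitly, is that these two edge sets are vertex-disjoint: the edges of $M \setminus M'$ only touch vertices in $V(M) \setminus V'$, while the edges of $\GMM{H,\pi} \subseteq H = G[V', U]$ only touch vertices in $V' \cup U$, and $U = V \setminus V(M)$ is disjoint from $V(M)$. Consequently, $(M \setminus M') \cup \GMM{H,\pi}$ is a valid matching in $G$, so
\[
    (|M| - |M'|) + |\GMM{H,\pi}| \leq \mu(G).
\]

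Taking expectation over $\pi$ (conditional on $M'$, which $g$ is defined with respect to) preserves the inequality and gives $(|M|-|M'|) + g \leq \mu(G)$, which is exactly $\widetilde{\mu}' \leq \mu(G)$ in the second case. Combining the two cases yields the claim, and the argument holds pointwise for every realization of $M'$, giving the ``with probability $1$'' conclusion. I do not foresee any real obstacle here: the entire content of the proof is the vertex-disjointness observation between $M \setminus M'$ and the edges picked inside $H$, which is built into the definitions of $V'$ and $U$.
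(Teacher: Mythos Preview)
Your proof is correct, and it is actually a cleaner route than the one the paper takes. The paper fixes an arbitrary matching $S$ in $H$, lets $M''$ be the edges of $M'$ whose \emph{both} endpoints are matched by $S$, argues that $M \oplus S$ contains at least $|M''|$ vertex-disjoint length-three augmenting paths for $M$ (giving $\mu(G) \geq |M| + |M''|$), and then separately proves the counting bound $|M''| \geq |S| - |M'|$. Combining these and taking $S$ to be a maximum matching of $H$ yields $\mu(G) \geq |M| + g - |M'|$.

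You bypass both the augmenting-path argument and the counting step by observing directly that $(M \setminus M') \cup \GMM{H,\pi}$ is itself a matching in $G$, since the vertex sets $V(M)\setminus V'$ and $V' \cup U$ are disjoint. This gives the same inequality $|M| - |M'| + |\GMM{H,\pi}| \leq \mu(G)$ immediately, and averaging over $\pi$ recovers $g$. Your argument is more elementary and shorter; the paper's approach has the minor advantage of producing a slightly larger explicit matching ($|M| + |M''|$ rather than $|M| - |M'| + |S|$), but for the purposes of this lemma the two are equivalent.
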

\begin{myproof}
    Take an arbitrary matching $S$ in graph $H$ of \cref{alg:static}. Let $M''$ be the subset of edges in $M'$ whose both endpoints are matched by $S$.  By definition of $H$, every edge in $S$ has one endpoint in $V' = V(M')$ and one endpoint in $U = V \setminus V(M)$. This means that $M \oplus S$ includes at least $|M''|$ length three augmenting paths for $M$, and so
    \begin{equation}\label{eq:mlch-3123000}
    \mu(G) \geq |M| + |M''|.
    \end{equation}
    The fact that any edge in $S$ has exactly one vertex in $V(M')$ implies that the number of vertices in $V(M')$ unmatched by $S$ is at most $|V(M')| - |S| = 2|M'| - |S|$. As such, there are at least $|M'| - (2|M'| - |S|) = |S| - |M'|$ edges in $M'$ whose both endpoints are matched by $S$. Hence,  $|M''| \geq |S| - |M'|$. Plugging this into \cref{eq:mlch-3123000}, we get that 
    \begin{equation}\label{eq:mlc-nrl00288}
        \mu(G) \geq |M| + |S| - |M'|.
    \end{equation}
    Now instead of an arbitrary matching, take $S$ to be a maximum matching of $H$. Note, in particular, that $|S| \geq \E_\pi[|\GMM{H, \pi}| \mid M'] = g$. Plugging this into \cref{eq:mlc-nrl00288}, and noting also that $\mu(G) \geq |M|$ since $M$ is a matching in $G$, we get that
    \begin{equation*}
        \mu(G) \geq \max\{|M|, |M| + g - |M'|\} = |M| + \max\{0, g - |M'|\} = \widetilde{\mu}'.\qedhere
    \end{equation*}
\end{myproof}

We are now ready to complete the proof of \cref{lem:A-for-bipartite}, which as discussed also proves \cref{thm:main} for bipartite graphs.

\begin{proof}[Proof of \cref{lem:A-for-bipartite}]
    The data structures that we store, as discussed, take only $(\poly\log n)$ worst-case time to maintain. When the algorithm is queried, we return the output $\widetilde{\mu}''$ of \cref{lem:implementation}. It takes $\widetilde{O}(n/\epsilon^3)$ time to produce this by \cref{lem:implementation}, which is the desired query time of \cref{lem:A-for-bipartite}. Moreover, for the approximation ratio, we have
    $$
        (2-\sqrt{2}) \mu(G) - \epsilon n
        \stackrel{\text{\cref{lem:mu-lb}}}{\leq}
        \E[\widetilde{\mu}'] - \epsilon n
        \stackrel{\text{\cref{lem:implementation}}}{\leq} 
        \E[\widetilde{\mu}''] \stackrel{\text{\cref{lem:implementation}}}{\leq} 
        \E[\widetilde{\mu}']
        \stackrel{\text{\cref{lem:mu-ub}}}{\leq}
        \mu(G).
    $$
   This completes the proof of \cref{lem:A-for-bipartite}.
\end{proof}

\paragraph{Adaptive Adversaries:} The only part of the algorithm discussed above that requires the oblivious adversary assumption is the maintanence of maximal matching $M$. While it is not known whether a maximal matching can be maintained in $(\poly\log n)$ time against adaptive adversaries, there are algorithms for maintaining a (non-maximal) $(1/2-\epsilon)$-approximate matching against adaptive adversaries in $\poly\log n$ time for any fixed $\epsilon > 0$ \cite{Wajc-STOC20}. Let $M$ be one such matching maintained. Observe that in our analysis, we assumed that $L_1 = 0$ which no longer holds for non-maximal $M$. But note that $L_1$ being large is not actually that bad as we can simply run the sublinear algorithm of \cite{behnezhad2021} on $G[V \setminus V(M)]$ to get a half-approximation of $L_1$, obtaining a matching of size at least $|M| + |L_1|/2$. Returning the larger of this estimate and the output of \cref{alg:static}, from \cref{eq:hgcsii-13289}, we get an estimator of size at least
$$
(1-\epsilon)\max\left\{ \frac{1}{2}\mu(G) + \frac{1}{2}|L_1|, \left(1 - p - \frac{2p}{1+p} \right)|M| + \frac{2p}{1+p} \mu(G) - \frac{2p}{1+p} L_1 \right\}.
$$
Setting $p=.3$, this is minimized for $L_1 = .084\mu(G)$, and the end result is a $.542$-approximation.




\begin{remark}
We note that the algorithm we employed in this section uses the graph's bipartiteness in a crucial way. In particular, our definitions of graphs $F_L$ and $F_R$, and the fact that they are vertex disjoint, crucially depends on the graph being bipartite. We later show in \cref{sec:general} how one can also beat 1/2-approximation for general graphs, albeit with a smaller improvement. Before that, we continue to focus on bipartite graphs, and prove \cref{thm:2/3} for them.
\end{remark}
\section{Beating Two-Thirds for Bipartite Graphs}\label{sec:twothird-bipartite}

In this section, we prove \cref{thm:2/3}. Our plan, similar to \cref{sec:bipartite}, is to turn a semi-dynamic algorithm into a fully-dynamic one. However, instead of \cref{lem:A-to-B}, which blows up the update-time by a $\poly\log n$ factor, we use a more refined variant of it that doesn't lose the $\poly\log n$ factor. We can do this in this section because our semi-dynamic algorithm will actually guarantee a multiplicative approximation instead of a multiplicative-additive one.

Let us now state the guarantee of our semi-dynamic algorithm in this section.

\begin{lemma}\label{lem:A-for-twothird-bipartite}
    For any $n$-vertex fully dynamic bipartite graph $G$ of maximum degree at most $\Delta$, there is a (randomized) data structure $\mc{A}$ that takes $O(\sqrt{\Delta})$ worst-case update-time, and upon being queried takes $O(n\sqrt{\Delta}) + \widetilde{O}(n)$ time to produce a number $\widetilde{\mu}$ such that for some absolute constant $\delta_0 > 1.8 \times 10^{-6}$, it holds w.h.p. that
    $
        (\frac{2}{3} + \delta_0) \cdot \mu(G) \leq \widetilde{\mu} \leq \mu(G).
    $ 
\end{lemma}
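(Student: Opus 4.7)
Following the framework of \cref{sec:bipartite}, the plan is to maintain a sparsifier of $G$ in the background and, upon a query, feed it into the sublinear-time estimator of \cref{prop:sublinear}. The sparsifier here will be an edge-degree constrained subgraph (EDCS) $H$ of $G$ with parameters $(\beta,(1-\lambda)\beta)$ for suitable constants $\beta$ and $\lambda$ to be chosen. Using a known worst-case EDCS maintenance routine (in the style of Bernstein--Stein, Bhattacharya--Grandoni--Wajc, and Grandoni--Schwiegelshohn--Solomon--Uniyal), one can maintain such an $H$ in $O(\sqrt{\Delta})$ worst-case update time. The resulting $H$ has only $O(n\beta)=O(n)$ edges and satisfies $\mu(H)\ge\bigl(\tfrac{2}{3}-O(\lambda)\bigr)\mu(G)$.

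Upon a query, the algorithm does three things: (i) compute a maximum matching $M$ of $H$ by Hopcroft--Karp in $O(|E(H)|\sqrt{n})=O(n\sqrt{n})$ time, which fits the $O(n\sqrt{\Delta})$ budget after the standard reduction to $\mu(G)=\Omega(n)$ (so that $\Delta=\Omega(n)$ after vertex sparsification); (ii) invoke \cref{prop:sublinear} on the induced subgraph $G[V\setminus V(M)]$ via adjacency-matrix access --- which can be provided in $O(1)$ per query from the stored adjacency matrix of $G$ together with $V(M)$ --- producing in $\widetilde O(n)$ time an $\epsilon n$-additive estimate $\widetilde g$ of $\E_\pi\bigl[|\GMM{G[V\setminus V(M)],\pi}|\bigr]$; and (iii) return $\widetilde\mu := |M|+\widetilde g$. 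The upper bound $\widetilde\mu\le\mu(G)$ is immediate: any matching in $G[V\setminus V(M)]$ is vertex-disjoint from $M$, so $|M|+\mu(G[V\setminus V(M)])\le\mu(G)$, and \cref{prop:sublinear} guarantees $\widetilde g\le\mu(G[V\setminus V(M)])$.

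For the lower bound I would split into two cases. In \emph{Case 1}, $|M|=\mu(H)\ge\bigl(\tfrac{2}{3}+2\delta_0\bigr)\mu(G)$ and then $\widetilde\mu\ge|M|$ already beats $2/3$. In \emph{Case 2}, $\mu(H)<\bigl(\tfrac{2}{3}+2\delta_0\bigr)\mu(G)$, i.e.\ the EDCS is near-tight. Here I invoke the new EDCS characterization from \cref{sec:EDC-char-statement}: in the near-tight regime, the maximum matching $M$ of $H$ leaves an $\Omega(1)$-approximate matching of $G$ inside $G[V\setminus V(M)]$, so $\mu(G[V\setminus V(M)])\ge c\,\mu(G)$ for some absolute constant $c>0$. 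Since a random-order greedy maximal matching is in expectation a $1/2$-approximation of the maximum matching, we obtain $\widetilde g\ge (c/2)\,\mu(G)-\epsilon n$, whence $\widetilde\mu\ge\bigl(\tfrac{2}{3}-O(\lambda)\bigr)\mu(G)+(c/2)\mu(G)-\epsilon n$, which beats $(2/3)\mu(G)$ by a fixed positive constant once $\lambda,\epsilon,\delta_0$ are tuned appropriately.

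The main obstacle is the new structural characterization of tight EDCS itself: showing that whenever $\mu(H)$ is close to $(2/3)\mu(G)$, the maximum matching of $H$ necessarily leaves a large matching of $G$ accessible in $G[V\setminus V(M)]$ --- equivalently, a tight EDCS cannot be ``greedily maximal'' against $G$. The proof of this structural lemma must combine the defining degree-sum constraints of an EDCS with a careful analysis of the augmenting-path structure between $M$ and any maximum matching of $G$. The remaining pieces (Hopcroft--Karp on $H$, the call to \cref{prop:sublinear}, and the case-split bookkeeping) are routine; the small numerical value $\delta_0>1.8\times 10^{-6}$ reflects the delicate balancing of the EDCS looseness $\lambda$, the Case-1/Case-2 threshold, and the sublinear additive error.
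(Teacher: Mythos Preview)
Your central structural claim---that in the near-tight regime the maximum matching $M$ of $H$ leaves an $\Omega(1)$-fraction matching of $G$ inside $G[V\setminus V(M)]$---is false, and this is precisely why the paper's algorithm is more involved than yours. The tight EDCS instance discussed in \cref{sec:EDC-char-statement} admits a maximum matching of $H$ (the blue matching $M_L$ in the figure, lying entirely among vertices of $H$-degree close to $\beta/2$) that is \emph{maximal} for $G$: only augmenting paths of length at least five remain, so $\mu(G[V\setminus V(M_L)])=0$. If Hopcroft--Karp happens to return $M_L$, your estimate is $|M_L|=\tfrac{2}{3}\mu(G)$ and you do not beat $2/3$. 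The actual characterization (\cref{lem:EDCS-char}) is therefore \emph{not} about arbitrary (maximum) matchings of $H$. It first partitions vertices by $H$-degree into $V_{\mathsf{mid}}=\{v:\deg_H(v)\in[.4\beta,.6\beta]\}$ and $V_{\mathsf{low}}=\{v:\deg_H(v)\le .2\beta\}$, sets $H':=H[V_{\mathsf{low}},V_{\mathsf{mid}}]$, and shows that in the near-tight regime (P1) $\mu(H')\ge(\tfrac{2}{3}-O(\sqrt{\delta}))\mu(G)$ and (P2) \emph{every} matching $M$ of $H'$ satisfies $\mu(G[V_{\mathsf{mid}}\setminus V(M)])\ge(\tfrac{1}{3}-O(\delta))\mu(G)$. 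Property (P2) works for $H'$-matchings because each $H'$-edge has exactly one endpoint in $V_{\mathsf{mid}}$, so an $H'$-matching cannot cover both endpoints of the $M^\star$-edges missing from $H$ (whose endpoints lie in $V_{\mathsf{mid}}$ in the tight case); a matching inside $H[V_{\mathsf{mid}}]$ can. Accordingly, the query algorithm (\cref{alg:query-twothirds-bipartite}) computes a near-maximum matching $M_{H'}$ of $H'$, runs the sublinear estimator on $G[V_{\mathsf{mid}}\setminus V(M_{H'})]$, and outputs $\max\{|M_H|,\,|M_{H'}|+g\}$.

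Two secondary issues also need fixing. Running \cref{prop:sublinear} on all of $V\setminus V(M)$ incurs additive error $\epsilon n$, which is not multiplicative when $\mu(G)\ll n$; the paper instead runs it on the $|V_{\mathsf{mid}}|$-vertex graph and uses (P3) $|V_{\mathsf{mid}}|<8\mu(G)$, so the error is $O(\epsilon\,\mu(G))$ and the lemma's purely multiplicative guarantee holds without any vertex sparsification (your aside that sparsification yields $\Delta=\Omega(n)$ is in any case incorrect). And exact Hopcroft--Karp on $H$ costs $O(n\beta\sqrt{n})$, exceeding the stated $O(n\sqrt{\Delta})$ budget; the paper only needs a $(1-\epsilon)$-approximate matching, computable in $O(n\beta/\epsilon)=O(n\sqrt{\Delta})$ time.
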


\cref{lem:A-for-twothird-bipartite} suffices to prove \cref{thm:2/3}:

\begin{proof}[Proof of \cref{thm:2/3}]
    Let $\epsilon$ be small enough that $\delta_0 - \epsilon > 10^{-6}$, where $\delta_0 > 1.8 \times 10^{-6}$ is as in \cref{lem:A-for-twothird-bipartite}. We run the semi-dynamic data structure $\mc{A}$ of \cref{lem:A-for-twothird-bipartite} in the background, which takes $O(\sqrt{\Delta})$ worst-case update-time against adaptive adversaries. Suppose that we call the oracle of algorithm $\mc{A}$ once to produce the estimate $\widetilde{\mu}$. Observe that within the next $\epsilon \widetilde{\mu}$ updates, the maximum matching size of $G$ changes by at most $\epsilon \widetilde{\mu}$, even against adaptive adversaries. Hence, we can use this oracle in a lazy way: we call it once, return $(1-\epsilon)\widetilde{\mu}$ as the output, do not change the output for $\epsilon \widetilde{\mu}$ updates, then repeat. This way, the \underline{amortized} time spent on the oracle is indeed $\Big(O(n \sqrt{\Delta}) + \widetilde{O}(n)\Big) / \epsilon n = O(\sqrt{\Delta})+\poly\log n$.  This can be turned into a worst-case bound using a well-known `spreading' technique (see \cite{GuptaPeng-FOCS13}). Letting $T$ denote the time spent by the oracle, the idea is to spread its computation over $\Theta(\epsilon \widetilde{\mu})$ updates, each performing $\Theta(T/\epsilon \widetilde{\mu})$ operations of the algorithm. When the computation of the oracle finishes, we change our output and immediately start spreading its next execution. Finally, we note that by losing only a $(1-\epsilon)$ factor in the approximation, one can assume that $\Delta \leq O(\sqrt{m}/\epsilon)$ using a marking algorithm of \cite{Solomon-ITCS18} (see \cite[Section~5]{GrandoniSSU-SOSA22} or \cite[Section~4.7]{BehnezhadK-SODA22} for how this can be used for dynamic algorithms). Thus, overall, the update-time that we get is  $O(\min\{\sqrt{\Delta}, m^{1/4}\} + \poly\log n) = O(\sqrt{n})$ in the worst-case.
    
    The approximation guarantee follows immediately from the high probability and multiplicative guarantee of the semi-dynamic algorithm in \cref{lem:A-for-twothird-bipartite}.
\end{proof}

In order to prove \cref{lem:A-for-twothird-bipartite}, we build on the edge-degree constrained subgraph (EDCS) of \citet*{BernsteinS-ICALP15}. An EDCS can be used to maintain the edges of a $(2/3 - \epsilon)$-approximate maximum matching in $O(\sqrt{n} \poly(1/\epsilon))$ time \cite{BernsteinSteinSODA16}, and this bound is known to be tight for it. We show how to go beyond 2/3-approximation by proving a certain characterization of the tight instances of EDCS, where it only obtains a 2/3-approximation. We first give some background on EDCS in \cref{sec:EDCS-background}, then state our characterization for its tight instances in \cref{sec:EDC-char-statement}, and then use this characterization to prove \cref{lem:A-for-twothird-bipartite}.

\subsection{Background on Edge-Degree Constrained Subgraphs (EDCS)} \label{sec:EDCS-background}
The EDCS is a matching sparsifier introduced by \citet*{BernsteinS-ICALP15}, defined as follows:

\begin{definition}[\cite{BernsteinS-ICALP15}]
For any $\beta > \beta_- \geq 1$, a subgraph $H$ of $G$ is a {\em $(\beta, \beta_-)$-EDCS} of $G$ if
\begin{itemize}[topsep=0pt,itemsep=0pt]
    \item for all edges $(u, v) \in H$, \hspace{0.77cm} $\deg_H(u) + \deg_H(v) \leq \beta$, and
    \item for all edges $(u, v) \in G \setminus H$, \hspace{0.1cm} $\deg_H(u) + \deg_H(v) \geq \beta_-$.
\end{itemize}
\end{definition}

It is known that for any integers $\beta > \beta_- \geq 1$, any graph $G$ has a $(\beta, \beta_-)$-EDCS. The main property of EDCS, first proved in \cite{BernsteinS-ICALP15,BernsteinSteinSODA16} and further refined in \cite{AssadiBernstein19,BehnezhadEDCS}, is that if $\beta \geq 1/\epsilon$ and $\beta_- \geq (1-\epsilon)\beta$, then $\mu(G) \geq (\frac{2}{3} - O(\epsilon))\mu(G)$. Moreover, this guarantee is  tight. That is, the $2/3$ factor cannot be replaced by a larger constant even if $\beta_- = \beta - 1 = \Theta(n)$ \cite{BernsteinS-ICALP15}. We refer interested readers to the paper of \citet*{AssadiBernstein19} for an excellent overview of EDCS and its applications across various models.

Several algorithms are known for maintaining an EDCS in dynamic graphs \cite{BernsteinS-ICALP15,BernsteinSteinSODA16,BehnezhadK-SODA22,GrandoniSSU-SOSA22,Kiss-ITCS22}. Here we state a simple and clean algorithm of \citet*{GrandoniSSU-SOSA22} which is deterministic, works for general graphs, and its update-time bound holds in the worst-case.

\begin{proposition}[\cite{GrandoniSSU-SOSA22}]\label{prop:maintain-EDCS}
    Let $G$ be a fully dynamic graph and let $\Delta$ be a fixed upper bound on its maximum degree. For some $\beta = \Theta(\sqrt{\Delta}\poly(1/\epsilon))$, one can maintain the edges of a $(\beta, (1-\epsilon)\beta)$-EDCS $H$ of a graph $G$, as well as the edges of a $(1-\epsilon)$-approximate maximum matching $M_H$ of $H$ deterministically in worst-case update-time $O(\sqrt{\Delta} \poly(1/\epsilon))$.
\end{proposition}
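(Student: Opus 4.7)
The plan is to maintain the EDCS $H$ through local ``repair'' flips triggered by each update to $G$, and to maintain $M_H$ by periodic recomputation that is spread across updates. First, I would adopt the natural repair rule used in \cite{BernsteinS-ICALP15,BernsteinSteinSODA16,GrandoniSSU-SOSA22}: whenever an edge $(u,v) \in G \setminus H$ has $\deg_H(u) + \deg_H(v) < \beta_-$, insert it into $H$; whenever an edge $(u,v) \in H$ has $\deg_H(u) + \deg_H(v) > \beta$, remove it. Each flip changes two endpoint degrees by $\pm 1$ and can trigger further flips, so correctness of the invariants is immediate but the efficiency requires a careful accounting.

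The main obstacle is bounding the amortized number of flips per adversarial update to $G$. I would use a convex potential of the form $\Phi(H) = \sum_{v \in V} g(\deg_H(v))$ with $g$ chosen so that each valid repair flip strictly decreases $\Phi$ by an amount that dominates the $O(\beta)$ increase caused by a single adversarial edge update, while the slack $\beta - \beta_- = \epsilon \beta$ prevents a flip from immediately re-violating the opposite constraint. Taking $\beta = \Theta(\sqrt{\Delta}\poly(1/\epsilon))$ large enough, this yields an amortized bound of $\poly(1/\epsilon)$ flips per update to $G$, and each flip is implementable in $O(\beta) = O(\sqrt{\Delta}\poly(1/\epsilon))$ time by scanning the incident edges of the affected vertices through standard adjacency structures.

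To maintain $M_H$ on top of the EDCS, I would recompute a $(1-\epsilon)$-approximate maximum matching of $H$ from scratch using a static algorithm (e.g., running Hopcroft--Karp for $O(1/\epsilon)$ phases, which suffices because eliminating all augmenting paths of length $O(1/\epsilon)$ yields a $(1-\epsilon)$-approximation), repeating this every $\Theta(\epsilon n)$ updates. Since $H$ has at most $O(n\beta)$ edges, each recomputation takes $O(n\beta/\epsilon)$ time, and amortizing over the $\Theta(\epsilon n)$ intervening updates gives $O(\beta / \epsilon^2) = O(\sqrt{\Delta}\poly(1/\epsilon))$ per update. Both amortized bounds are then converted to worst-case bounds via the spreading trick of \cite{GuptaPeng-FOCS13}: because the EDCS constraints are slack by $\epsilon \beta$ and $M_H$ is only required to be $(1-\epsilon)$-approximate, scheduling the pending repair and recomputation work evenly across $\Theta(\epsilon n)$ future updates preserves both invariants. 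All components are deterministic, yielding the claimed worst-case update time of $O(\sqrt{\Delta}\poly(1/\epsilon))$.
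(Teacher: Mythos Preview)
The paper does not prove this proposition; it is quoted from \cite{GrandoniSSU-SOSA22} and used as a black box, so there is no in-paper argument to compare your sketch against. That said, your outline of the EDCS maintenance via local repair flips bounded by a potential is the standard picture and is broadly consistent with the cited literature.

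There is, however, a genuine gap in your maintenance of $M_H$. You propose recomputing a $(1-\epsilon)$-approximate matching of $H$ every $\Theta(\epsilon n)$ updates to $G$. But the proposition makes no assumption that $\mu(H)=\Omega(n)$. Since each update to $G$ triggers $\poly(1/\epsilon)$ edge changes in $H$, after $\Theta(\epsilon n)$ updates the value $\mu(H)$ can shift by $\Theta(\epsilon n\cdot\poly(1/\epsilon))$, which may dwarf $\mu(H)$ itself; the stale matching then need not be $(1-\epsilon)$-approximate. The correct period is $\Theta(\epsilon\,\mu(H))$ changes to $H$, and to keep the amortized cost at $O(\beta\,\poly(1/\epsilon))$ you must also bound the recomputation cost by $O(\beta\,\mu(H)/\epsilon)$ rather than $O(n\beta/\epsilon)$. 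This is where the bounded degree of $H$ is actually used: since $H$ has maximum degree $\beta$, one has $|E(H)|\le (2\beta-1)\mu(H)$, so Hopcroft--Karp for $O(1/\epsilon)$ phases costs $O(\beta\,\mu(H)/\epsilon)$, and the amortization goes through for all values of $\mu(H)$.

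A smaller point: your claim that each repair flip costs $O(\beta)$ glosses over how you detect violations of the second EDCS constraint after a removal from $H$; naively this requires scanning the $G$-neighbors of the affected endpoints, which is $O(\Delta)$, not $O(\beta)$. Getting the per-update cost down to $O(\sqrt{\Delta}\,\poly(1/\epsilon))$ is exactly why $\beta$ is balanced at $\Theta(\sqrt{\Delta}\,\poly(1/\epsilon))$ and requires a more careful scheduling of these scans (or the periodic-rebuild approach of \cite{GrandoniSSU-SOSA22}) rather than the purely local rule you describe.
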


Combined with the guarantee above on the approximation ratio of EDCS, we get a fully dynamic algorithm that maintains a $(\frac{2}{3}-\epsilon)$-approximate matching of $G$ in $O(\sqrt{\Delta}\poly(1/\epsilon))$ update-time. 

\subsection{A Structural Result on Tight Instances of EDCS}\label{sec:EDC-char-statement}

Recall that our goal in \cref{lem:A-for-twothird-bipartite} is to go beyond 2/3-approximation. Towards this, we prove a structural result on the tight instances of EDCS, and then use it to break 2/3-approximation. \cref{lem:EDCS-char} is this characterization. In essence, it shows that when an EDCS $H$ with the right parameters does not include a strictly larger than $2/3$-approximation, then there is an almost 2/3-approximate matching $M$ in $H$ that is far from being maximal (not maximum) for $G$. More precisely, \cref{lem:EDCS-char} guarantees that there must be a matching of size nearly $\mu(G)/3$ in $G$ among vertices left unmatched by $M$.
    
    \newcommand{\vmid}[0]{\ensuremath{V_{\textsf{mid}}}}
    \newcommand{\vlow}[0]{\ensuremath{V_{\textsf{low}}}}
    
    \begin{lemma}[\textbf{On Tight Instances of EDCS in Bipartite Graphs}]\label{lem:EDCS-char}
        Let $\epsilon \in (0, \frac{1}{120})$, let $H$ be a $(\beta, (1-\epsilon)\beta)$-EDCS of a bipartite graph $G$ for any $\beta > (1-\epsilon)\beta \geq 1$. Let $\vmid$ and $\vlow$ respectively include vertices $v$ such that $\deg_H(v) \in [.4\beta, .6\beta]$ and $\deg_H(v) \in [0, .2\beta]$. Let $H' := H[\vlow, \vmid]$ be the subgraph of $H$ on edges with one endpoint in $\vmid$ and one in $\vlow$. Let $\delta \in (2\epsilon, \frac{1}{60})$ be any parameter. If $\mu(H) \leq (\frac{2}{3} + \delta) \mu(G)$ then the following hold:
        \begin{enumerate}[itemsep=0pt,topsep=5pt,label=$(P\arabic*)$,leftmargin=40pt]
            \item $\mu(H') \geq \big(\frac{2}{3} - 120\sqrt{\delta}\big)\mu(G)$.
            \item For any matching $M$ in $H'$, $\mu(G[\vmid \setminus V(M)]) \geq (\frac{1}{3}-800 \cdot\delta)\mu(G)$.
            \item $|\vmid| < 8\mu(G)$.
        \end{enumerate}
    \end{lemma}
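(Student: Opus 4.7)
The plan is to prove the three claims separately using a common framework of vertex-degree classification with respect to $H$. A convenient starting observation is that any edge $(u,v)\in G$ with $u\in\vlow$ and $v\in\vmid$ satisfies $\deg_H(u)+\deg_H(v)\le 0.2\beta+0.6\beta = 0.8\beta < (1-\epsilon)\beta$ for $\epsilon<1/5$ (and in particular for $\epsilon<1/120$). By the EDCS condition on non-edges, such an edge must lie in $H$, so $G[\vlow,\vmid]=H'$. This lets me prove statements about $H'$ by exhibiting matchings in $G[\vlow,\vmid]$.

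For (P3), a short double counting suffices. Fix a maximum matching $M_H$ of $H$, so $|M_H|\le(\tfrac{2}{3}+\delta)\mu(G)$. Every $v\in\vmid\setminus V(M_H)$ has all its $H$-neighbors inside $V(M_H)$, otherwise $M_H$ could be extended. Counting the $H$-edges from $\vmid\setminus V(M_H)$ to $V(M_H)$ from both sides,
\[
0.4\beta\cdot|\vmid\setminus V(M_H)|\;\le\;\sum_{u\in V(M_H)}\deg_H(u)\;\le\;\beta\cdot|V(M_H)|\;=\;2\beta|M_H|,
\]
so $|\vmid\setminus V(M_H)|\le 5|M_H|$. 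Together with the trivial $|\vmid\cap V(M_H)|\le 2|M_H|$, this gives $|\vmid|\le 7|M_H|<8\mu(G)$ for $\delta<1/60$.

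For (P1), the plan is to fix a maximum matching $M^\star$ of $G$ and classify its edges by the $H$-degrees of their endpoints. Edges of low-mid type already sit inside $H'$ and thus contribute directly to $\mu(H')$. The work is to show that in the near-tight regime $\mu(H)\le(\tfrac{2}{3}+\delta)\mu(G)$, almost all of $V(M^\star)$ lies in $\vlow\cup\vmid$ and almost all $M^\star$-edges are of low-mid type. I would start from the EDCS conditions on individual $M^\star$-edges -- an edge with both endpoints of $H$-degree below $\beta/2$ must lie in $H$, while one with both endpoints of high $H$-degree must lie outside $H$ -- and use them to constrain the possible degree profile on $V(M^\star)$. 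The tightness assumption then restricts how much the profile can deviate from an idealized $\{\vlow,\vmid\}$-bimodal distribution: each deviation translates into a deficit of $|M_H|$ below $\tfrac{2}{3}\mu(G)$, and I expect this deficit to depend \emph{quadratically} on the deviation, which is precisely what yields the $\sqrt{\delta}$ rate in the conclusion.

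For (P2), the same degree-classification picture forces the number of mid-mid $M^\star$-edges to be at least $(\tfrac{1}{3}-O(\delta))\mu(G)$, since these are the ``missing'' matching edges that $\mu(H)$ cannot recover. These edges form a matching inside $G[\vmid]$. For any matching $M\subseteq H'$, we have $|V(M)\cap\vmid|=|M|$, so at most $|M|$ of the mid-mid $M^\star$-edges lose an endpoint. When $|M|$ is moderate this directly yields the desired bound; when $|M|$ approaches $\mu(H')$, I would combine the bound $|\vmid|<8\mu(G)$ from (P3) with the structural denseness of $G[\vmid]$ in the tight regime to construct an alternative matching in $G[\vmid\setminus V(M)]$, for instance via a Hall-type argument. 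The main obstacle throughout is obtaining the sharp $\sqrt{\delta}$ rate in (P1): converting the first-order tightness of $\mu(H)$ into a first-order structural constraint on $M^\star$-edge types via a quadratic accounting is the delicate step. Once (P1) is established, (P3) is routine and (P2) follows with moderate additional effort.
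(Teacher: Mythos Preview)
Your argument for (P3) is correct and in fact simpler than the paper's, which obtains (P3) only as a corollary of a full Hall's-witness decomposition; your maximality-plus-double-counting works in isolation and does not even really need the tightness hypothesis (just $\mu(H)\le\mu(G)$ already gives $|\vmid|\le 7\mu(G)<8\mu(G)$).

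For (P2), however, there is a genuine gap. Your bound ``at most $|M|$ mid--mid $M^\star$-edges lose an endpoint'' is correct but useless here: $|M|$ can be as large as $\mu(H')\approx\tfrac23\mu(G)$ while there are only about $\tfrac13\mu(G)$ mid--mid edges, so the surviving count is negative. The fallback (``structural denseness of $G[\vmid]$'' plus a Hall argument) has no foundation, since nothing in the hypotheses controls $G[\vmid]$ beyond the $M^\star$-edges themselves. What the paper does instead is take a Hall witness $A$ for $H$, set $B=N_H(A)$, $T=\bar A\cup B$, let $\overline{M}^\star$ be the $M^\star$-edges from $A$ to $\bar B$ (none of which can lie in $H$), and put $S=V(\overline{M}^\star)$ and $W=(A\cup\bar B)\setminus S$. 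The structural fact that drives everything is that \emph{every $H$-edge out of $S$ lands in $T$}. Hence if $v\in S\cap\vmid$ is matched by some $M\subseteq H'$ to a vertex $w$, then $w\in T$ and, since one endpoint of each $H'$-edge is in $\vlow$, necessarily $w\in\vlow$; so each wasted $\overline{M}^\star$-edge yields a dedicated witness in $(S\cup T)\cap\vlow$. A variance argument on the EDCS edge-degree constraint then shows $|S\setminus\vmid|,|T\setminus\vmid|=O(\delta)\mu(G)$ and $|W\setminus\vlow|=O(\sqrt\delta)\mu(G)$; the first two give the $O(\delta)$ wasted-edge bound that (P2) requires, and the third gives (P1) because all remaining $M^\star$-edges run between $T$ and $W$. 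Your quadratic-deficit intuition for (P1) is on target and corresponds exactly to this variance lemma, but without the $S,T,W$ partition you have no concrete set on which to run it.
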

    
    The proof of \cref{lem:EDCS-char} is quite involved, so we defer it to \cref{sec:EDCS-bipartite}. It is worth noting that we did not attempt to optimize the constants in \cref{lem:EDCS-char}.
    
    \paragraph{Comparison to a Characterization of \citet*{AssadiBehnezhad21}:} Prior to this work, another characterization of tight instances of EDCS was given in \cite{AssadiBehnezhad21} in the context of random order streaming algorithms. The characterization of \cite{AssadiBehnezhad21} implies existence of a nearly $\frac{2}{3}$-approximate matching in subgraph $H[\vmid]$ when $\mu(H)$ is not strictly larger than $\frac{2}{3}\mu(G)$.  Roughly speaking, \cite{AssadiBehnezhad21} showed that this matching in $H[\vmid]$ can be found early on in the stream, and showed how the rest of the stream could be used to discover many length-five augmenting paths for it and beat $2/3$-approximation for random-order streams. We do not know how to find or estimate the number of length-five augmenting paths efficiently in the dynamic setting. Fortunately, the guarantee of \cref{lem:EDCS-char} helps us avoid them all togeher, and only focus on length-one augmenting paths instead. The following figure illustrates how the two guarantees differ. It is a tight instance of EDCS, where the vertices in $\vmid$ all have degree $\beta/2$ in $H$. The dashed edges are missed from the EDCS while all other edges are present. It can be confirmed that these missed edges alone imply that $\mu(H) \leq \frac{2}{3} \mu(G)$. The blue matching $M_L$ on the left is the matching in $H[\vmid]$ used by \cite{AssadiBehnezhad21}. The green matching $M_R$ on the right is the matching of $H'$ that \cref{lem:EDCS-char} guarantees to exist. While both are of size $\frac{2}{3}\mu(G)$, the key difference is that each edge of $M_R$ has exactly one endpoint in $\vmid$, whereas both endpoints of all edges of $M_L$ are in $\vmid$. Consequently, while $M_L$ is nearly maximal for $G$ and only leaves length-five augmenting paths, $M_R$ is far from being maximal for $G$ and all the dashed edges can be directly added to it. 
    
    \begin{figure}[h]
        {\centering
        \includegraphics[scale=0.45]{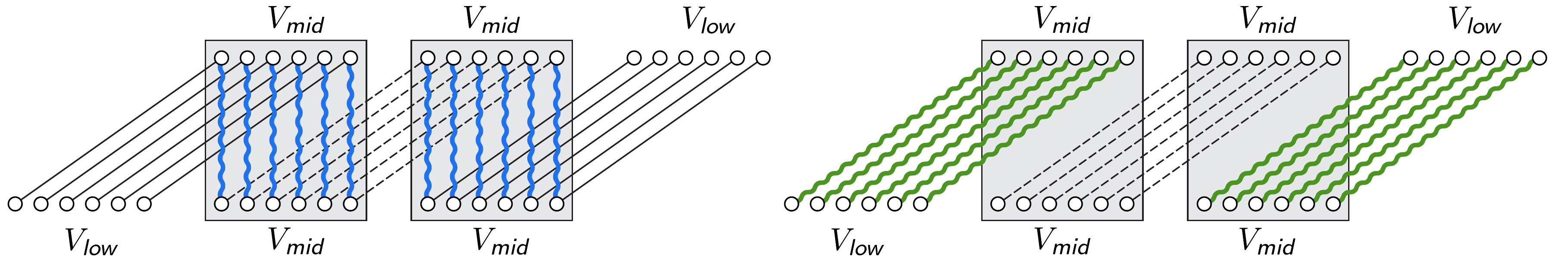}
        }
        {\color{white}{.}}
        \caption{\small Comparison of the matching implied in the work of \citet*{AssadiBehnezhad21} (the blue matching on the left), and the matching guaranteed by \cref{lem:EDCS-char} (the green matching on the right).}
    \end{figure}
    
    \subsection{The Semi-Dynamic Algorithm (Proof of \cref{lem:A-for-twothird-bipartite}) via \cref{lem:EDCS-char}}
    
    In this section, we show how the characterization of \cref{lem:EDCS-char} can be used to prove our semi-dynamic algorithm in \cref{lem:A-for-twothird-bipartite} for beating 2/3-approximation in bipartite graphs.
    
    \smparagraph{Data Structures:} As in \cref{sec:bipartite}, we start by describing the data structures maintained by the semi-dynamic algorithm. We maintain the adjacency matrix of the graph $G$. Additionally, we run \cref{prop:maintain-EDCS} to maintain the edges of a $(\beta, (1-\epsilon)\beta)$-EDCS $H$ of $G$ for $\beta = \Theta(\sqrt{\Delta}\poly(1/\epsilon))$, as well as a $(1-\epsilon)$-approximate maximum matching $M_H$ of $H$ in $O(\sqrt{\Delta}\poly(1/\epsilon))$ worst-case update-time, where we will set $\epsilon$ to be a sufficiently small absolute constant.
    
    It remains to show how the query algorithm works. That is, how we produce the $(\frac{2}{3}+\delta_0)$-approximate estimate $\widetilde{\mu}$ for  $\mu(G)$.
    
    \smparagraph{The Query Algorithm:} When queried, we run (an estimate) of the following algorithm:
    
    \begin{alglist}{}{alg:query-twothirds-bipartite}
    \begin{enumerate}[leftmargin=15pt, itemsep=0pt]
        \item Let $H$ be the $(\beta, (1-\epsilon)\beta)$-EDCS that we maintain in our data structure.
        
        \item Let $\vmid := \{v \mid \deg_H(v) \in [.4\beta, .6\beta]\}$ and let $\vlow := \{ v \mid \deg_H(v) \in [0, .2\beta]\}$.
        
        \item Let $H' := H[\vlow, \vmid]$.
        
        \item Find a $(1-\epsilon)$-approximate maximum matching $M_{H'}$ in $H'$.
        
        \item Let $F := G[\vmid \setminus V(M_{H'})]$. (We will not construct $F$ explicitly.)
        
        \item Let $g := |\GMM{F, \pi}|$ for an arbitrary permutation $\pi$.
        
        \item Return $\widetilde{\mu}' := \max\{|M_H|, |M_{H'}| + g\}$, where $M_H$ is the $(1-\epsilon)$-approximate matching of $H$ that we maintain in our data structures.
    \end{enumerate}
    \end{alglist}
    
    Let us show that the output of \cref{alg:query-twothirds-bipartite} can be estimated efficiently, in the desired time of \cref{lem:A-for-twothird-bipartite}.
    
    \begin{lemma}\label{lem:implementation-23-bp}
        Let $\widetilde{\mu}'$ be as in \cref{alg:query-twothirds-bipartite}. There is an algorithm that takes $O(n \sqrt{\Delta} \poly(1/\epsilon)) + \widetilde{O}(n \poly(1/\epsilon))$ time and returns a number $\widetilde{\mu}''$ such that w.h.p. $(i)$ $\widetilde{\mu}'' \leq \widetilde{\mu}'$, $(ii)$ $\widetilde{\mu}'' \geq |M_H|$, and $(iii)$ $\widetilde{\mu}'' \geq |M_{H'}| + g - \epsilon |\vmid|$.
    \end{lemma}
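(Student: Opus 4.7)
The plan is to execute \cref{alg:query-twothirds-bipartite} essentially verbatim, except that we replace the computation of $g := \E_\pi[|\GMM{F,\pi}|]$ -- which would otherwise require enumerating the $\Omega(n^2)$ potential edges of $F$ -- by the sublinear estimator from \cref{prop:sublinear}, mirroring the trick used in \cref{sec:bipartite}. I interpret the ``arbitrary permutation'' in the algorithm as a uniformly random $\pi$, so that $g = \E_\pi[|\GMM{F,\pi}|]$; this is what the sublinear oracle naturally estimates.

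By \cref{prop:maintain-EDCS} the edge-set of $H$ is stored explicitly, and since the EDCS condition forces $\deg_H(v) \le \beta = \Theta(\sqrt{\Delta}\poly(1/\epsilon))$ for every vertex incident to $H$, we have $|E(H)| = O(n\sqrt{\Delta}\poly(1/\epsilon))$. A single pass over $E(H)$ then lets us compute all the degrees $\deg_H(v)$, identify $\vmid$ and $\vlow$, and extract the edge list of $H' = H[\vlow,\vmid]$, all in $O(n\sqrt{\Delta}\poly(1/\epsilon))$ time. Because $G$ is bipartite so is $H'$, so we next compute a $(1-\epsilon)$-approximate maximum matching $M_{H'}$ of $H'$ in $O(|E(H')|/\epsilon) = O(n\sqrt{\Delta}\poly(1/\epsilon))$ time using the classical Hopcroft--Karp-style algorithm that augments along paths of length at most $O(1/\epsilon)$. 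We then record $V(M_{H'})$ and $S := \vmid \setminus V(M_{H'})$ as bitsets over $V$ in $O(n)$ additional time.

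Next we estimate $g$. We do \emph{not} materialize $F = G[S]$, because it may have $\Omega(n^2)$ edges; instead we provide adjacency-matrix oracle access to $F$ as follows. On a query $(u,v)$, report that $(u,v)$ is an edge of $F$ iff $u,v \in S$ and $(u,v) \in E(G)$. Each query takes $O(1)$ time using the stored adjacency matrix of $G$ together with the bitset for $S$. Viewing $F$ as a graph on its $|S| \le |\vmid|$ active vertices, we invoke \cref{prop:sublinear} on $F$ with error parameter $\epsilon$, which in $\widetilde{O}(|\vmid|/\epsilon^3) = \widetilde{O}(n\poly(1/\epsilon))$ time returns w.h.p. an estimate $\widetilde{g}$ with $g - \epsilon|\vmid| \le \widetilde{g} \le g$.

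Finally, we output $\widetilde{\mu}'' := \max\{|M_H|,\, |M_{H'}| + \widetilde{g}\}$. Property $(ii)$ is immediate; for $(i)$, the upper bound $\widetilde{g} \le g$ yields $\widetilde{\mu}'' \le \max\{|M_H|, |M_{H'}| + g\} = \widetilde{\mu}'$; and for $(iii)$, the lower bound $\widetilde{g} \ge g - \epsilon|\vmid|$ gives $\widetilde{\mu}'' \ge |M_{H'}| + \widetilde{g} \ge |M_{H'}| + g - \epsilon|\vmid|$. Summing the three stages yields the claimed $O(n\sqrt{\Delta}\poly(1/\epsilon)) + \widetilde{O}(n\poly(1/\epsilon))$ running time. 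The only non-routine step is providing constant-time oracle access to $F$ without enumerating it, which falls out of the adjacency-matrix representation of $G$ together with the bitset for $S$; the remaining ingredients (reading off $\vmid, \vlow, H'$ from the maintained EDCS, computing $M_{H'}$ by standard bipartite matching, and invoking \cref{prop:sublinear}) are off-the-shelf.
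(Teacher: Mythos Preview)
Your proof is correct and follows essentially the same route as the paper: explicitly build $\vmid,\vlow,H'$ from the maintained EDCS in $O(n\beta)$ time, run Hopcroft--Karp on $H'$ to obtain $M_{H'}$, simulate adjacency-matrix access to $F$ via the stored adjacency matrix of $G$, invoke \cref{prop:sublinear} on $F$ to get $\widetilde g$, and output $\max\{|M_H|,|M_{H'}|+\widetilde g\}$. Your explicit reading of $g$ as $\E_\pi[|\GMM{F,\pi}|]$ is a harmless clarification of what the sublinear oracle actually estimates and does not deviate from the paper's argument.
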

    \begin{myproof}
        First, note that $H$, $\vmid$, $\vlow$, and $H'$ can all be explicitly constructed in time linear in the size of $H$, which is $O(n \beta)$ (since any $(\beta, \cdot)$-EDCS has maximum degree at most $\beta$). Moreover, a $(1-\epsilon)$-approximate matching of an $m$-edge graph can be found in $O(m/\epsilon)$ time using the algorithm of Hopcroft and Karp \cite{HopcroftK73}. Therefore, since $H'$ is a subgraph of $H$ and thus also has at most $O(n\beta)$ edges, it takes $O(n \beta/\epsilon)$ time to construct matching $M_{H'}$ of $H'$. Finally, instead of constructing $F$ explicitly and computing $g$ for it, we use \cref{prop:sublinear}. Note that since we have adjacency matrix access to graph $G$, we can provide adjacency matrix to its subgraph $F$ as well. Moreover, since $F$ has only $|\vmid|$ vertices by its definition,  \cref{prop:sublinear} takes $O(|\vmid|/\epsilon^3) = \widetilde{O}(n/\epsilon^3)$ time to produce $\widetilde{g}$ such that w.h.p. $\E_\pi|\GMM{F, \pi}| - \epsilon |\vmid| \leq \widetilde{g} \leq \E_\pi|\GMM{F, \pi}|$. Using this instead of $g$ in \cref{alg:query-twothirds-bipartite}, we obtain the estimate $\widetilde{\mu}''$ that satisfies guarantees $(i), (ii), (iii)$ of the lemma. The final running time of the algorithm is $O(n\beta/\epsilon) + \widetilde{O}(n/\epsilon^3)$ which is the desired bound of the lemma given that $\beta = \Theta(\sqrt{\Delta} \poly(1/\epsilon))$.
    \end{myproof}
    
    Next, we focus on the approximation ratio of the output of \cref{alg:query-twothirds-bipartite}. The following lemma essentially lower bounds the output of \cref{lem:implementation-23-bp} by $(\frac{2}{3}+\delta_0)\mu(G)$.
    
    \begin{lemma}\label{lem:23-mu-lb-bp}
        Let $\delta_0 = 1.8 \times 10^{-6}$. For \cref{alg:query-twothirds-bipartite}, at least one of the inequalities $|M_H| \geq (\frac{2}{3} + \delta_0) \mu(G)$ and $|M_{H'}| + g - \epsilon |\vmid| \geq (\frac{2}{3} + \delta_0) \mu(G)$ must hold.
    \end{lemma}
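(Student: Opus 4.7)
The plan is a case analysis on whether the EDCS $H$ itself already contains a noticeably better than $2/3$-approximate matching of $G$, or whether it is a tight instance to which \cref{lem:EDCS-char} applies. Fix a threshold $\delta := \delta_0 + \tfrac{2\epsilon}{3}$, where $\epsilon$ is the (sufficiently small) absolute constant used in \cref{alg:query-twothirds-bipartite}; in particular, $\epsilon$ is chosen small enough that $\delta$ lies inside the range $(2\epsilon, \tfrac{1}{60})$ required by \cref{lem:EDCS-char}. The two cases then correspond precisely to the two quantities that appear in the lemma's conclusion.

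\textbf{Case 1: $\mu(H) \geq (\tfrac{2}{3}+\delta)\,\mu(G)$.} Because the data structure maintains a $(1-\epsilon)$-approximate maximum matching $M_H$ of $H$ via \cref{prop:maintain-EDCS}, we have
\[
  |M_H| \;\geq\; (1-\epsilon)\,\mu(H) \;\geq\; (1-\epsilon)\bigl(\tfrac{2}{3}+\delta\bigr)\mu(G) \;=\; \bigl(\tfrac{2}{3}+\delta - \tfrac{2\epsilon}{3} - \epsilon\delta\bigr)\mu(G).
\]
Substituting $\delta = \delta_0+\tfrac{2\epsilon}{3}$ and using that $\epsilon\delta$ is negligible for small $\epsilon$ yields $|M_H|\geq (\tfrac{2}{3}+\delta_0)\mu(G)$, as required.

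\textbf{Case 2: $\mu(H) < (\tfrac{2}{3}+\delta)\,\mu(G)$.} Here we invoke \cref{lem:EDCS-char} with this value of $\delta$. Since $M_{H'}$ is a $(1-\epsilon)$-approximate maximum matching of $H'$, property $(P1)$ gives
\[
  |M_{H'}| \;\geq\; (1-\epsilon)\,\mu(H') \;\geq\; (1-\epsilon)\bigl(\tfrac{2}{3} - 120\sqrt{\delta}\bigr)\mu(G).
\]
Property $(P2)$, applied to the matching $M_{H'}\subseteq H'$, implies $\mu(F) = \mu\bigl(G[\vmid\setminus V(M_{H'})]\bigr) \geq (\tfrac{1}{3} - 800\delta)\mu(G)$. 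Because $g=|\GMM{F,\pi}|$ is a maximal matching of $F$ it is a $\tfrac{1}{2}$-approximation, so $g \geq \tfrac{1}{2}(\tfrac{1}{3} - 800\delta)\mu(G) = (\tfrac{1}{6} - 400\delta)\mu(G)$. Finally, property $(P3)$ yields $\epsilon|\vmid| < 8\epsilon\,\mu(G)$. Adding these three bounds,
\[
  |M_{H'}| + g - \epsilon|\vmid| \;\geq\; \Bigl[(1-\epsilon)\bigl(\tfrac{2}{3} - 120\sqrt{\delta}\bigr) + \tfrac{1}{6} - 400\delta - 8\epsilon \Bigr]\mu(G) \;=\; \Bigl[\tfrac{5}{6} - 120\sqrt{\delta} - 400\delta - O(\epsilon)\Bigr]\mu(G).
\]
It remains to check that this expression is at least $\tfrac{2}{3}+\delta_0$, i.e., that $120\sqrt{\delta} + 400\delta + \delta_0 + O(\epsilon) \leq \tfrac{1}{6}$. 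For $\delta$ only marginally larger than $\delta_0 = 1.8\times 10^{-6}$, one has $120\sqrt{\delta}\approx 0.161$ and $400\delta \approx 7.2\times 10^{-4}$, so the left-hand side is strictly below $\tfrac{1}{6}\approx 0.1667$, with slack sufficient to absorb the $O(\epsilon)$ terms as long as $\epsilon$ is a small enough absolute constant.

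\textbf{Main obstacle.} The only real difficulty is the numerical bookkeeping. The $\sqrt{\delta}$ loss coming from property $(P1)$ of \cref{lem:EDCS-char} is the dominant term and fixes the order of magnitude of the achievable $\delta_0$ to roughly $\bigl(\tfrac{1/6}{120}\bigr)^2 \approx 2\times 10^{-6}$; the contributions of $400\delta$, of $8\epsilon|\vmid|/\mu(G)$, and of the $(1-\epsilon)$ factor from the approximate matchings $M_H, M_{H'}$ are all small second-order corrections that can be made negligible by choosing $\epsilon$ small. Care is also required to ensure the chosen $\delta$ satisfies the hypothesis $\delta\in(2\epsilon, \tfrac{1}{60})$ of \cref{lem:EDCS-char}, which forces $\epsilon < \tfrac{3}{4}\delta_0$.
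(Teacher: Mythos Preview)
Your approach is essentially identical to the paper's: the same two-case split on whether $\mu(H)$ exceeds $(\tfrac23+\delta)\mu(G)$, the same invocation of \cref{lem:EDCS-char} in the tight case, and the same way of combining $(P1)$, $(P2)$, $(P3)$ with the maximal-matching bound on $g$.

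There is, however, a small numerical slip in Case~1. With your choice $\delta=\delta_0+\tfrac{2\epsilon}{3}$, expanding $(1-\epsilon)(\tfrac23+\delta)$ gives exactly $\tfrac23+\delta_0-\epsilon\delta$, which is strictly \emph{below} $\tfrac23+\delta_0$; you cannot erase the $\epsilon\delta$ term by calling it ``negligible'' because the inequality points the wrong way. The fix is trivial: take $\delta$ with a little more headroom (e.g.\ any fixed $\delta\in(\delta_0,\,1.9\times10^{-6}]$ with $\epsilon<\delta/100$, as the paper does). Since Case~2 already has ample slack (your own computation shows $120\sqrt{\delta}\approx0.161$ against a budget of $\tfrac16\approx0.1667$), enlarging $\delta$ slightly costs nothing there, and Case~1 then closes cleanly.
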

    \begin{myproof}
        Let $\delta = 1.9 \times 10^{-6}$ and suppose that we set $\epsilon < \delta/100$. If $|M_H| \geq (1-\epsilon)(\frac{2}{3} + \delta)\mu(G)$, then 
        $$
            |M_H| \geq \Big(\frac{2}{3} + \delta - \epsilon\Big) \mu(G) \geq \Big(\frac{2}{3} + 0.99\delta\Big) \mu(G) \geq \Big(\frac{2}{3} + \delta_0\Big) \mu(G),
        $$
        which is exactly the first inequality. So let us assume that $|M_H| < (1-\epsilon)(\frac{2}{3} + \delta) \mu(G)$. Given that $|M_H| \geq (1-\epsilon) \mu(H)$ for being a $(1-\epsilon)$-approximate matching, we get that $\mu(H) < (\frac{2}{3}+\delta)\mu(G)$. Plugging this into the characterization of \cref{lem:EDCS-char} for the tight instances of EDCS (noting in particular that $\delta$ and $\epsilon$ satisfy the range constraints), we get:
        \begin{enumerate}[itemsep=0pt,topsep=5pt,label=$(P\arabic*)$,leftmargin=40pt]
            \item $\mu(H') \geq \big(\frac{2}{3} - 120\sqrt{\delta}\big)\mu(G)$. This, in particular, implies that $|M_{H'}| \geq (1-\epsilon)(\frac{2}{3} - 120\sqrt{\delta})\mu(G)$.
            \item For any matching $M$ in $H'$, $\mu(G[\vmid \setminus V(M)]) \geq (\frac{1}{3}-800 \cdot\delta)\mu(G)$. Using $M_{H'}$ for $M$ in this statement, this means that $\mu(F) \geq (\frac{1}{3} - 800\delta)\mu(G)$. Since $g$ is at least half the size of $\mu(F)$ for being the size of a maximal matching, we thus get that $g \geq \frac{1}{2}(\frac{1}{3} - 800 \delta)\mu(G)$.
            \item $|\vmid| \leq 8 \mu(G)$.
        \end{enumerate}
        From this, we can infer the second inequalit as follows:
        \begin{flalign*}
            |M_{H'}| + g - \epsilon |\vmid| &\geq (1-\epsilon)\Big(\frac{2}{3} - 120\sqrt{\delta}\Big)\mu(G) + \frac{1}{2}\Big(\frac{1}{3} - 800 \delta\Big)\mu(G) - 8\epsilon \mu(G) \\
            &\geq (1-\epsilon)\Big(\frac{2}{3} - 120\sqrt{\delta} + \frac{1}{6} - 400 \delta - 8\epsilon \Big)\mu(G)\\
            &\geq (1-\epsilon)\Big(\frac{2}{3} + 1.9 \times 10^{-6} - 8\epsilon \Big) \mu(G) \tag{Since $\delta = 1.9 \times 10^{-6}$.}\\
            &\geq \Big(\frac{2}{3} + \delta_0\Big)\mu(G). \tag{Since $\delta_0 = 1.8 \times 10^{-6}$ and $\epsilon < \delta/100$.}
        \end{flalign*}
        This completes the proof.
    \end{myproof}
    
    Next, we show that the output $\widetilde{\mu}'$ of \cref{alg:query-twothirds-bipartite} does not overestimate the matching size.
    
    \begin{observation}\label{obs:23-mu-ub-bp}
        For \cref{alg:query-twothirds-bipartite}, it holds with probability 1 that $\widetilde{\mu}' \leq \mu(G)$.
    \end{observation}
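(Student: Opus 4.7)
The plan is to verify the two quantities inside the $\max$ in the definition of $\widetilde{\mu}'$ separately, and show each is upper bounded by $\mu(G)$. Since $\widetilde{\mu}' = \max\{|M_H|,\, |M_{H'}|+g\}$, this suffices.

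First I would handle $|M_H|$. By construction, $M_H$ is a matching of $H$, and $H$ is a subgraph of $G$, so $M_H$ is a matching of $G$. Hence $|M_H| \leq \mu(G)$ trivially.

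Next I would handle $|M_{H'}| + g$. Here the key observation is that $M_{H'}$ and the matching counted by $g$ are vertex-disjoint, so their union is itself a matching in $G$. To see this, recall that $M_{H'}$ is a matching in $H' = H[\vlow, \vmid] \subseteq G$, while $g = |\GMM{F,\pi}|$ is the size of a matching in $F = G[\vmid \setminus V(M_{H'})]$. Every vertex appearing in the maximal matching producing $g$ lies in $\vmid \setminus V(M_{H'})$, and in particular lies outside $V(M_{H'})$. Therefore the two matchings share no endpoint and their union is a matching in $G$ of size exactly $|M_{H'}| + g$. It follows that $|M_{H'}| + g \leq \mu(G)$.

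Combining both bounds yields $\widetilde{\mu}' = \max\{|M_H|,\, |M_{H'}|+g\} \leq \mu(G)$ deterministically, which is exactly the claim. There is no real obstacle here; the only step that requires a moment's care is noting that $F$ is explicitly defined on $\vmid \setminus V(M_{H'})$, which enforces the vertex-disjointness needed to concatenate the two matchings. The proof will be a short paragraph with no probabilistic content.
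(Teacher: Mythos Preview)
Your proposal is correct and follows essentially the same argument as the paper: both bound $|M_H|$ trivially via $H\subseteq G$, and both bound $|M_{H'}|+g$ by observing that $\GMM{F,\pi}$ lives on $\vmid\setminus V(M_{H'})$ and is therefore vertex-disjoint from $M_{H'}$, so their union is a matching in $G$.
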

    \begin{myproof}
        \cref{alg:query-twothirds-bipartite} sets $\widetilde{\mu}' = \max\{|M_H|, |M_{H'}| + g\}$. Clearly $|M_H| \leq \mu(G)$ since $M_H$ is a matching of $H \subseteq G$. On the other hand, take the matching $M_F = \GMM{F, \pi}$ and note that $g$ is defined to be $|M_F|$ in \cref{alg:query-twothirds-bipartite}. Since $F = G[\vmid \setminus V(M_{H'})]$, the edges of $M_F$ are vertex disjoint from $M_{H'}$. Hence, $M_F \cup M_{H'}$ is a matching of $G$, and so $|M_F \cup M_{H'}| = |M_{H'}|+g \leq \mu(G)$.
    \end{myproof}
    
    We are now ready to finish the proof of \cref{lem:A-for-twothird-bipartite}.
    
    \begin{proof}[Proof of \cref{lem:A-for-twothird-bipartite}]
        The data structures that we store, as discussed, take $O(\sqrt{\Delta} \poly(1/\epsilon)) = O(\sqrt{\Delta})$ worst-case update time. When the oracle is called, we call the algorithm of \cref{lem:implementation-23-bp} and return its estimate $\widetilde{\mu}''$. Its running time is $O(n\sqrt{n})+\widetilde{O}(n)$ since we set $\epsilon$ to be an absolute constant. This is the desired running time in \cref{lem:A-for-twothird-bipartite}. For the approximation, first note by \cref{lem:implementation-23-bp} and \cref{obs:23-mu-ub-bp}, that we have $\widetilde{\mu}'' \leq \mu(G)$ w.h.p. Moreover, by \cref{lem:23-mu-lb-bp} and \cref{lem:implementation-23-bp}, we have $\widetilde{\mu}'' \geq (\frac{2}{3}+\delta_0)\mu(G)$ w.h.p. This completes the proof.
    \end{proof}

\subsection{Proof of \cref{lem:EDCS-char}; the Characterization for Tight EDCS Instances}\label{sec:EDCS-bipartite}

\newcommand{\missingMstar}[0]{\ensuremath{\overline{M}^\star}}
\newcommand{\Abar}[0]{\ensuremath{\overline{A}}}
\newcommand{\Bbar}[0]{\ensuremath{\overline{B}}}

In this section, we prove \cref{lem:EDCS-char}. Let $L$ and $R$ with $|L| = |R| = n$ be the two vertex parts for graph $G$. We use the following standard extension of the Hall's theorem.

\begin{proposition}[Extended Hall's Theorem \cite{Hall35}]\label{prop:halls}
	Let $G=(L,R,E)$ be a bipartite graph and $|L| = |R| = n$. Then,
	\[
		\max (|A| - |N(A)|) = n - \mu(G), 
	\]
	where $A$ ranges over $L$ or $R$, separately. We refer to such set $A$ as a {\em witness set}. 
\end{proposition}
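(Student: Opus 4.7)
The plan is to prove this defect version of Hall's marriage theorem by a short reduction to the classical statement. Let $d := n - \mu(G)$, and I will prove both directions of the equality for $A$ ranging over $L$; the case $A \subseteq R$ is symmetric by swapping the roles of the two sides.

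For the easy inequality $\max_{A \subseteq L} (|A| - |N(A)|) \leq n - \mu(G)$, I would just count unmatched vertices. Fix any $A \subseteq L$ and let $M$ be a maximum matching. Every edge of $M$ incident to a vertex of $A$ has its other endpoint in $N(A)$, so $M$ matches at most $|N(A)|$ vertices of $A$, leaving at least $|A| - |N(A)|$ vertices of $A$ unmatched in $L$. Since the total number of $L$-vertices left unmatched by $M$ is exactly $n - |M| = n - \mu(G)$, this yields $|A| - |N(A)| \leq n - \mu(G)$.

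For the reverse inequality, I exhibit a witness set attaining equality. If $d = 0$ take $A = \emptyset$; otherwise, form an auxiliary bipartite graph $G'$ by adjoining a set $D$ of $d-1$ fresh ``dummy'' vertices to $R$, each joined to every vertex of $L$, keeping all original edges of $G$. I claim $G'$ has no matching that saturates $L$: any such matching would contain $n$ edges, of which at most $|D| = d-1$ are incident to $D$, leaving a matching of $G$ of size at least $n - (d-1) = \mu(G) + 1$, contradicting the definition of $\mu(G)$. By the classical Hall's theorem applied to $G'$, there is some nonempty $A \subseteq L$ with $|N_{G'}(A)| < |A|$. Because every vertex of $D$ is adjacent to all of $L$, $N_{G'}(A) = N_G(A) \cup D$, so $|N_G(A)| + (d-1) < |A|$, which (integrality) gives $|A| - |N_G(A)| \geq d = n - \mu(G)$. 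Combined with the easy direction this is an equality, so $A$ is the desired witness set.

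The only genuine subtlety is the choice of exactly $d - 1$ dummies: taking $d$ would still yield a Hall violation but would not force the defect to be tight, while taking $d - 2$ would no longer contradict the optimality of $\mu(G)$. With this calibration the argument is routine, and the symmetric claim for witnesses $A \subseteq R$ follows by exchanging $L$ and $R$ throughout.
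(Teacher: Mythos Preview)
The paper does not supply its own proof of this proposition; it is stated as a standard consequence of Hall's theorem and cited to \cite{Hall35}. Your argument is a correct and standard reduction: the counting bound for the easy direction is fine, and the dummy-vertex construction for the reverse direction is a well-known way to derive the defect form from the classical statement.

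One small inaccuracy, which does not affect the proof itself: your closing remark about the calibration of the number of dummies has the two failure modes swapped. With $d$ dummies the auxiliary graph $G'$ \emph{does} admit an $L$-saturating matching (take a maximum matching of $G$ and match the remaining $d$ vertices of $L$ to the $d$ dummies), so Hall's condition holds and you get no witness at all. With $d-2$ dummies (for $d \geq 2$) you \emph{do} still get a Hall violation, but the resulting inequality is only $|A| - |N_G(A)| \geq d-1$, which is not tight. The choice $d-1$ is exactly the largest number of dummies for which $G'$ fails to saturate $L$, and hence yields the strongest conclusion $|A| - |N_G(A)| \geq d$.
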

 
Let $A$ be the Hall's witness of $H$ as defined in \cref{prop:halls}. Suppose w.l.o.g. that $A \subseteq L$. Define $\Abar := L \setminus A$, $B := N_H(A)$, $\Bbar := R \setminus B$. Fix a maximum matching $M^\star$ of graph $G$. Let $\missingMstar$ be the edges in $M^\star$ that have one endpoint in $A$ and one endpoint in $\Bbar$. Note that no edge of $\missingMstar$ can belong to $H$. We also define $S := V(\missingMstar)$, $W := (A \cup \Bbar) \setminus S$, and $T = \Abar \cup B$. See \cref{fig:hall}.
    
    \begin{figure}[h]
        \centering
        \includegraphics[scale=0.47]{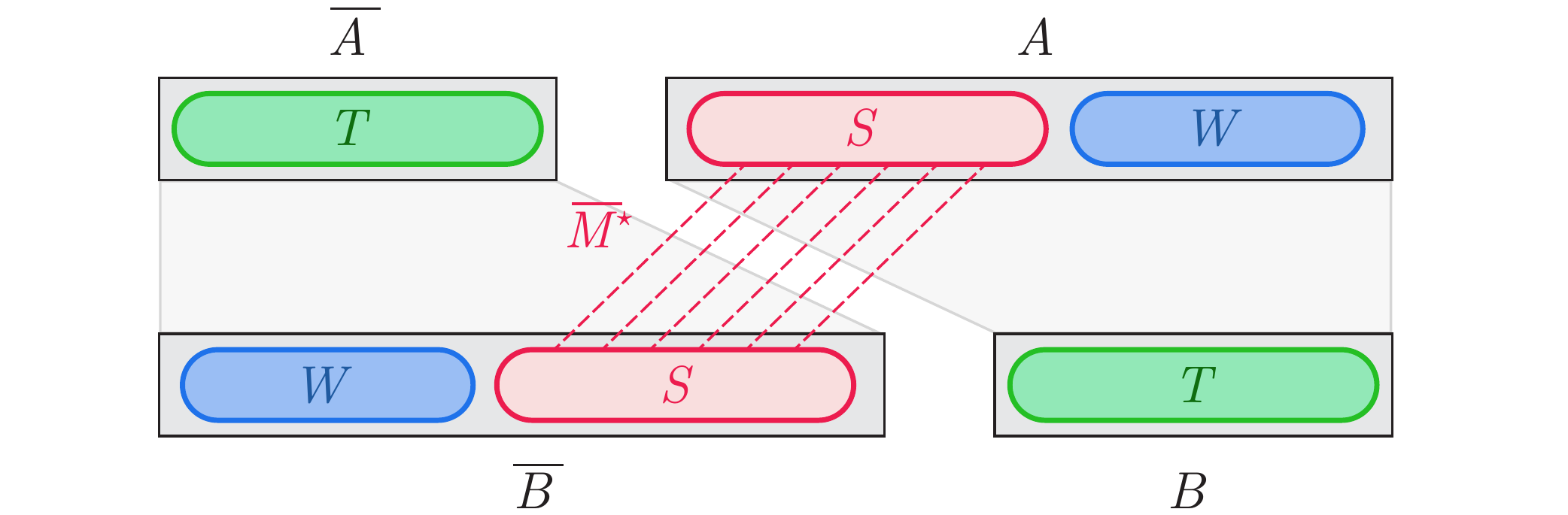}
        \caption{An illustration of the Hall's witness $A$ for $H$, along with the sets $B, \Abar, \Bbar, S, W$.}
        \label{fig:hall}
    \end{figure}
    
    \newcommand{\upT}[0]{\ensuremath{\hat{T}}}
    \newcommand{\upS}[0]{\ensuremath{\hat{S}}}
    \newcommand{\downS}[0]{\ensuremath{\check{S}}}
    \newcommand{\upW}[0]{\ensuremath{\hat{W}}}
    
    We show in this section that when $H$ does not include a larger than $2/3$-approximation, then (almost) all vertices in $S$ and $T$ must have degree very close to $\beta/2$ and so belong to $U$ (as defined in \cref{lem:EDCS-char}). Additionally, we show that very few vertices of $W$ belong to $U$.
    
    To proceed, we need some notation that we summarize below.
    \begin{itemize}
        \item $H_S, H_W$: We partition the edges of $H$ into two subgraphs $H_S$ and $H_W$. The edges between $S$ and $T$ belong to $H_S$ and the edges between $W$ and $T$ belong to $H_W$.
        \item $d(v), d_S(v), d_W(v)$: The degree of a vertex $v$ in graphs $H$, $H_S$, and $H_W$ respectively.
        \item $m$, $m_S$, $m_W$: The number of edges in $H$, $H_S$, and $H_W$ respectively.  (Other than the discussion of this section, we use $m$ to denote the number of edges of $G$ instead.)
        \item $\bar{d}_S(T) := m_S/|T|$, $\bar{d}_W(T) := m_W/|T|$, $\bar{d}(T) := m/|T|$: These are the average degrees of vertices in $T$ in graphs $H_S, H_W$, and $H$ respectively.
        \item $\bar{d}(S) := m_S / |S|$, $\bar{d}(W) := m_W / |W|$: The average degrees in $S$ and $W$ respectively.
        \item $\upT := \{ v \in T : d(v) > (1+\alpha) d_S(v)\}$ where $\alpha := 3\sqrt{\delta}$.
    \end{itemize}

    \paragraph{The characterization:} Our proofs proceed by assuming $\mu(H) < (\frac{2}{3} + \delta) \mu(G)$, and then proving some structural properties of the subgraph $H$. In particular, we show that if $\mu(H) < (\frac{2}{3} + \delta) \mu(G)$, then all the following must hold:
    \begin{itemize}[leftmargin=10pt]
        \item $|S \setminus \vmid| \leq 199 \cdot \delta \mu(G)$ (stated as \cref{cl:S/mid-small}). That is, almost all vertices in $S$ belong to $\vmid$.
        \item $|T \setminus \vmid| \leq 600 \cdot \delta \mu(G)$ (stated as \cref{cl:T/mid-small}). That is, almost all vertices in $T$ belong to $\vmid$.
        \item $|W \setminus \vlow| \leq  33 \cdot \sqrt{\delta} \mu(G)$ (stated as \cref{cl:WcapU-small}). That is, almost all vertices in $W$ belong to $\vlow$.
    \end{itemize}
    
    We note that the upper bound on $|S \setminus \vmid|$ can also be inferred from the characterization of \citet*{AssadiBehnezhad21}. The other two bounds require new ideas.
    
    We first show how these properties imply our desired \cref{lem:EDCS-char} for bipartite graphs.
    
    \newcommand{\notmissingMstar}[0]{\ensuremath{\widetilde{M}^\star}}
    
    \begin{proof}[Proof of \cref{lem:EDCS-char} for bipartite graphs]
        We prove $(P1)$, $(P2)$, and $(P3)$ of \cref{lem:EDCS-char} one by one.
        
        \paragraph{(P1): $\mu(H') \geq (\frac{2}{3} - 120\sqrt{\delta})\mu(G)$.}
        
        Take the maximum matching $M^\star$ of $G$ we fixed at the beginning of this section. Recall that $\missingMstar$ is the edges of $M^\star$ that go from $A$ to $\Bbar$, and also recall that we defined $S = V(\missingMstar)$. Now take the matching $\notmissingMstar := M^\star \setminus \missingMstar$. Since $M^\star$ is a matching and $V(\missingMstar) = S$, we get that no vertex in $S$ can be matched by $\notmissingMstar$. Hence, any edge in $\notmissingMstar$ must have one endpoint in $T$ and one endpoint in $W$. Call an edge $e \in \notmissingMstar$ {\em bad} if it has an endpoint in $W \setminus \vlow$ or an endpoint in $T \setminus \vmid$, and {\em good} otherwise. We have
        $$
            (\# \text{ of bad edges}) \leq |W \setminus \vlow| + |T \setminus \vmid| 
            \stackrel{\text{Claims \ref{cl:T/mid-small},\ref{cl:WcapU-small}}}{\leq} 33 \sqrt{\delta} \mu(G) + 600 \delta \mu(G) < 119\sqrt{\delta}\mu(G),
        $$
        where the last inequality holds since $\delta < 1/60$. From this, we get that
        \begin{flalign*}
            (\# \text{ of good edges}) &\geq |\notmissingMstar| - 119\sqrt{\delta}\mu(G)\\
            &= (\mu(G) - |S|/2) - 119\sqrt{\delta}\mu(G) \tag{Since $|M^\star| = \mu(G)$, $|S| = 2|\missingMstar|$, and $|\notmissingMstar| = |M^\star| - |\missingMstar|$.}\\
            &\geq \mu(G) - \frac{1}{2} (\frac{2}{3} + 3\delta)\mu(G) - 119\sqrt{\delta}\mu(G) \tag{$|S| \leq (\frac{2}{3} + 3\delta)\mu(G)$ by \cref{cl:S-2/3-muG}.}\\
            &> \left(\frac{2}{3} - 120 \sqrt{\delta}\right) \mu(G). \tag{Holds since $\delta < 1/60$.}
        \end{flalign*}
        
        Now observe that since a good edge $(u, v)$ has one endpoint in $\vlow$ and one endpoint in $\vmid$, we have $d(u) + d(v) \leq .2 \beta + .6 \beta = .8 \beta$. Since $H$ is a $(\beta, (1-\epsilon)\beta)$-EDCS of $G$ and $.8 \beta < (1-\epsilon)\beta$, excluding $(u, v)$ from $H$ would violate the second property of EDCS. Hence, all good edges in $\notmissingMstar$ must belong to $H$. Additionally, all good edges must also belong to subgraph $H'$ of $H$, since they have one endpoint in $\vmid$ and one in $\vlow$. Thus, we get that $\mu(H') \geq (\# \text{ of good edges}) \geq (\frac{2}{3} - 120\sqrt{\delta})\mu(G)$.
        
        \paragraph{(P2):} For any matching $M$ in $H'$, $\mu(G[\vmid \setminus V(M)]) \geq (\frac{1}{3}-800\delta)\mu(G)$.
        
        We say an edge $e \in \missingMstar$ is {\em wasted} if at least one of its endpoints is matched by $M$. Take a wasted edge $(u, v) \in \missingMstar$ and suppose that its endpoint $v$ is matched to some vertex $w$ in $M$. First, note that $v \in S$ since $S = V(\missingMstar)$ and note that $w \in T$ since $(u, w) \in M \subseteq H' \subseteq H$ and all edges of $S$ go to $T$ in $H$. Second, note that since $M$ is a matching in $H'$, one vertex of $(w, v)$ must belong to $\vmid$ and one to $\vlow$. From this, we get that for any wasted edge in $\missingMstar$, there is at least one dedicated vertex in $S \cup T$ that belongs to $\vlow$. Hence,
        \begin{flalign*}
            (\# \text{of wasted edges}) &\leq |S \cap \vlow| + |T \cap \vlow|\\
            &\leq |S \setminus \vmid| + |T \setminus \vmid|. \tag{Since $\vlow \cap \vmid = \emptyset$ by their definition.}\\
            &\leq 199 \delta \mu(G) + 600 \delta \mu(G) \tag{By \cref{cl:S/mid-small,cl:T/mid-small}.}\\
            &= 799\delta \mu(G).
        \end{flalign*}
        Since any edge of \missingMstar{} that is not wasted belongs to $G[\vmid \setminus V(M)]$, we get that
        \begin{flalign*}
            \mu(G[\vmid \setminus V(M)]) &\geq |\missingMstar| - 799 \delta \mu(G) = \frac{1}{2} |S| - 799 \delta \mu(G)\\
            &\geq \frac{1}{2} \left(\frac{2}{3} - 2\delta\right)\mu(G) - 799 \delta \mu(G) \tag{By \cref{cl:S-2/3-muG}.}\\
            &= \left(\frac{1}{3} - 800 \delta\right) \mu(G).
        \end{flalign*}
        
        \paragraph{(P3): {\normalfont $|\vmid| < 2\mu(G)$}.} We have
        \begin{flalign*}
            |\vmid| &= |\vmid \cap S| + |\vmid \cap T| + |\vmid \cap W| \tag{Since $S$, $T$, $W$ partition $V$.}\\
            &\leq |S| + |T| + |\vmid \cap W|\\
            &\leq |S| + |T| + |W \setminus \vlow| \tag{Since $\vlow \cap \vmid = \emptyset$.}\\
            &\leq |S| + |T| + 33\sqrt{\delta} \mu(G) \tag{By \cref{cl:WcapU-small}.}\\
            &< 8 \mu(G),
        \end{flalign*}
        where the last inequality follows from $|S| = 2|\missingMstar| \leq 2\mu(G)$, $|T| = \mu(H) \leq \mu(G)$, and $33\sqrt{\delta} < 4.3$ (since $\delta < 1/60)$.
    \end{proof} 
    
    Thus, it just remains to prove the three upper bounds above on $|S \setminus \vmid|$, $|T \setminus \vmid|$, and $|W \setminus \vlow|$. We continue with some basic guarantees of the Hall's witness in \cref{sec:xgg-1}, prove a parametrized guarantee on $\bar{d}_S(T)$ in \cref{sec:xgg-2}, prove some useful auxiliary claims in \cref{sec:xgg-3}, and then the upper bounds on $|S \setminus \vmid|$, $|T \setminus \vmid|$, and $|W \setminus \vlow|$ in \cref{sec:xgg-4}.
     
    \subsubsection{Basic Guarantees of Hall's Witness}\label{sec:xgg-1}
    
    \cref{cl:muH=T,cl:S-lb,cl:ratio-to-muG} below are all by now standard in analyzing EDCS. We provide the full proofs nonetheless to keep our discussion of this section self-contained.
    
    \begin{claim}\label{cl:muH=T}
        $\mu(H) = |T|$.
    \end{claim}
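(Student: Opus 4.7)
The plan is to derive $\mu(H) = |T|$ as an immediate consequence of the extended Hall's theorem (\cref{prop:halls}) applied to the witness set $A$. Recall $A \subseteq L$ is defined to be the Hall's witness for $H$, so $B = N_H(A)$ and $|A| - |B| = n - \mu(H)$ by \cref{prop:halls}. Also recall $T = \overline{A} \cup B$, where $\overline{A} = L \setminus A$, so $|\overline{A}| = n - |A|$.

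First I would observe that $\overline{A} \subseteq L$ and $B \subseteq R$ are disjoint (they lie in different parts of the bipartition), so $|T| = |\overline{A}| + |B|$. Substituting the identity from Hall's theorem, the computation is a one-liner:
\begin{equation*}
|T| = |\overline{A}| + |B| = (n - |A|) + |B| = n - (|A| - |B|) = n - (n - \mu(H)) = \mu(H).
\end{equation*}

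There is no real obstacle here: the statement is simply a repackaging of the Hall's theorem identity $|A| - |N_H(A)| = n - \mu(H)$ in terms of the complementary set notation $T = \overline{A} \cup B$ that the proof of \cref{lem:EDCS-char} uses. If desired, one can additionally note (for intuition, not for the proof) that $T$ forms a vertex cover of $H$: any edge $(u,v)$ of $H$ with $u \in L, v \in R$ is covered because either $u \in \overline{A} \subseteq T$, or $u \in A$ which forces $v \in N_H(A) = B \subseteq T$. Together with König's theorem this would give $\mu(H) \leq |T|$, and the matching lower bound $\mu(H) \geq |T|$ follows from exhibiting a matching saturating all of $T$ via the deficiency version of Hall's theorem; but invoking \cref{prop:halls} directly as above is cleaner and avoids any case analysis.
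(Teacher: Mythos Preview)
Your proof is correct and essentially identical to the paper's: both compute $|T| = |\overline{A}| + |B| = n - (|A| - |B|) = n - (n - \mu(H)) = \mu(H)$ directly from the witness identity in \cref{prop:halls}. The additional K\"onig/vertex-cover remark you include is extra intuition not present in the paper, but the core argument is the same.
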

    \begin{myproof}
        We have $|T| = |\Abar| + |B| = n - (|A| - |B|) = n - (n - \mu(H)) = \mu(H)$, where the third equation follows from \cref{prop:halls} and the fact that $A$ is a witness set of $H$.
    \end{myproof}
    
    \begin{claim}\label{cl:S-lb}
        $|S| \geq 2(\mu(G) - \mu(H))$.
    \end{claim}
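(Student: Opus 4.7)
The plan is to decompose the maximum matching $M^\star$ of $G$ into the part $\missingMstar$ (edges from $A$ to $\Bbar$) and the remainder $M^\star \setminus \missingMstar$, then show that the remainder cannot be too large because it must be ``absorbed'' by $T = \Abar \cup B$.

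First I would note that since $G$ is bipartite with parts $L, R$, and $A \subseteq L$ with $\Bbar \subseteq R$, every edge of $M^\star$ has exactly one endpoint in each side. Hence any edge of $M^\star \setminus \missingMstar$ must fail to have the form ``$A$ to $\Bbar$,'' so at least one of its endpoints lies in $\Abar \cup B = T$. Since $M^\star \setminus \missingMstar$ is itself a matching (as a subset of $M^\star$), these dedicated endpoints in $T$ are distinct, giving
\[
    |M^\star \setminus \missingMstar| \leq |T|.
\]

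Next I would invoke \cref{cl:muH=T} to replace $|T|$ with $\mu(H)$, yielding
\[
    |\missingMstar| = |M^\star| - |M^\star \setminus \missingMstar| \geq \mu(G) - \mu(H).
\]
Finally, since $\missingMstar$ is a matching and $S = V(\missingMstar)$, we have $|S| = 2|\missingMstar| \geq 2(\mu(G) - \mu(H))$, which is exactly the claim.

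I do not anticipate any real obstacle here; the argument is a direct Hall-style bookkeeping step, using only the bipartite structure (to ensure each $M^\star$ edge crosses $L$--$R$), the definition of $\missingMstar$, and the identity $|T| = \mu(H)$ already established in \cref{cl:muH=T}. No case analysis on edge degrees or EDCS properties is needed at this stage.
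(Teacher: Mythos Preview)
Your proof is correct, but it takes a different (and arguably more elementary) route than the paper. The paper argues by augmenting $H$ with $M^\star$: it sets $H^\star := M^\star \cup H$, applies the extended Hall theorem to $H^\star$ to obtain $|A| - |N_{H^\star}(A)| \leq n - \mu(G)$, subtracts this from the witness equation $|A| - |B| = n - \mu(H)$ to get $|N_{H^\star}(A)| - |B| \geq \mu(G) - \mu(H)$, and then observes that this gap can only come from $M^\star$-edges leaving $A$ for $\Bbar$. You instead bypass the second Hall application entirely: any $M^\star$-edge not in $\missingMstar$ must touch $T = \Abar \cup B$, and since $M^\star \setminus \missingMstar$ is a matching those touching vertices are distinct, so $|M^\star \setminus \missingMstar| \leq |T| = \mu(H)$ by \cref{cl:muH=T}. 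Both arguments are short; yours trades the Hall-theorem bookkeeping for a one-line pigeonhole on $T$, which is a clean simplification.
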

    \begin{myproof}
        Let $H^\star := M^\star \cup H$. Note that $\mu(H^\star) \geq |M^\star| = \mu(G)$. From \cref{prop:halls}, this means that $|A| - |N_{H^\star}(A)| \leq n - \mu(H^\star) \leq n - \mu(G)$. On the other hand, since $A$ is a witness set for $H$, we know that $|A| - |B| = n - \mu(H)$. Putting the two together, we get that $|N_{H^\star}(A)| - |B| \geq \mu(G) - \mu(H)$. From the construction of $H^\star$, this implies that at least $\mu(G) - \mu(H)$ edges of $A$ in $M^\star$ should go to $\Bbar$, implying that $|\missingMstar| \geq \mu(G) - \mu(H)$. The claim follows since $|S| = 2|\missingMstar|$ by definition.
    \end{myproof}
    
    \begin{claim}\label{cl:ratio-to-muG}
        $\mu(H) \geq \frac{2\bar{d}(S)}{2\bar{d}(S) + \bar{d}_S(T)} \cdot \mu(G)$.
    \end{claim}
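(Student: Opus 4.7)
The plan is to rewrite the inequality in a form where the two previous Hall-witness claims fit cleanly. Clearing denominators, the target inequality $\mu(H) \geq \frac{2\bar{d}(S)}{2\bar{d}(S) + \bar{d}_S(T)} \cdot \mu(G)$ is equivalent to
\[
    \mu(H)\,\bar{d}_S(T) \;\geq\; 2\,\bar{d}(S)\,\bigl(\mu(G) - \mu(H)\bigr).
\]
So I would aim to prove this form directly.

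The key observation is that the quantity $m_S$ (the number of edges in $H_S$) can be double-counted across $T$ and across $S$. Using the definitions, $m_S = \bar{d}_S(T) \cdot |T|$ and $m_S = \bar{d}(S) \cdot |S|$. Now I would plug in \cref{cl:muH=T}, which gives $|T| = \mu(H)$, to conclude that $m_S = \mu(H)\,\bar{d}_S(T)$. Separately, I would apply \cref{cl:S-lb} to get $|S| \geq 2(\mu(G) - \mu(H))$, and multiply by $\bar{d}(S) \geq 0$ to obtain $m_S = \bar{d}(S) \cdot |S| \geq 2\,\bar{d}(S)\bigl(\mu(G) - \mu(H)\bigr)$.

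Chaining these two identities/inequalities, $\mu(H)\,\bar{d}_S(T) = m_S \geq 2\,\bar{d}(S)(\mu(G) - \mu(H))$, which is exactly the rearranged form of the claim. Rearranging back recovers the stated lower bound on $\mu(H)/\mu(G)$. There is no real obstacle here — the proof is essentially a two-line algebraic manipulation that merely packages the two Hall-witness claims (\cref{cl:muH=T} and \cref{cl:S-lb}) together with the trivial double-counting of edges in $H_S$. The only minor care needed is handling the degenerate case $\bar{d}(S) = 0$, but then the right-hand side is $0$ and the inequality holds trivially; similarly, if $\mu(G) \leq \mu(H)$ then the rearranged inequality is immediate.
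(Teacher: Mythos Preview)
Your proposal is correct and essentially identical to the paper's proof: both double-count $m_S$ as $\bar{d}_S(T)\cdot|T|=\bar{d}(S)\cdot|S|$, substitute $|T|=\mu(H)$ from \cref{cl:muH=T} and $|S|\geq 2(\mu(G)-\mu(H))$ from \cref{cl:S-lb}, and rearrange. Your handling of the degenerate case $\bar{d}(S)=0$ is a minor bit of extra care the paper omits.
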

    \begin{myproof}
        Observe from definition that $\bar{d}_S(T) \cdot |T| = m_S = \bar{d}(S) \cdot |S|$. Therefore
        $
            |T| = \frac{|S| \bar{d}(S)}{\bar{d}_S(T)}.
        $
        Plugging $|S| \geq 2(\mu(G) - \mu(H))$ of \cref{cl:S-lb}, and $\mu(H) = |T|$ of \cref{cl:muH=T}, we get that
        $$
            \mu(H) \geq \frac{2(\mu(G) - \mu(H)) \bar{d}(S)}{\bar{d}_S(T)} \qquad \Leftrightarrow \qquad \mu(H) \geq \frac{2\bar{d}(S)}{2\bar{d}(S) + \bar{d}_S(T)} \cdot \mu(G).\qedhere
        $$
        
    \end{myproof}
    
    \subsubsection{A Parametrized Lower Bound on the degrees of $T$ to $S$}\label{sec:xgg-2}
    
    The following \cref{lem:dT-ub} is our most technical lemma of this section. It gives a useful lower bound on the average degree $\bar{d}_S(T)$ based on a parameter $\gamma \geq 0$. We will later show that if $\mu(H)$ is not much larger than $\frac{2}{3}\mu(G)$, then $\gamma$ should be very close to zero, implying several useful properties on the structure of such tight instances.
    
    \begin{lemma}\label{lem:dT-ub}
        It holds that $\bar{d}_S(T) \leq (1 - \gamma)\beta - \bar{d}(S)$ where
        $$
             \gamma := \frac{\sum_{v \in S} (d(v) - \bar{d}(S))^2 + \frac{1}{2} \sum_{v\in T} (d_S(v) - \bar{d}_S(T))^2 + \frac{1}{4} |\upT| \delta \bar{d}_S(T)^2}{m_S \beta} \geq 0.
        $$
    \end{lemma}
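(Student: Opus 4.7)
The plan is to start from the EDCS upper-bound property: every edge $(u,v) \in H_S$ satisfies $d(u) + d(v) \leq \beta$ because $H_S \subseteq H$. Summing over all edges of $H_S$ and regrouping by which endpoint lies in $S$ versus $T$, we get
\[
    m_S \beta \;\geq\; \sum_{(u,v) \in H_S} \!\bigl(d(u) + d(v)\bigr) \;=\; \sum_{u \in S} d(u)\,d_S(u) + \sum_{v \in T} d(v)\, d_S(v).
\]
The next step is to simplify these two sums using the structure of the Hall witness. For any $u \in S \subseteq A \cup \Bbar$, all $H$-neighbors of $u$ lie in $T$: if $u \in A$ then $N_H(u) \subseteq N_H(A) = B \subseteq T$; if $u \in \Bbar$ then any $v \in N_H(u) \cap A$ would force $u \in N_H(A) = B$, a contradiction, so $N_H(u) \subseteq \Abar \subseteq T$. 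Hence $d(u) = d_S(u)$ for every $u \in S$. For $v \in T$, every $H$-edge from $v$ to $S \cup W$ lies in $H_S$ or $H_W$, giving $d(v) \geq d_S(v) + d_W(v)$ (with equality for $v \in \Abar$ and possibly slack from edges within $T$ for $v \in B$, which we simply drop in our favor).

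I would then invoke the variance identity $\sum_i x_i^2 = n\bar{x}^2 + \sum_i(x_i-\bar{x})^2$ on both sums, using $\sum_{u \in S} d(u) = m_S = \sum_{v \in T} d_S(v)$ so that $|S|\bar{d}(S)^2 = m_S\bar{d}(S)$ and $|T|\bar{d}_S(T)^2 = m_S\bar{d}_S(T)$. After rearrangement this yields
\[
    m_S\bigl(\beta - \bar{d}(S) - \bar{d}_S(T)\bigr) \;\geq\; \sum_{u \in S}\bigl(d(u)-\bar{d}(S)\bigr)^2 + \sum_{v \in T}\bigl(d_S(v)-\bar{d}_S(T)\bigr)^2 + \sum_{v \in T} d_S(v)\, d_W(v),
\]
which is already very close to the target, the only discrepancy being that $\gamma$ carries a factor of $\tfrac{1}{2}$ on the $T$-variance term and the term $\tfrac{1}{4}|\upT|\,\delta\,\bar{d}_S(T)^2$ in place of $\sum_{v \in T} d_S(v)\,d_W(v)$.

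The main obstacle, therefore, is closing the gap via a per-vertex dichotomy on $\upT$, i.e. showing
\[
    \sum_{v \in T} d_S(v)\, d_W(v) + \tfrac{1}{2}\sum_{v \in T}\bigl(d_S(v) - \bar{d}_S(T)\bigr)^2 \;\geq\; \tfrac{1}{4}|\upT|\,\delta\,\bar{d}_S(T)^2.
\]
Split each $v \in \upT$ according to whether $d_S(v) \geq \bar{d}_S(T)/2$. If so, the defining property $d(v) > (1+\alpha) d_S(v)$ gives $d_W(v) > \alpha d_S(v) \geq \alpha\bar{d}_S(T)/2$, so $d_S(v)d_W(v) \geq \alpha\bar{d}_S(T)^2/4$. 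Otherwise $(d_S(v) - \bar{d}_S(T))^2 > \bar{d}_S(T)^2/4$, contributing at least $\bar{d}_S(T)^2/8$ to the half-variance term. Since $\alpha = 3\sqrt{\delta}$ and $\delta < 1/60$, both per-vertex contributions exceed $\tfrac{\delta}{4}\bar{d}_S(T)^2$ (using $3\sqrt{\delta} \geq \delta$ and $1/8 \geq \delta/4$). Summing over the $|\upT|$ vertices closes the gap, and the final rearrangement gives $\gamma \geq 0$, as required.
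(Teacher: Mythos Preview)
Your approach is essentially the paper's: sum the EDCS edge-degree bound over $H_S$, regroup by endpoint, apply the variance identity on both the $S$- and $T$-sums, and finish with a per-vertex dichotomy on $\upT$. The organization differs only cosmetically (you split off half the $T$-variance to feed the dichotomy, whereas the paper peels off an $\alpha$-fraction), and your argument in fact yields the stronger coefficient $\alpha$ rather than $\delta$ if you want it, since $1/8 \ge \alpha/4$ holds as well.

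There is one genuine slip. You pass from $d(v)$ to $d_S(v)+d_W(v)$ via $d(v)\ge d_S(v)+d_W(v)$, correctly noting the slack comes from edges inside $T$ (between $\Abar$ and $B$; incidentally, such edges can be incident to $v\in\Abar$ too, not just $v\in B$). But having dropped those edges, you then claim for $v\in\upT$ that $d_W(v) > \alpha\, d_S(v)$. The defining inequality $d(v)>(1+\alpha)d_S(v)$ only gives $d(v)-d_S(v)>\alpha\, d_S(v)$, and $d_W(v)\le d(v)-d_S(v)$, so this step does not follow. The fix is immediate: never introduce $d_W$. Keep the term $\sum_{v\in T} d_S(v)\bigl(d(v)-d_S(v)\bigr)$ in your displayed inequality instead of $\sum_{v\in T} d_S(v)\,d_W(v)$; then your dichotomy goes through verbatim with $d(v)-d_S(v)$ in place of $d_W(v)$. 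This is precisely what the paper does (it works directly with $\sum_{v\in T} d_S(v)\,d(v)$ and never splits off $H_W$).
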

    \begin{proof}
    Since $H$ is a $(\beta, (1-\epsilon)\beta)$-EDCS, we get from the first condition of EDCS that
    \begin{equation}\label{eq:sumdeg-ub}
        \sum_{(u, v) \in H_S} d(u) + d(v) \leq \sum_{(u, v) \in H_S} \beta = m_S \beta.
    \end{equation}
    Let us now focus on the LHS of \cref{eq:sumdeg-ub}. Each vertex $v \in S \cup T$ participates in the sum $d_S(v)$ times for each of its edges in $H_S$, and each time adds a value of $d(v)$ to the sum. Hence,
    \begin{flalign*}
       \sum_{(u, v) \in H_S} d(u) + d(v) &= \sum_{v \in S \cup T} d_S(v) d(v)\\
        &= \sum_{v \in S} d(v)^2 + \sum_{v \in T} d_S(v)d(v) \tag{Since $d_S(v) = d(v)$ for all $v \in S$.}\\
        &= |S| \bar{d}(S)^2 + \sum_{v \in S} (d(v) - \bar{d}(S))^2 + \sum_{v \in T} d_S(v)d(v)
        \tag{This follows from applying \Cref{fact:quadratic-sum} on the first quadratic sum.}\\[0.2cm]
        &\geq m_S \bar{d}(S) + \sum_{v \in S} (d(v) - \bar{d}(S))^2 + \sum_{v \in T} d_S(v)d(v). \tag{Since $|S| \bar{d}(S) = m_S$.}
    \end{flalign*}
    Plugging this lower bound on the LHS of \cref{eq:sumdeg-ub} back to \cref{eq:sumdeg-ub} and moving the terms, we get that
    \begin{equation}\label{eq:gckk-289}
        \sum_{v \in T} d_S(v)d(v) \leq m_S (\beta - \bar{d}(S)) - \sum_{v \in S} (d(v) - \bar{d}(S))^2.
    \end{equation}
    Next, we focus on the LHS of \cref{eq:gckk-289}. For any vertex $v \in T$ define $x_v := d_S(v) - \bar{d}_S(T)$. We have
    \begin{flalign*}
        \sum_{v \in T} d_S(v) d(v) &\geq \sum_{v \in T \setminus \upT} d_S(v)^2 + (1+\alpha)\sum_{v \in \upT} d_S(v)^2 \tag{By definition of $\upT$.}\\
        &= \sum_{v \in T} d_S(v)^2 + \alpha\sum_{v \in \upT} d_S(v)^2\\
        &= \sum_{v \in T} (\bar{d}_S(T) + x_v)^2 + \alpha \sum_{v \in \upT} (\bar{d}_S(T) + x_v)^2 \tag{By definition of $x_v$.}\\
        &= |T| \bar{d}_S(T)^2 + \sum_{v \in T} x_v^2 + 2\bar{d}_S(T) \sum_{v \in T} x_v + \alpha \sum_{v \in \upT} (\bar{d}_S(T) + x_v)^2\\
        &= |T| \bar{d}_S(T)^2 + \sum_{v \in T} x_v^2 + \alpha \sum_{v \in \upT} (\bar{d}_S(T) + x_v)^2 \tag{Since $\sum_{v \in T} x_v = \sum_{v \in T} (d_S(v) - \bar{d}_S(T)) = \sum_{v\in T} d_S(v) - \sum_{v \in T} \bar{d}_S(T) = m_S - m_S = 0$.}\\[0.2cm]
        &\geq |T| \bar{d}_S(T)^2 + (1-\alpha) \sum_{v \in T} x^2_v + \sum_{v \in \upT} (\alpha x_v^2 + \alpha(\bar{d}_S(T) + x_v)^2)\\
        &\geq |T| \bar{d}_S(T)^2 + (1-\alpha) \sum_{v \in T} x^2_v + \sum_{v \in \upT} \alpha \bar{d}_S(T)^2/4 \tag{If $|x_v| \geq \bar{d}_S(T)/2$ then $\alpha x_v^2 \geq \alpha \bar{d}_S(T)^2/4$ otherwise $\alpha(\bar{d}_S(T) + x_v)^2 \geq \alpha \bar{d}_S(T)^2/4$.}\\
        &\geq m_S \cdot \bar{d}_S(T) + \frac{1}{2} \sum_{v\in T} (d_S(v) - \bar{d}_S(T))^2 + |\upT| \cdot \alpha \bar{d}_S(T)^2/4. \tag{By definition of $x_v$, and since $\alpha = 3\sqrt{\delta} < 3 \sqrt{1/60} < 1/2$.}
    \end{flalign*}
    Plugging this lower bound on the LHS of \cref{eq:gckk-289} back to \cref{eq:gckk-289}, we get that
    $$
        m_S \cdot \bar{d}_S(T) + (1-\alpha)\sum_{v\in T} (d_S(v) - \bar{d}_S(T))^2 + |\upT| \cdot \alpha \bar{d}_S(T)^2/4 \leq m_S (\beta - \bar{d}(S)) - \sum_{v \in S}(d(v) - \bar{d}(S))^2.
    $$
    Moving the terms, we get that
    \begin{flalign*}
        \bar{d}_S(T) &\leq (\beta - \bar{d}(S)) - \frac{\sum_{v \in S} (d(v) - \bar{d}(S))^2 + \frac{1}{2} \sum_{v\in T} (d_S(v) - \bar{d}_S(T))^2 + |\upT| \alpha \bar{d}_S(T)^2/4}{m_S}\\
        &= (\beta - \bar{d}(S)) - \gamma \beta \tag{By definition of $\gamma$ in the claim statement.}\\
        &= (1 - \gamma)\beta - \bar{d}(S).
    \end{flalign*}
    This is the desired upper bound on $\bar{d}_S(T)$. Note also that the non-negativity of $\gamma$ follows from the fact that all the terms in it are non-negative.
    \end{proof}
    
    \subsubsection{Some Auxiliary Claims}\label{sec:xgg-3}
    
    Before proving our main characterization, we prove a few useful claims in this section. Namely, that under $\mu(H) < (\frac{2}{3} + \delta)\mu(G)$, we have
    \begin{itemize}
        \item $(1-\epsilon)\beta/2 \leq \bar{d}(S) \leq (1+3\delta) \beta/2$ (stated as \cref{cl:avgd-S}).
        
        That is, the average degree of $S$ in $H$ should be close to $\beta/2$.
        \item $(1-6\delta)\beta/2 \leq \bar{d}_S(T) \leq (1+\epsilon) \beta/2$ (stated as \cref{cl:avgd-T-inS}).
        
        That is, the average degree of $T$ to $S$ in $H$ should be  close to $\beta/2$.
        \item $(\frac{2}{3} - 2\delta) \mu(G) \leq |S| \leq (\frac{2}{3} + 3\delta)\mu(G)$ (stated as \cref{cl:S-2/3-muG}).
        
        That is, $|S|$ should be close to $\frac{2}{3}\mu(G)$.
        
        \item $\gamma \leq 2\delta$ (stated as \cref{cl:gamma<delta}).
        
        That is, the parameter $\gamma$ of \cref{lem:dT-ub} should be small.
    \end{itemize}
    
    We now state and prove these claims one by one.
    
    \begin{claim}\label{cl:avgd-S}
        If $\mu(H) < (\frac{2}{3} + \delta) \mu(G)$, then  $ (1-\epsilon)\beta/2 \leq \bar{d}(S) \leq (1+3\delta) \beta/2$.
    \end{claim}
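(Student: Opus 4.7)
The plan is to prove the two inequalities separately, with the lower bound being a direct consequence of the EDCS definition and the upper bound coming from a contradiction argument that combines the two main tools already established.

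For the lower bound $\bar{d}(S) \geq (1-\epsilon)\beta/2$, I actually would not need the hypothesis $\mu(H) < (\frac{2}{3}+\delta)\mu(G)$ at all. Each edge of $\missingMstar$ lies in $G \setminus H$ by construction, so the second EDCS property gives $d(u)+d(v) \geq (1-\epsilon)\beta$ for each $(u,v) \in \missingMstar$. Summing over all $|\missingMstar| = |S|/2$ edges (which form a matching, and hence cover each vertex of $S$ exactly once) yields $\sum_{v \in S} d(v) \geq (1-\epsilon)\beta\,|S|/2$. Since every $H$-neighbor of a vertex $v \in S$ lies in $T$, we have $d(v) = d_S(v)$, hence $\sum_{v \in S} d(v) = m_S = \bar{d}(S)\,|S|$, and the lower bound follows.

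For the upper bound, the plan is to suppose toward contradiction that $\bar{d}(S) > (1+3\delta)\beta/2$ and then derive $\mu(H) \geq (\frac{2}{3}+\delta)\mu(G)$ using two results already in hand. From \cref{lem:dT-ub} and $\gamma \geq 0$, I get $\bar{d}_S(T) \leq \beta - \bar{d}(S)$. Plugging this into the inequality $\mu(H) \geq \frac{2\bar{d}(S)}{2\bar{d}(S)+\bar{d}_S(T)}\mu(G)$ of \cref{cl:ratio-to-muG} and simplifying gives
$$\mu(H) \geq \frac{2\bar{d}(S)}{\bar{d}(S)+\beta}\,\mu(G).$$
The map $x \mapsto \frac{2x}{x+\beta}$ is monotone increasing on $[0,\beta]$, so the assumed lower bound on $\bar{d}(S)$ yields $\mu(H) \geq \frac{2(1+3\delta)}{3(1+\delta)}\,\mu(G)$.

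The last step is a small arithmetic check: one verifies that $\frac{2(1+3\delta)}{3(1+\delta)} \geq \frac{2}{3}+\delta$ whenever $3\delta \leq 1$, which is very comfortably satisfied since $\delta < 1/60$. This contradicts the hypothesis and completes the proof. I don't foresee any real obstacle here; both bounds fall out of the machinery already developed (EDCS's degree-sum property for the lower bound, and the combination of \cref{lem:dT-ub} with \cref{cl:ratio-to-muG} for the upper bound), so the only care needed is in the final inequality manipulation.
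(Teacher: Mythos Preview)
Your proposal is correct and follows essentially the same approach as the paper: the lower bound comes directly from summing the second EDCS inequality over the edges of $\missingMstar$, and the upper bound is by contradiction, combining \cref{lem:dT-ub} (with $\gamma\ge 0$) and \cref{cl:ratio-to-muG}, then checking that $\frac{2(1+3\delta)}{3(1+\delta)}\ge \frac{2}{3}+\delta$ for $\delta\le 1/3$. The only cosmetic difference is that you invoke monotonicity of $x\mapsto \frac{2x}{x+\beta}$ explicitly, whereas the paper writes out the substitution directly.
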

    \begin{myproof}
        We first prove the lower bound which in fact holds regardless of the assumption of the claim. Since $H$ has no edges between $A$ and $\Bbar$ by definition, then any edge in $\missingMstar$ must be missing from $H$. Hence, by the second condition of EDCS, for any $(u, v) \in \missingMstar$, $d(u) + d(v) \geq (1-\epsilon)\beta$. Thus $\sum_{(u, v) \in \missingMstar} d(u) + d(v) \geq |\missingMstar| (1-\epsilon)\beta$. The LHS equals $\sum_{v \in S} d(v) = m_S$ since $S = V(\missingMstar)$. Dividing through by $|S| = 2|\missingMstar|$ we obtain $\bar{d}(S) \geq (1-\epsilon)\beta/2$.
        
        For the upper bound, suppose for contradiction that $\bar{d}(S) > (1+3\delta) \beta/2$. We have 
        \begin{flalign*}
        \mu(H) &\geq \frac{2\bar{d}(S)}{2\bar{d}(S) + \bar{d}_S(T)} \mu(G) \tag{By \cref{cl:ratio-to-muG}.}\\
        &\geq \frac{2\bar{d}(S)}{2\bar{d}(S) + \beta - \bar{d}(S)}\mu(G) \tag{Since $\bar{d}_S(T) \leq (1-\gamma) \beta - \bar{d}(S) \leq \beta - \bar{d}(S)$ by \cref{lem:dT-ub}.}\\
        &= \frac{2\bar{d}(S)}{\beta + \bar{d}(S)} \mu(G) > \frac{(1+3\delta)\beta}{\beta + (1+3\delta)\beta/2} \mu(G) = \frac{2+6\delta}{3+3\delta} \mu(G) \geq \left(\frac{2}{3} + \delta \right) \mu(G), 
        \end{flalign*}
        where the last inequality holds for all $0 \leq \delta \leq 1/3$. This contradicts the assumption of the claim, and so the claimed upper bound on $\bar{d}(S)$ must hold.
    \end{myproof}
    
    \begin{claim}\label{cl:avgd-T-inS}
        If $\mu(H) < (\frac{2}{3} + \delta) \mu(G)$, then  $ (1-6\delta)\beta/2 \leq \bar{d}_S(T) \leq (1+\epsilon) \beta/2$.
    \end{claim}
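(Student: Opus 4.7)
The plan is to obtain the upper bound directly from Lemma~\ref{lem:dT-ub} combined with Claim~\ref{cl:avgd-S}, and to obtain the lower bound from Claim~\ref{cl:ratio-to-muG} combined with the hypothesis $\mu(H) < (\tfrac{2}{3}+\delta)\mu(G)$ and Claim~\ref{cl:avgd-S}. Neither direction uses any new structural input; both are short algebraic manipulations of inequalities already established.

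For the upper bound, Lemma~\ref{lem:dT-ub} gives $\bar d_S(T) \leq (1-\gamma)\beta - \bar d(S)$, and since the same lemma asserts $\gamma \geq 0$, this immediately weakens to $\bar d_S(T) \leq \beta - \bar d(S)$. Plugging in $\bar d(S) \geq (1-\epsilon)\beta/2$ from Claim~\ref{cl:avgd-S} yields
\[
   \bar d_S(T) \;\leq\; \beta - \tfrac{(1-\epsilon)\beta}{2} \;=\; \tfrac{(1+\epsilon)\beta}{2},
\]
as required.

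For the lower bound, Claim~\ref{cl:ratio-to-muG} states $\mu(H) \geq \frac{2\bar d(S)}{2\bar d(S) + \bar d_S(T)} \mu(G)$. Combining with the claim's hypothesis $\mu(H) < (\tfrac{2}{3}+\delta)\mu(G) = \tfrac{2+3\delta}{3}\mu(G)$ gives $\frac{2\bar d(S)}{2\bar d(S)+\bar d_S(T)} < \frac{2+3\delta}{3}$. Cross-multiplying (both sides are positive) and collecting terms yields $(2-6\delta)\bar d(S) < (2+3\delta)\bar d_S(T)$, i.e.\ $\bar d_S(T) > \frac{2-6\delta}{2+3\delta}\bar d(S)$. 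Substituting $\bar d(S) \geq (1-\epsilon)\beta/2$ from Claim~\ref{cl:avgd-S}, the target inequality $\bar d_S(T) \geq (1-6\delta)\beta/2$ reduces to verifying
\[
   \frac{(2-6\delta)(1-\epsilon)}{2+3\delta} \;\geq\; 1-6\delta,
\]
which, after expanding both sides and cancelling common terms, is equivalent to $3\delta + 6\delta\epsilon + 18\delta^2 \geq 2\epsilon$. This is implied by the hypothesis $\delta > 2\epsilon$ from Lemma~\ref{lem:EDCS-char}, since already $3\delta > 6\epsilon > 2\epsilon$.

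There is really no single hard step here; the main point is to notice that the constraint $\delta > 2\epsilon$ built into Lemma~\ref{lem:EDCS-char} is precisely what is needed to absorb the $(1-\epsilon)$ loss coming from the lower bound in Claim~\ref{cl:avgd-S}, while the upper bound follows by throwing away the error term $\gamma\beta$ from Lemma~\ref{lem:dT-ub} and invoking the matching lower bound on $\bar d(S)$.
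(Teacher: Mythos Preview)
Your proof is correct and takes essentially the same approach as the paper: both derive the upper bound from Lemma~\ref{lem:dT-ub} together with the lower bound on $\bar d(S)$ in Claim~\ref{cl:avgd-S}, and both derive the lower bound by combining Claim~\ref{cl:ratio-to-muG} with the hypothesis $\mu(H)<(\tfrac{2}{3}+\delta)\mu(G)$ and Claim~\ref{cl:avgd-S}, finishing with an algebraic check that reduces to $\delta>2\epsilon$. The only cosmetic difference is that the paper argues the lower bound by contradiction whereas you argue it directly.
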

    \begin{myproof}
        The upper bound follows from $\bar{d}_S(T) \leq \beta - \bar{d}(S)$ of \cref{lem:dT-ub} and $\bar{d}(S) \geq (1-\epsilon)\beta/2$ of \cref{cl:avgd-S}. For the lower bound, suppose for contradiction that $\bar{d}_S(T) < (1-6\delta)\beta/2$. We have
        \begin{flalign*}
            \mu(H) 
            &\stackrel{\text{\cref{cl:ratio-to-muG}}}{\geq}
            \frac{2\bar{d}(S)}{2\bar{d}(S) + \bar{d}_S(T)} \mu(G) 
            \stackrel{\text{\cref{cl:avgd-S}}}{\geq}
            \frac{(1-\epsilon)\beta}{(1-\epsilon)\beta + \bar{d}_S(T)} \mu(G) \geq \frac{(1-\epsilon)\beta}{\beta + (1-6\delta)\beta/2} \mu(G)\\
            &= \frac{2-2\epsilon}{3 - 6\delta} \mu(G) \geq \left(\frac{2}{3}+ \delta \right) \mu(G),
        \end{flalign*}
        where the last inequality holds for all $1/2 > \delta \geq 2\epsilon \geq 0$. This contradicts the assumption of the claim and proves the lower bound on $\bar{d}_S(T)$.
    \end{myproof}
    
    \begin{claim}\label{cl:S-2/3-muG}
        If $\mu(H) < (\frac{2}{3} + \delta) \mu(G)$, then $(\frac{2}{3} - 2\delta) \mu(G) \leq |S| \leq (\frac{2}{3} + 3\delta)\mu(G)$.
    \end{claim}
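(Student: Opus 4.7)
The plan is to prove the two bounds separately; both follow by short algebraic calculations from results already established in Section~\ref{sec:xgg-1} and Section~\ref{sec:xgg-3}, so I do not expect any real obstacle.

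For the lower bound, I would invoke \cref{cl:S-lb}, which already gives $|S| \geq 2(\mu(G) - \mu(H))$ unconditionally. Plugging in the standing hypothesis $\mu(H) < (\tfrac{2}{3}+\delta)\mu(G)$ yields
\[
    |S| \;>\; 2\bigl(\mu(G) - (\tfrac{2}{3}+\delta)\mu(G)\bigr) \;=\; 2\bigl(\tfrac{1}{3}-\delta\bigr)\mu(G) \;=\; \bigl(\tfrac{2}{3}-2\delta\bigr)\mu(G),
\]
which is the desired lower bound.

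For the upper bound, the starting point is a double counting of the edges of $H_S$: by the definitions of $\bar d(S)$ and $\bar d_S(T)$, we have $|S|\cdot \bar d(S) = m_S = |T|\cdot \bar d_S(T)$, so combined with \cref{cl:muH=T} this gives
\[
    |S| \;=\; \mu(H)\cdot\frac{\bar d_S(T)}{\bar d(S)}.
\]
I would now bound each factor: $\mu(H) < (\tfrac{2}{3}+\delta)\mu(G)$ by the hypothesis; $\bar d_S(T) \leq (1+\epsilon)\beta/2$ by \cref{cl:avgd-T-inS}; and $\bar d(S) \geq (1-\epsilon)\beta/2$ by \cref{cl:avgd-S}. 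Combining these gives
\[
    |S| \;<\; \bigl(\tfrac{2}{3}+\delta\bigr)\cdot\frac{1+\epsilon}{1-\epsilon}\cdot\mu(G).
\]

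To close the calculation I would use the range constraints $\delta > 2\epsilon$ and $\delta < 1/60$ already built into the hypotheses of \cref{lem:EDCS-char}. These give $\epsilon < \delta/2 < 1/120$, so $\tfrac{1+\epsilon}{1-\epsilon} \leq 1 + \tfrac{3\delta}{2}$ with plenty of slack. Then a direct expansion shows
\[
    \bigl(\tfrac{2}{3}+\delta\bigr)\bigl(1 + \tfrac{3\delta}{2}\bigr) \;=\; \tfrac{2}{3} + 2\delta + \tfrac{3\delta^2}{2} \;\leq\; \tfrac{2}{3} + 3\delta,
\]
where the last inequality is equivalent to $\tfrac{3\delta}{2}\leq 1$, which holds since $\delta < 1/60$. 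Hence $|S| \leq (\tfrac{2}{3}+3\delta)\mu(G)$, completing the proof. The only care needed is checking that the constants line up with the range constraints on $\delta$ and $\epsilon$, which is routine.
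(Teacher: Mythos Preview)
Your proposal is correct and follows essentially the same approach as the paper: both halves use the same ingredients (\cref{cl:S-lb} for the lower bound, and the double count $|S|\,\bar d(S)=m_S=|T|\,\bar d_S(T)$ together with \cref{cl:muH=T}, \cref{cl:avgd-S}, \cref{cl:avgd-T-inS} for the upper bound). The only difference is cosmetic bookkeeping in the final constant check, where the paper bounds $\tfrac{1-\epsilon}{1+\epsilon}\geq 1-2\epsilon$ and then $\tfrac{1}{1-2\epsilon}<1+4\epsilon$, while you bound $\tfrac{1+\epsilon}{1-\epsilon}\leq 1+\tfrac{3\delta}{2}$ directly; both routes land at $(\tfrac{2}{3}+3\delta)\mu(G)$.
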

    \begin{myproof}
            For the lower bound, observe that
            $$
                |S|
                \stackrel{\text{\cref{cl:S-lb}}}{\geq}
                2(\mu(G) - \mu(H)) > 2\Big(\mu(G) - (\frac{2}{3} + \delta) \mu(G)\Big) \geq \Big(\frac{2}{3} - 2\delta \Big)\mu(G).
            $$
            For the upper bound, we have
            $$
                \mu(H)
                \stackrel{\text{\cref{cl:muH=T}}}{=}
                |T| = \frac{m_S}{\bar{d}_S(T)} = \frac{|S|\bar{d}(S)}{\bar{d}_S(T)}
                \stackrel{\text{\cref{cl:avgd-S,cl:avgd-T-inS}}}{\geq}
                \frac{|S| (1-\epsilon)\beta/2}{(1+\epsilon)\beta/2} \geq (1-2\epsilon)|S|.
            $$
            Hence, the assumption $\mu(H) < (\frac{2}{3} + \delta) \mu(G)$ implies that 
            $$
            |S| \leq \frac{1}{1-2\epsilon}(\frac{2}{3} + \delta) \mu(G) 
            \stackrel{(\epsilon < 1/120)}{<}
            (1+4\epsilon)(\frac{2}{3} + \delta)\mu(G)
            \stackrel{(\delta > 2\epsilon, 0 < \delta < 1/60)}{<}
            (\frac{2}{3} + 3\delta) \mu(G).
            $$
    \end{myproof}
    
    \begin{claim}\label{cl:gamma<delta}
        If $\mu(H) < (\frac{2}{3} + \delta) \mu(G)$, then $\gamma \leq 2\delta$.
    \end{claim}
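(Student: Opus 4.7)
The plan is a direct algebraic combination of three facts established earlier in this section. Intuitively, a large $\gamma$ in Lemma~\ref{lem:dT-ub} forces $\bar{d}(S)+\bar{d}_S(T)$ to be appreciably smaller than $\beta$; combined with the near-tight lower bound $\bar{d}(S) \geq (1-\epsilon)\beta/2$ from Claim~\ref{cl:avgd-S}, this pushes the ratio $\frac{2\bar{d}(S)}{2\bar{d}(S)+\bar{d}_S(T)}$ of Claim~\ref{cl:ratio-to-muG} strictly above $\frac{2}{3}+\delta$, contradicting the hypothesis.

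Concretely, Lemma~\ref{lem:dT-ub} gives $2\bar{d}(S)+\bar{d}_S(T) \leq \bar{d}(S)+(1-\gamma)\beta$, so by Claim~\ref{cl:ratio-to-muG},
\begin{equation*}
\mu(H) \;\geq\; \frac{2\bar{d}(S)}{\bar{d}(S)+(1-\gamma)\beta}\,\mu(G).
\end{equation*}
Since the map $x \mapsto 2x/(x+c)$ is monotone increasing in $x$ for any fixed $c > 0$, applying the lower bound $\bar{d}(S) \geq (1-\epsilon)\beta/2$ of Claim~\ref{cl:avgd-S} yields
\begin{equation*}
\mu(H) \;\geq\; \frac{2(1-\epsilon)}{3-\epsilon-2\gamma}\,\mu(G).
\end{equation*}
Confronting this lower bound with the hypothesis $\mu(H) < (\tfrac{2}{3}+\delta)\,\mu(G)$ and cross-multiplying by the (positive) quantities $3-\epsilon-2\gamma$ and $3$ isolates $\gamma$.

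The main obstacle is the numerical bookkeeping in this last step. After rearranging, one obtains an inequality of the form $\gamma(4+6\delta) < 9\delta + 4\epsilon - 3\delta\epsilon$, and the final bound $\gamma \leq 2\delta$ is extracted by invoking the standing hypotheses $\delta > 2\epsilon$, $\delta < 1/60$, and $\epsilon < 1/120$ to absorb the $\epsilon$-dependent lower-order terms into a controlled multiple of $\delta$. This is the only step with any delicacy; everything before it is a mechanical substitution of previously-proved inequalities.
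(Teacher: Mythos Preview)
Your route is the same as the paper's: both combine \cref{lem:dT-ub}, \cref{cl:ratio-to-muG}, and the lower bound $\bar d(S)\ge(1-\epsilon)\beta/2$ of \cref{cl:avgd-S}. The paper phrases the contradiction via the lower bound on $\bar d_S(T)$ in \cref{cl:avgd-T-inS} (itself a consequence of \cref{cl:ratio-to-muG} and \cref{cl:avgd-S}), whereas you unwind that intermediate step and compare directly with the hypothesis $\mu(H)<(\tfrac23+\delta)\mu(G)$; the two derivations are equivalent.

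The arithmetic in your final step, however, does not produce the constant $2$. From $\gamma(4+6\delta)<9\delta+4\epsilon-3\delta\epsilon$ together with $\epsilon<\delta/2$ one obtains $\gamma(4+6\delta)<11\delta$, i.e.\ $\gamma<\tfrac{11\delta}{4+6\delta}$. For this to be at most $2\delta$ one would need $11\le 8+12\delta$, i.e.\ $\delta\ge\tfrac14$, contradicting $\delta<\tfrac1{60}$; in the admissible range your inequality only yields $\gamma<\tfrac{11}{4}\delta$. So the extraction you describe does not go through as stated. (For what it is worth, the paper's own chain applies the \emph{upper} bound $\bar d(S)\le(1+3\delta)\beta/2$ from \cref{cl:avgd-S} at a point where a lower bound is required, so the exact constant $2$ appears to be an artifact there as well; the argument genuinely delivers $\gamma\le C\delta$ with $C$ near $2.75$, and the numerical constants in the downstream \cref{cl:S/mid-small,cl:T/mid-small,cl:WcapU-small} would need a correspondingly minor adjustment.)
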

    \begin{myproof}
    Suppose for contradiction that $\gamma > 2\delta$. Then
    \begin{flalign*}
        \bar{d}_S(T) &\leq  (1-\gamma)\beta - \bar{d}(S) \tag{By \cref{lem:dT-ub}}\\
        &< (1-2\delta)\beta - \bar{d}(S)\\
        &\leq (1-2\delta)\beta - (1+3\delta)\beta/2 \tag{By \cref{cl:avgd-S}}\\
        &\leq (1-6\delta)\beta/2.
    \end{flalign*}
    This contradicts \cref{cl:avgd-T-inS} that $\bar{d}_S(T) \geq (1-6\delta)\beta/2$ and proves the claim.
    \end{myproof}
    
    \subsubsection{The Main Characterization}\label{sec:xgg-4}
    
    Having proved the auxiliary claims above and the parametrized guarantee of \cref{lem:dT-ub}, we ready to prove the main characterizations of this section on $|S \setminus \vmid|$, $|T \setminus \vmid$, and $|W \setminus \vlow|$.
    
    \begin{claim}\label{cl:S/mid-small}
        If $\mu(H) < (\frac{2}{3} + \delta) \mu(G)$, then $|S \setminus \vmid| \leq 199 \cdot \delta \mu(G)$.
    \end{claim}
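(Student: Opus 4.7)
The plan is to exploit the smallness of $\gamma$ (guaranteed by \cref{cl:gamma<delta}) together with the fact that $\bar{d}(S)$ sits very close to $\beta/2$ (guaranteed by \cref{cl:avgd-S}). The term $\sum_{v\in S}(d(v) - \bar{d}(S))^2$ appears in the numerator of $\gamma$ in \cref{lem:dT-ub}, so a small $\gamma$ forces the degrees in $S$ to concentrate tightly around $\bar{d}(S)$, and hence around $\beta/2$. But any vertex $v \in S \setminus \vmid$ has $d(v)$ either below $.4\beta$ or above $.6\beta$, so it must contribute a macroscopic $\Omega(\beta^2)$ to this squared-deviation sum. A counting argument then caps how many such vertices can exist.

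Concretely, I would proceed as follows. First, invoke \cref{cl:gamma<delta} to get $\gamma \leq 2\delta$, and then read off the inequality
\[
\sum_{v \in S} (d(v) - \bar{d}(S))^2 \;\leq\; \gamma \, m_S \, \beta \;\leq\; 2\delta \, m_S \, \beta
\]
directly from the definition of $\gamma$ in \cref{lem:dT-ub} (all other terms in the numerator of $\gamma$ are non-negative). Second, use \cref{cl:avgd-S} to pin down $\bar{d}(S) \in [(1-\epsilon)\beta/2, (1+3\delta)\beta/2]$; since $\epsilon < 1/120$ and $\delta < 1/60$, this places $\bar{d}(S)$ within a small additive slack (well under $0.03\beta$) of $\beta/2$. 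Consequently, for every $v \in S \setminus \vmid$, the gap $|d(v) - \bar{d}(S)|$ exceeds a fixed constant fraction of $\beta$ (roughly $0.075\beta$ after absorbing the slack), so $(d(v) - \bar{d}(S))^2 \geq c\,\beta^2$ for an explicit constant $c > 0$.

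Combining these two ingredients gives
\[
|S \setminus \vmid| \cdot c\,\beta^2 \;\leq\; \sum_{v \in S}(d(v) - \bar{d}(S))^2 \;\leq\; 2\delta \, m_S \, \beta \;=\; 2\delta \,\bar{d}(S)\,|S|\,\beta.
\]
Dividing through by $c\beta^2$ and using the upper bound $\bar{d}(S) \leq (1+3\delta)\beta/2$ from \cref{cl:avgd-S} together with the upper bound $|S| \leq (\tfrac{2}{3} + 3\delta)\mu(G)$ from \cref{cl:S-2/3-muG} yields $|S \setminus \vmid| \leq C \delta \mu(G)$ for an explicit constant $C$. The only real work is to be somewhat careful with the arithmetic to verify that, for $\delta < 1/60$ and $\epsilon < 1/120$, the constant $C$ comes out no larger than $199$; with $c = (0.075)^2 = 0.005625$ the bound is on the order of $134 \delta\mu(G)$, leaving a comfortable margin.

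The one step that requires minor care is the lower bound on $|d(v) - \bar{d}(S)|$: since $\bar{d}(S)$ is not exactly $\beta/2$, we must subtract off the $O(\delta + \epsilon)\beta$ slack before squaring, which is why the range constraints $\delta < 1/60$ and $\epsilon < \delta/2$ are needed to keep the constant on the order of $0.075$ rather than $0.1$. Otherwise the argument is a clean two-line variance calculation, and I do not anticipate any conceptual obstacle.
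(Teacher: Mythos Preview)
Your proposal is correct and follows essentially the same approach as the paper: both use \cref{cl:gamma<delta} to cap $\sum_{v\in S}(d(v)-\bar d(S))^2 \le 2\delta\, m_S\,\beta$, use \cref{cl:avgd-S} to place $\bar d(S)$ close enough to $\beta/2$ that each $v\in S\setminus\vmid$ contributes $\Omega(\beta^2)$ to this sum, and finish with \cref{cl:S-2/3-muG} to convert the resulting bound on $|S\setminus\vmid|/|S|$ into one on $|S\setminus\vmid|/\mu(G)$. The only cosmetic difference is that the paper phrases it as a contradiction (assuming $|S\setminus\vmid|>199\delta\mu(G)$ and deriving $\gamma>2\delta$) and uses the slightly looser gap constant $0.07$ instead of your $0.075$, whereas you argue directly; the arithmetic and ingredients are otherwise identical.
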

    \begin{myproof}
    Suppose for contradiction that $|S \setminus \vmid| > 199 \cdot \delta \mu(G)$. If $v \not\in \vmid$, then $d(v) \not\in [.4\beta, .6\beta]$ by definition of $\vmid$. Since $ (1-\epsilon)\beta/2 \leq \bar{d}(S) \leq (1+3\delta) \beta/2$ by \cref{cl:avgd-S} and $\delta < 1/60$ and $\epsilon < 1/120$, we get $.49\beta < \bar{d}_S(T) < .53 \beta$. Thus for all $v \in S \setminus \vmid$, we have $(d(v) - \bar{d}(S))^2 \geq (.07\beta)^2$. Hence, by definition of $\gamma$ in \cref{lem:dT-ub} and the non-negativity of the terms in its numerator, we get that
    \begin{flalign*}
        \gamma &\geq \frac{\sum_{v \in S} (d(v) - \bar{d}(S))^2}{m_S \beta} \geq \frac{|S \setminus \vmid| \cdot (.07 \beta)^2}{m_S \beta} = 
        \frac{|S \setminus \vmid| \cdot (.07 \beta)^2}{|S| \bar{d}(S) \beta}
        \stackrel{\text{\cref{cl:avgd-S}}}{\geq}
        \frac{|S \setminus \vmid| \cdot (.07 \beta)^2}{|S| (1+3\delta) \beta^2/2}\\
        &\geq 2(1-4\delta)(0.07)^2 \frac{|S \setminus \vmid|}{|S|} \geq \frac{1}{110} \cdot \frac{|S \setminus \vmid|}{|S|}. \tag{Since $\delta < 1/60$.}\\
        &\geq \frac{1}{110} \cdot \frac{199 \cdot \delta \mu(G)}{(\frac{2}{3}+3\delta)\mu(G)} \tag{By \cref{cl:S-2/3-muG} and the assumption that $|S \setminus \vmid| > 199 \cdot \delta \mu(G)$.}\\
        &> 2 \delta. \tag{Since $\delta < 1/60$.}
    \end{flalign*}
    But this contradicts \cref{cl:gamma<delta} that $\gamma \leq 2\delta$, completing the proof.
    \end{myproof}
    
    \begin{claim}\label{cl:T/mid-small}
        If $\mu(H) < (\frac{2}{3} + \delta) \mu(G)$, then $|T \setminus \vmid| \leq 600 \cdot \delta \mu(G)$.
    \end{claim}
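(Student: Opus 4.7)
I would mirror the argument of \cref{cl:S/mid-small}: assume for contradiction that $|T \setminus \vmid| > 600\delta\mu(G)$ and drive $\gamma > 2\delta$ via \cref{lem:dT-ub}, contradicting \cref{cl:gamma<delta}. The new difficulty is that the numerator of $\gamma$ penalizes deviations of $d_S(v)$ and membership in $\upT$ for $v\in T$, rather than $d(v)$ itself, so I must relate the condition $d(v)\notin [0.4\beta,0.6\beta]$ back to these quantities.

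I would partition $T \setminus \vmid = T_{\text{low}} \sqcup T_{\text{high}}$ according to whether $d(v) < 0.4\beta$ or $d(v) > 0.6\beta$. For $v \in T_{\text{low}}$, since $d_S(v) \leq d(v) < 0.4\beta$ and \cref{cl:avgd-T-inS} combined with $\delta < 1/60$ gives $\bar{d}_S(T) \geq (1-6\delta)\beta/2 > 0.45\beta$, we have $(d_S(v) - \bar{d}_S(T))^2 > (0.1-3\delta)^2\beta^2$. Plugging this into the second term of the numerator of $\gamma$, combining with $\gamma \leq 2\delta$ from \cref{cl:gamma<delta}, and using $m_S = |S|\bar{d}(S) \leq (\tfrac{2}{3}+3\delta)(1+3\delta)\mu(G)\beta/2$ from \cref{cl:S-2/3-muG,cl:avgd-S}, a direct calculation gives an explicit $O(\delta\mu(G))$ bound on $|T_{\text{low}}|$.

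For $v \in T_{\text{high}}$ I would combine two ingredients. First, the EDCS first condition forces every $H$-neighbor $u$ of $v$ to satisfy $d(u) \leq \beta - d(v) < 0.4\beta$, so $u \notin \vmid$, and in particular every $S$-neighbor of $v$ lies in $S \setminus \vmid$. Summing gives
\[
    \sum_{v \in T_{\text{high}}} d_S(v) \ \leq\ \sum_{u \in S \setminus \vmid} d(u) \ \leq\ |S \setminus \vmid|\cdot \beta \ \leq\ 199\delta\mu(G)\cdot \beta,
\]
where the last step uses \cref{cl:S/mid-small}. I would then split $T_{\text{high}}$ at a threshold $d_S(v) = \tau$ for an appropriately chosen $\tau$: the large-$d_S$ part is bounded directly by dividing the displayed inequality by $\tau$, while for the small-$d_S$ part either $|d_S(v) - \bar{d}_S(T)|$ is large enough to be caught by the second term of $\gamma$'s numerator, or $d(v)/d_S(v) > 1+\alpha$ so that $v \in \upT$ and the third term of $\gamma$'s numerator contributes $\alpha\bar{d}_S(T)^2/4$ per vertex. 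Summing the resulting bounds with that of $T_{\text{low}}$ and tuning $\tau$ should fit under $600\delta\mu(G)$.

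The principal obstacle is the $T_{\text{high}}$ analysis: since $\alpha = 3\sqrt{\delta}$ can be as large as $\approx 0.39$ when $\delta$ is close to $1/60$, neither the deviation $|d_S(v) - \bar{d}_S(T)|$ nor the ratio $d(v)/d_S(v)$ is individually guaranteed to cross its threshold for every $v \in T_{\text{high}}$. The right move is to choose $\tau$ jointly with the $\sum_{v\in T_{\text{high}}} d_S(v)$ bound above so that the $S \setminus \vmid$-charging argument, the second term of $\gamma$, and the $\upT$ third term together cover every $v \in T_{\text{high}}$, losing only absolute constants that the slack in the target bound $600\delta\mu(G)$ can absorb.
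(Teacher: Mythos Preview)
Your approach is correct and in fact more careful than the paper's own argument. The paper does not split $T\setminus\vmid$ at all; it asserts in one line that $d(v)\notin[0.4\beta,0.6\beta]$ together with $\bar d_S(T)\in[0.45\beta,0.51\beta)$ forces $(d_S(v)-\bar d_S(T))^2\ge(0.05\beta)^2$ for every $v\in T\setminus\vmid$, and then mirrors the computation of \cref{cl:S/mid-small} using the second term of $\gamma$'s numerator. That inference is immediate only when $d(v)<0.4\beta$ (since $d_S(v)\le d(v)$); when $d(v)>0.6\beta$ nothing written there prevents $d_S(v)$ from sitting near $\bar d_S(T)$, so the paper's step glosses over precisely the difficulty you flag. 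Your EDCS-based charging of $T_{\text{high}}$ into $S\setminus\vmid$ via \cref{cl:S/mid-small} genuinely addresses this. That said, you are over-engineering the last split: simply take $\tau=0.4\beta$. Then every $v\in T_{\text{high}}$ with $d_S(v)<\tau$ already has $d_S(v)<0.4\beta\le\bar d_S(T)-0.05\beta$ and is caught by the second term of $\gamma$ alongside all of $T_{\text{low}}$, while your displayed inequality directly gives $|\{v\in T_{\text{high}}:d_S(v)\ge\tau\}|\le 199\delta\mu(G)\beta/\tau<498\delta\mu(G)$. No appeal to $\upT$ is needed, and the ``principal obstacle'' you anticipate does not actually arise. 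The resulting constant is roughly $1100$ in place of $600$; this only perturbs absolute constants downstream in \cref{lem:EDCS-char} and leaves the qualitative conclusion intact.
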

    \begin{myproof}
    Suppose for contradiction that $|T \setminus \vmid| > 600 \cdot \delta \mu(G)$. If $v \not\in \vmid$, then by definition of $\vmid$, we have $d(v) \not\in [.4\beta, .6\beta]$. Since by \cref{cl:avgd-T-inS} $ (1-6\delta)\beta/2 \leq \bar{d}_S(T) \leq (1+\epsilon) \beta/2$ and $\delta < 1/60$ and $\epsilon < 1/120$, we get $.45\beta \leq \bar{d}_S(T) < .51 \beta$. Thus for all $v \in T \setminus \vmid$, we have $(d_S(v) - \bar{d}_S(T))^2 \geq (.05 \beta)^2$. Hence, by definition of $\gamma$ in \cref{lem:dT-ub} and the non-negativity of the terms in its numerator,
    \begin{flalign*}
        \gamma &\geq \frac{\frac{1}{2} \sum_{v\in T} (d_S(v) - \bar{d}_S(T))^2}{m_S \beta} \geq
        \frac{|T \setminus \vmid| \cdot (.05 \beta)^2}{2m_S \beta} = 
        \frac{|T \setminus \vmid| \cdot (.05 \beta)^2}{2|S| \bar{d}(S) \beta}
        \stackrel{\text{\cref{cl:avgd-S}}}{\geq}
        \frac{|T \setminus \vmid| \cdot (.05 \beta)^2}{|S| (1+3\delta) \beta^2}\\
        &\geq (1-4\delta) (0.05)^2 \frac{|T \setminus \vmid|}{|S|} \geq \frac{1}{414} \cdot \frac{|T \setminus \vmid|}{|S|} \tag{Since $\delta < 1/60$.}\\
        &\geq \frac{1}{414} \cdot \frac{600 \delta \mu(G)}{(\frac{2}{3} + 3\delta)\mu(G)} \tag{By \cref{cl:S-2/3-muG} and the assumption that $|T \setminus \vmid| > 600 \delta \mu(G)$.} \\
        &> 2\delta. \tag{Since $\delta < 1/60$.}
    \end{flalign*}
    But this contradicts \cref{cl:gamma<delta} that $\gamma \leq 2\delta$, completing the proof.
    \end{myproof}
    
    \begin{claim}\label{cl:WcapU-small}
        If $\mu(H) < (\frac{2}{3} + \delta) \mu(G)$, then $|W \setminus \vlow| \leq 33 \sqrt{\delta} \mu(G)$.
    \end{claim}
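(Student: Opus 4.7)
The plan is a proof by contradiction, mirroring the strategy of \cref{cl:S/mid-small,cl:T/mid-small}: assume $|W \setminus \vlow| > 33\sqrt\delta\,\mu(G)$ and derive a contradiction with $\gamma \leq 2\delta$ of \cref{cl:gamma<delta}. The twist is that neither of the two degree-variance terms in the numerator of $\gamma$ sees vertices of $W$ directly, so I route the argument through $|\upT|$, the third term in the numerator of $\gamma$. Conceptually, the only new idea beyond the two previous claims is that $m_W$ is the natural bridge from the quantity $|W \setminus \vlow|$ we want to bound to the $|\upT|$ term we can bound.

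\textbf{The lower bound on $m_W$.} Each $v \in W \setminus \vlow$ satisfies $d(v) > .2\beta$ by definition of $\vlow$. Moreover $v \in W \subseteq A \cup \Bbar$, and by the definition of $B = N_H(A)$, the graph $H$ has no $A$-to-$\Bbar$ edges; hence every $H$-edge incident to $v$ lies in $H_W$, so $d(v) = d_W(v)$. Summing gives
\[
    m_W \;=\; \sum_{v \in W} d_W(v) \;\geq\; |W \setminus \vlow|\cdot .2\beta \;>\; 6.6\sqrt\delta\,\mu(G)\,\beta.
\]

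\textbf{The matching upper bound on $m_W$.} Writing $m_W = \sum_{v\in T}(d(v)-d_S(v))$ and splitting on $\upT$: for $v\in T\setminus\upT$, the definition of $\upT$ gives $d(v)-d_S(v)\leq \alpha\, d_S(v)$; for $v\in\upT$ I use the trivial $d(v)-d_S(v)\leq d(v)\leq \beta$ (the last step uses the first EDCS condition applied to any edge at $v$). Summing,
\[
    m_W \;\leq\; \alpha\,m_S \;+\; |\upT|\cdot \beta,\qquad \alpha = 3\sqrt\delta.
\]
To bound $|\upT|$, I use the stronger form of the $\upT$-term actually derived inside the proof of \cref{lem:dT-ub}, namely $|\upT|\,\alpha\, \bar d_S(T)^2/4 \leq \gamma\, m_S\beta$, together with $\gamma \leq 2\delta$ from \cref{cl:gamma<delta}. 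This gives
\[
    |\upT| \;\leq\; \frac{8\delta\, m_S\beta}{\alpha\, \bar d_S(T)^2} \;=\; \frac{8\sqrt\delta\, m_S\beta}{3\,\bar d_S(T)^2}.
\]

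\textbf{Finishing.} Plugging in $\bar d_S(T) \geq (1-6\delta)\beta/2 \geq .45\beta$ from \cref{cl:avgd-T-inS} and $m_S = |S|\bar d(S) \leq (\tfrac23+3\delta)(1+3\delta)\mu(G)\beta/2$ from \cref{cl:S-2/3-muG,cl:avgd-S}, a short calculation using $\delta < 1/60$ shows that both $|\upT|\beta$ and $\alpha m_S$ are $O(\sqrt\delta)\mu(G)\beta$ with explicit small constants. Adding them yields roughly $m_W \lesssim 6.1\sqrt\delta\,\mu(G)\beta$, which contradicts the lower bound $m_W > 6.6\sqrt\delta\,\mu(G)\beta$ and forces $|W\setminus\vlow| \leq 33\sqrt\delta\,\mu(G)$. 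The main obstacle is purely arithmetic: keeping careful enough track of the constants $.2$, $.45$, $3$, and $(\tfrac23+O(\delta))(1+O(\delta))/2$ so that the two contributions to $m_W$ indeed fit under the threshold $6.6\sqrt\delta\,\mu(G)\beta$.
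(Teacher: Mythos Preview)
Your proposal is correct and follows essentially the same approach as the paper: both arguments lower bound $m_W$ by $.2\beta\,|W\setminus\vlow|$, upper bound it by $\alpha m_S + |\upT|\beta$, and control $|\upT|$ via the third term of $\gamma$ together with $\gamma\le 2\delta$. The only cosmetic difference is that the paper first isolates the bound $|\upT| < 22\delta|S|/\alpha$ and then converts $|S|$ to $\mu(G)$ at the end, whereas you substitute $m_S \le (\tfrac23+3\delta)(1+3\delta)\mu(G)\beta/2$ earlier; the resulting numerics match (your $\approx 6.08$ vs.\ $6.6$ is exactly the margin the paper's constants leave).
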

    \begin{myproof}
        First, we claim that $|\upT| < 22\delta|S|/\alpha$. Suppose for  contradiction that $|\upT| \geq 22\delta|S|/\alpha$. By definition of $\gamma$ in \cref{lem:dT-ub} and the non-negativity of the terms in its numerator, we have
        \begin{flalign*}
            \gamma &\geq \frac{\frac{1}{4}|\upT| \alpha \bar{d}_S(T)^2}{m_S \beta}
            = 
            \frac{|\upT| \alpha \bar{d}_S(T)^2}{4 |S| \bar{d}(S) \beta}
            \stackrel{\text{\cref{cl:avgd-S,cl:avgd-T-inS}}}{\geq}
            \frac{|\upT| \alpha ((1-6\delta)\beta/2)^2}{4|S|((1+3\delta)\beta/2)\beta}
            =
            \frac{1}{8} \cdot \frac{\alpha (1-6\delta)^2}{(1+3\delta)} \cdot \frac{|\upT|}{|S|}\\
            &\geq \frac{1}{8} \alpha (1-15\delta) \cdot \frac{|\upT|}{|S|}
            \stackrel{(0 < \delta < 1/60)}{\geq}
            \frac{\alpha}{11} \cdot  \frac{\upT}{|S|} \stackrel{(|\upT| \geq 22\delta|S|/\alpha)}{\geq} 2\delta.
        \end{flalign*}
        But this contradicts \cref{cl:gamma<delta} that $\gamma \leq 2\delta$, therefore we must have $|\upT| < 22\delta|S|/\alpha$.

        Now take a vertex $v \in W \setminus \vlow$. We have $d(v) \geq .2\beta$ by definition of $\vlow$. This means that $m_W \geq .2\beta |W \setminus \vlow|$. On the other hand, since any vertex has degree at most $\beta$ in a $(\beta, (1-\epsilon)\beta)$-EDCS, and for any vertex $v \in T \setminus \upT$ we have $d(v) \leq (1+\alpha)d_S(v)$ by definition of $\upT$, we get
        $$
            m_W = \sum_{v \in T} d_W(v) = \sum_{v \in T} (d(v) - d_S(v)) \leq \sum_{v \in T \setminus \upT} \alpha d_S(v) + \sum_{v \in \upT} \beta \leq \alpha m_S + \beta|\upT|.
        $$
        Additionally, since $m_S = |S| \bar{d}(S)$ we get by \cref{cl:avgd-S} that $m_S \leq |S| (1+3\delta)\beta/2$. Combined with our earlier lower bound on $m_W$, this implies that
        $$
            .2 \beta |W \setminus \vlow| \leq \alpha |S|(1+3\delta)\beta/2 + \beta |\upT| \qquad \Leftrightarrow \qquad |W \setminus \vlow| \leq 5 (\alpha (1+3\delta)|S|/2 + |\upT|).
        $$
        Using our earlier bound of $|\upT| < 22\delta|S|/\alpha$, we get that
        $$
            |W \setminus \vlow| \leq 5(\alpha(1+3\delta)|S|/2 + 22\delta|S|/\alpha).
        $$
        Since $\alpha = 3\sqrt{\delta}$ by definition and $0 < \delta < 1/60$, this implies $|W \setminus \vlow| < 46 \sqrt{\delta}|S|$. Given the upper bound of $|S| \leq (\frac{2}{3} + 3\delta)\mu(G)$ in \cref{cl:S-2/3-muG} and since $\delta < 1/60$, we get $|W \setminus \vlow| \leq 46(\frac{2}{3}+3\delta) \mu(G) \leq 33 \sqrt{\delta}\mu(G)$.
    \end{myproof}
    
    As discussed earlier, \cref{cl:S/mid-small,cl:T/mid-small,cl:WcapU-small} together imply \cref{lem:EDCS-char} for bipartite graphs.
\section{Beating Half for General Graphs}\label{sec:general}

Our discussion of \cref{sec:bipartite} crucially relied on the graph $G$ being bipartite. In this section, we prove \cref{thm:main} for general graphs. Our main result of this section is the following semi-dynamic algorithm, akin to \cref{lem:A-for-bipartite}, but now for general graphs.

\begin{lemma}\label{lem:A-for-general-1/2}
    For any $\epsilon > 0$ and any $n$-vertex fully dynamic graph $G$, there is a (randomized) data structure $\mc{A}$ that takes $(\poly\log n)$ worst-case update-time, and upon being queried takes $\widetilde{O}(n/\epsilon^5)$ time to produce a number $\widetilde{\mu}$ such that 
    $
        .5018 \cdot \mu(G) - \epsilon n \leq \E[\widetilde{\mu}] \leq \mu(G).
    $
\end{lemma}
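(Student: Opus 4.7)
The plan is to follow the same blueprint as \cref{lem:A-for-bipartite}. Maintain the adjacency matrix of $G$ alongside a maximal matching $M$ via the dynamic algorithms of \cite{BernsteinFH-SODA19,BehnezhadDHSS-FOCS19} in $(\poly\log n)$ worst-case time per update, and at query time run a sublinear routine based on \cref{prop:sublinear}. The natural analogue of \cref{alg:static} samples $M' \subseteq M$ independently with probability $p$, forms implicitly the bipartite graph $H := G[V(M'), U]$ with $U := V \setminus V(M)$, uses \cref{prop:sublinear} to estimate $\E_\pi[|\GMM{H,\pi}| \mid M']$ up to additive $\epsilon n$ in $\widetilde{O}(n/\epsilon^3)$ time (adjacency access to $H$ costs $O(1)$ per query given the adjacency matrix of $G$ and the explicit sets $V(M'), U$), and returns $\widetilde{\mu} := |M| + \max\{0, \widetilde{g} - |M'|\}$. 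The upper bound $\widetilde{\mu} \leq \mu(G)$ transfers verbatim from \cref{lem:mu-ub}, since that argument uses only that $H$ is bipartite between $V(M')$ and $U$.

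The approximation lower bound is where the new work lies. In the bipartite proof of \cref{lem:mu-lb}, the vertex-disjoint decomposition $H = H_L \cup H_R$ enabled two independent applications of \cref{prop:greedy-random-vertex} and produced the factor $\frac{2p}{1+p}$ in \cref{eq:mtlrc-31999}. For a general graph this decomposition is unavailable, and a single-sided KMM-style bound of only $\frac{p}{1+p}$ is not enough to beat $1/2$. The plan is to $2$-color $V(M)$ into disjoint sets $A$ and $B$ so that every edge of $M$ has one endpoint in each (always possible since $M$ is a matching), apply \cref{prop:greedy-random-vertex} separately on the two bipartite subgraphs $H_A := G[V(M') \cap A, U]$ and $H_B := G[V(M') \cap B, U]$ (the restriction of $V(M')$ to $A$ and to $B$ is an independent Bernoulli-$p$ sample, since distinct edges of $M$ are sampled independently), and then glue the two matchings into one in $H$. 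Randomizing the $2$-coloring, with an independent fair coin per edge of $M$, ensures that every length-$3$ augmenting path of $M \oplus M^\star$ deposits one endpoint edge into $F_A := G[V(M) \cap A, U]$ and one into $F_B := G[V(M) \cap B, U]$ deterministically, while longer augmenting paths split their endpoint edges independently at random; path-by-path this gives $\E[\mu(F_A)] + \E[\mu(F_B)] \geq 2(\mu(G) - |M|)$, matching the bipartite aggregate bound.

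The main obstacle is the conflict at $U$: since $H_A$ and $H_B$ share the entire $U$ side, the sum $|\GMM{H_A}| + |\GMM{H_B}|$ over-counts when viewed inside $H$. The union of the two greedy matchings is bipartite with maximum degree $1$ on the $V(M)$ side and $2$ on $U$, hence a union of paths and even cycles, from which a matching of size at least half its edges can always be extracted. The tight part of the proof will be going beyond this loose factor-$2$ loss to recover an $\alpha \cdot \frac{p}{1+p}$ bound on $\E_{M'}[|\GMM{H,\pi}|]$ with some $\alpha > 1$, which requires a finer analysis of greedy on $H$ beyond the standard $\mu(H) \leq 2\,|\GMM{H,\pi}|$ relation (and over the randomness of the $2$-coloring as well). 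Once such a bound is in hand, substituting into \cref{eq:hgcsii-13289} and optimizing $p$ pins down the constant $.5018$; the extra $1/\epsilon^2$ factor that pushes the query time to $\widetilde{O}(n/\epsilon^5)$ comes from averaging $O(1/\epsilon^2)$ independent trials over the random $2$-coloring and the sample $M'$, which concentrates the estimate and absorbs the Jensen slack from the outer $\max\{0,\cdot\}$.
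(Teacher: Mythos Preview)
Your plan correctly identifies the obstacle but does not resolve it, and the resolution you sketch is not the one the paper uses. The ``finer analysis of greedy on $H$'' that you defer is the whole difficulty: once $H_A$ and $H_B$ share all of $U$, there is no evident way to recover an $\alpha>1+p$ factor in front of $\frac{p}{1+p}(\mu(G)-|M|)$ using only \cref{prop:greedy-random-vertex}, and without $\alpha>1+p$ the estimator $|M|+\max\{0,g-|M'|\}$ does not beat $\tfrac12$ at $|M|=\tfrac12\mu(G)$. Randomizing the $2$-coloring of $V(M)$ does nothing to separate the $U$-side, so the conflict you point out remains.

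The paper takes a different route on essentially every step after the data structures. First, the random bipartition is applied to the \emph{unmatched} side: each vertex of $U=V\setminus V(M)$ is placed in $L$ or $R$ by an independent fair coin (the endpoints of every $M$-edge are split $L/R$ arbitrarily). This makes $H_L:=G_B[V'_L,U_R]$ and $H_R:=G_B[V'_R,U_L]$ genuinely vertex-disjoint, eliminating the $U$-conflict at the price that a length-$3$ augmenting path of $M^\star\oplus M$ survives to $G_B$ only with probability $1/4$. Second, the output statistic is not $g-|M'|$: for each $e=(u,v)\in M'$ the algorithm tracks $q_e=\Pr_\pi[u\in\GMM{H_L,\pi}\text{ and }v\in\GMM{H_R,\pi}\mid M']$ and returns $|M|+\max\{\ell,\sum_{e\in M'}q_e\}$, where $\ell$ is the size of a maximal matching in $G[U]$; each edge contributing to $\sum q_e$ produces a dedicated length-$3$ augmenting path (so the upper bound is immediate), and $\ell$ handles the case that $M$ is only $(\tfrac12-\epsilon)$-approximate rather than maximal. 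Third, the lower bound on $\sum_{e\in M'}q_e$ is not via \cref{prop:greedy-random-vertex} but via the per-vertex bound of \cite{BehnezhadLM-SODA20} (\cref{prop:BLM}), which lower bounds the number of vertices of a fixed matching that get matched by random-greedy after vertex sampling; combining the $H_L$ and $H_R$ bounds edge-by-edge and taking $p=.03$ yields the constant $.5018$. Fourth, the implementation does not call the size oracle of \cref{prop:sublinear}; to estimate $\sum_{e\in M'}q_e$ one must decide whether \emph{specific} vertices are matched, so the paper uses the per-vertex oracle of \cref{prop:given-vertex}, sampling $O(\epsilon^{-2}\log n)$ edges of $M'$ and spending $\widetilde O(n/\epsilon^3)$ expected time per call, which is where the $\widetilde O(n/\epsilon^5)$ arises. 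Your attribution of the extra $1/\epsilon^2$ to averaging over the $2$-coloring/$M'$ is not what happens.
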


The proof of \cref{thm:main} for general graphs follows from \cref{lem:A-for-general-1/2}:

\begin{proof}[Proof of \cref{thm:main} for general graphs]
    Follows by plugging the data structure $\mc{A}$ of \cref{lem:A-for-general-1/2} as the data structure $\mc{A}$ in \cref{lem:A-to-B} and choosing sufficiently small $\epsilon$ such that $.501 \geq .5018 - \epsilon$.
\end{proof}

\subsection{The Semi-Dynamic Algorithm (Proof of \cref{lem:A-for-general-1/2})}

The data structures that we maintain are exactly the same as those in \cref{sec:bipartite}. Namely, we maintain the adjacency matrix of the graph $G$ and a $(1/2-\epsilon)$-approximate matching $M$ of $G$ in $(\poly\log n)$ worst-case update-time \cite{BernsteinFH-SODA19,BehnezhadDHSS-FOCS19} against adaptive adversaries. It remains to show how to produce the number $\widetilde{\mu}$ in $\widetilde{O}(n/\epsilon^5)$ time using these data structures, which is what we focus on in the rest of this section.

The first idea is to define a (random) bipartite subgraph $G_B=(L, R, E_B)$ of $G$. We construct $G_B$ in a way that all edges of $M$ belong to $E_B$. Specifically, for each edge in $M$ we put one of its endpoints in $L$ and the other in $R$ arbitrarily. The rest of the vertices (i.e., $V \setminus V(M)$) are independently and uniformly added either to $L$ or $R$. An edge belongs to $G_B$ iff it belongs to $G$ and it has one endpoint in $L$ and one in $R$. We note that a similar randomization was used in \cite{BehnezhadLM-SODA20}.

\begin{observation}
    $M$ is a maximal matching of $G_B$.
\end{observation}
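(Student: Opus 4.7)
The plan is to verify the two defining properties of a maximal matching: that $M$ is a valid matching in $G_B$, and that no additional edge of $G_B$ can be added to $M$ without breaking the matching property.

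For the first part, I would appeal directly to the construction of $G_B$. By design, for each edge $(x,y) \in M$ the algorithm places exactly one of $x,y$ into $L$ and the other into $R$, so every edge of $M$ satisfies the bipartite-crossing condition and also belongs to $G$, hence lies in $E_B$. Since $M$ is already a matching in $G$, its edges remain vertex-disjoint in $G_B$; so $M \subseteq E_B$ and $M$ is a matching in $G_B$.

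For maximality, I would argue by contradiction: suppose some edge $(u,v) \in E_B$ could be added to $M$, i.e.\ both $u \notin V(M)$ and $v \notin V(M)$. Since $E_B \subseteq E$, the edge $(u,v)$ is also an edge of $G$, and both its endpoints are unmatched by $M$ in $G$ as well. This contradicts the maximality of $M$ in $G$ (the maximal matching maintained by the background dynamic algorithm). Hence no such edge $(u,v)$ exists in $G_B$, which is exactly the statement that $M$ is maximal in $G_B$.

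The observation is essentially definitional and the only content is that the randomized bipartition was carefully designed so that $M$ is guaranteed to survive the restriction to $G_B$: the deterministic placement of $V(M)$ across $L$ and $R$ is what makes the maximality of $M$ transfer for free from $G$ to the subgraph $G_B$. There is no real obstacle here; I would expect the proof to be three or four lines in total. The only thing to be careful about is to rely on $M$ being a genuine maximal matching of $G$ (as maintained by the cited dynamic algorithms), since the transfer of maximality would fail if $M$ were merely an approximate, non-maximal matching.
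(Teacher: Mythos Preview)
Your proposal is correct and follows exactly the same reasoning as the paper: the paper's proof is the one-liner ``Holds since $M \subseteq E_B$, $M$ is a maximal matching of $G$, and $G_B$ is a subgraph of $G$,'' which is precisely what you unpack. Your closing caveat is also apt: the argument genuinely needs $M$ to be maximal in $G$, not merely $(1/2-\epsilon)$-approximate.
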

\begin{myproof}
    Holds since $M \subseteq E_B$, $M$ is a maximal matching of $G$, and $G_B$ is a subgraph of $G$.
\end{myproof}

Note that if we had $\mu(G_B) = \mu(G)$, we could simply run the algorithm of \cref{sec:bipartite} on graph $G_B$. However, $\mu(G_B)$ can be smaller than $\mu(G)$, and so additional ideas are needed.

The following \cref{alg:static-gl} is analogous to \cref{alg:static} of \cref{sec:bipartite}. 

\begin{alglist}{}{alg:static-gl}
\begin{enumerate}[leftmargin=15pt, itemsep=0pt]
    \item Let $M$ and $G_B=(L, R, E_B)$ be as above.
    
    \item Let $M' \subseteq M$ include each edge of $M$ independently with probability $p = .03$.
    
    \item Let $V' := V(M')$ and $U := V \setminus V(M)$.
    
    \item Let $V'_R := V' \cap R$, $V'_L := V' \cap L$, $U_R := U \cap R$, $U_L := U \cap L$.
    
    \item Let $H_R := G_B[V'_R, U_L]$ be the induced bipartite subgraph of $G_B$ between $V'_R$ and $U_L$.
    
    \item Let $H_L := G_B[V'_L, U_R]$ be the induced bipartite subgraph of $G_B$ between $V'_L$ and $U_R$.
    
    \item For $e = (u, v)$, $v \in R$, $u \in L$, let $q_e := \Pr_\pi[v \in \GMM{H_R, \pi}, u \in \GMM{H_L, \pi} \mid M']$ for a random permutation $\pi$.
    
    \item Let $\ell := |\GMM{G[V \setminus V(M)], \pi}|$ for any arbitrary permutation $\pi$.

    \item Return $\widetilde{\mu}' := |M| + \max\{\ell, \sum_{e \in M'} q_e\}$.
\end{enumerate}
\end{alglist}

To implement \cref{alg:static-gl}, we use the following \cref{prop:given-vertex}, which builds on the techniques developed in \cite{behnezhad2021}. See \cref{sec:sublinear-apx} for the proof.

\newcommand{\propGiven}[0]{Let $G=(V, E)$ be an $n$-vertex graph to which we have adjacency matrix query access and let $K \subseteq V$ be an arbitrary subset. For any $\epsilon > 0$ and $v \in K$ chosen u.a.r., there is an algorithm that succeeds with probability $1 - \epsilon n /|K|$ and in $\widetilde{O}(n^2/(\epsilon |K|))$ expected time returns whether $v$ is matched by $\GMM{G, \pi}$, where $\pi$ is a u.a.r. permutation of $E$ drawn by the algorithm. The probabilistic statements depend both on the randomization of $\pi$ and the randomization of $v \sim K$.}

\begin{proposition}[\cite{behnezhad2021}]\label{prop:given-vertex}
    \propGiven{}
\end{proposition}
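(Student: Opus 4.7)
The plan is to adapt the sublinear greedy-matching oracle from \cite{behnezhad2021}. First I would recall the equivalent edge-rank formulation of $\GMM{G,\pi}$: sampling a uniformly random permutation $\pi$ is equivalent to assigning each edge $e \in E$ an i.i.d.\ uniform rank $r(e) \in [0,1]$, and an edge $(u,v)$ belongs to the matching iff no adjacent edge with strictly smaller rank belongs to the matching. This yields a natural recursive characterization: vertex $v$ is matched iff there exists an edge $(v,u)$ of rank $r$ such that $u$ has no matched edge of rank smaller than $r$.

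Building on this, I would design a recursive oracle $\textsc{Match}(v, t)$ that returns the smallest rank $r^\star \le t$ of a matching edge incident to $v$, or $\bot$ if none exists. Its implementation enumerates the edges of $v$ with rank below $t$ in increasing order of rank; for each such edge $(v,u)$ of rank $r$, it recursively calls $\textsc{Match}(u, r)$, and returns $r$ as soon as the recursive call returns $\bot$. Since we only have adjacency-matrix access (not adjacency-list), I would implement the ``next edge of $v$ with rank below $t$'' primitive by repeatedly probing a uniformly random entry of $v$'s row: a probe is an edge with probability $\deg(v)/n$, and a drawn edge has rank below $t$ with probability $t$, so one such edge is discovered in expected $O(n/(t\deg(v)))$ matrix queries, with its rank distributed uniformly in $[0,t]$. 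This lazy sampling exactly couples with the i.i.d.\ rank model.

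The heart of the argument is the running-time analysis. Following \cite{behnezhad2021}, I would define a potential $\Phi(v)$ equal to the total number of adjacency-matrix queries performed by $\textsc{Match}(v, 1)$, and show that $\sum_{v \in V} \E_\pi[\Phi(v)] = \widetilde{O}(n^2/\epsilon)$ after truncating the recursion. The key steps are: (i)~at a recursion node exploring $v$ with current threshold $t$, the expected number of edges discovered is $O(\deg(v)\cdot t \log n)$, because ranks are independent uniforms and the recursion only cares about ranks below $t$; (ii)~the sequence of thresholds along any root-to-leaf path is strictly decreasing i.i.d.\ uniforms, so the expected recursion depth is $O(\log n)$; (iii)~a charging/telescoping argument over the recursion tree converts per-node query costs into an overall bound that scales with the number of potential matrix cells $n^2$ and a $\poly\log n$ factor. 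Truncating any invocation that has already used more than $C \cdot n^2 \log^c n / (\epsilon |K|)$ queries bounds the worst-case runtime by this threshold.

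To pass from the global bound to the per-vertex guarantee, I would average over $v \sim K$: since $K$ is arbitrary,
\[
  \E_{v \sim K}\bigl[\E_\pi[\Phi(v)]\bigr] \;=\; \frac{1}{|K|}\sum_{v \in K}\E_\pi[\Phi(v)] \;\le\; \frac{1}{|K|}\sum_{v \in V}\E_\pi[\Phi(v)] \;=\; \widetilde{O}(n^2/(\epsilon |K|)).
\]
A Markov-style argument on the joint randomness of $v$ and $\pi$ shows that the event $\{\Phi(v) > \epsilon^{-1} \cdot \widetilde{O}(n^2/(\epsilon |K|))\}$ has probability at most $\epsilon n/|K|$; on the complement, the truncated oracle has not halted prematurely and returns the correct answer, while its expected runtime remains $\widetilde{O}(n^2/(\epsilon|K|))$ as claimed.

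The main obstacle I anticipate is step~(iii): proving that the recursion-tree cost telescopes to $\widetilde{O}(n^2/\epsilon)$ globally, rather than something like $\widetilde{O}(m)$ or $\widetilde{O}(n\Delta)$. This requires the refined analysis of \cite{behnezhad2021} that carefully exploits the independence of fresh ranks drawn at each recursive layer to argue that the ``effective branching'' is bounded in expectation despite worst-case degrees being as large as $n$; the $\epsilon$ dependence enters precisely at the truncation step, and handling the adjacency-matrix overhead (the $n/\deg(v)$ factor for edge discovery) needs to be folded into the potential without blowing up the bound. I would invoke the analysis of \cite{behnezhad2021} as a black box for this last step, verifying only that their oracle localizes cost per vertex in a way compatible with averaging over an arbitrary subset $K$.
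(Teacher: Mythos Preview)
Your plan diverges from the paper's actual argument in a way that leaves a real gap. You try to run the recursive greedy-matching oracle of \cite{behnezhad2021} \emph{directly} on $G$ under adjacency-matrix access, discovering edges by random probing of rows. Two things go wrong.

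First, your edge-discovery primitive is not well-defined. The oracle must enumerate the edges of $v$ with rank below $t$ \emph{in increasing rank order} and, crucially, must know when there are no more such edges. Random probing of row entries gives you neither: a freshly probed edge is a uniformly random edge of $v$, not the minimum-rank unexplored one, and there is no stopping rule (if $v$ has few or no edges of rank $<t$, you probe forever). The claim that this ``exactly couples with the i.i.d.\ rank model'' is false as stated.

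Second, and more importantly, you defer the key bound $\sum_{v}\E_\pi[\Phi(v)]=\widetilde{O}(n^2/\epsilon)$ to \cite{behnezhad2021} as a black box, but that paper does \emph{not} prove such a bound for any direct adjacency-matrix oracle. Its adjacency-matrix result (and the paper's proof here) goes through a specific reduction: from $G$ one builds an auxiliary graph $H$ containing two copies $G_1,G_2$ of $G$ together with $10n/\epsilon$ leaves attached to each vertex of $G_2$, arranged so that any adjacency-\emph{list} query to $H$ is answerable with a single adjacency-\emph{matrix} query to $G$. One then runs the adjacency-list oracle on $H$, for which $\sum_{u\in V_H}\E_\pi[T(u,\pi)]=\widetilde{O}(|E_H|)=\widetilde{O}(n^2/\epsilon)$; averaging over $v\in K\subseteq V_1$ yields the claimed expected time. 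The $1/\epsilon$ in the running time comes from the leaf count, not from any truncation. The failure probability is likewise \emph{structural}, not Markov: one proves (the paper's \cref{cl:match-statusAC}) that at most $3\epsilon n$ vertices of $V_1$ have a different match-status in $\GMM{G,\pi}$ versus in the matching the oracle on $H$ actually computes, so a uniformly random $v\in K$ lands on such a vertex with probability at most $3\epsilon n/|K|$. Your proposal omits this reduction entirely, and without it neither the running-time bound nor the correctness guarantee is available to invoke.
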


\begin{lemma}\label{lem:implementation-gl}
    For any $\epsilon > 0$, there is an algorithm that w.h.p. takes $\widetilde{O}(n/\epsilon^5)$ time and returns a number $\widetilde{\mu}''$ such that $\widetilde{\mu}' - \epsilon n \leq \widetilde{\mu}'' \leq \widetilde{\mu}'$. Here $\widetilde{\mu}'$ is the output of  \cref{alg:static-gl}.
\end{lemma}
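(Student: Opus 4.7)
Since $\widetilde{\mu}' = |M| + \max\{\ell, S\}$ with $S := \sum_{e\in M'} q_e$ and $|M|$ stored exactly, the plan is to estimate $\ell$ and $S$ individually, each up to additive error $O(\epsilon n)$, and output $\widetilde{\mu}'' := |M| + \max\{\widetilde{\ell}, \widetilde{S}\}$. Both estimates being lower bounds for their targets will then force $\widetilde{\mu}'' \le \widetilde{\mu}'$, while an $\epsilon n$ gap on each yields $\widetilde{\mu}'' \ge \widetilde{\mu}' - \epsilon n$.

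For $\ell = |\GMM{G[V\setminus V(M)], \pi}|$, I would simulate $O(1)$-time adjacency-matrix access to $G[V\setminus V(M)]$ from the stored adjacency matrix of $G$ plus a membership test on $V(M)$, and invoke \cref{prop:sublinear} to obtain $\widetilde{\ell}$ satisfying $\E_\pi[\ell] - \epsilon n \le \widetilde{\ell} \le \E_\pi[\ell]$ in $\widetilde{O}(n/\epsilon^3)$ time.

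The main task is estimating $S$. The key structural observation is that $H_R$ lives on $(V' \cap R)\cup(U\cap L)$ and $H_L$ on $(V'\cap L)\cup(U\cap R)$, so the two subgraphs have disjoint vertex sets and therefore disjoint edge sets. Hence under a uniformly random permutation $\pi$ of $E(H_R) \cup E(H_L)$, the restrictions of $\pi$ to $E(H_R)$ and $E(H_L)$ are independent uniform random permutations, giving the factorization $q_e = p_v \cdot p_u$ for each $e = (u, v) \in M'$ with $v\in R$ and $u\in L$, where $p_v := \Pr_\pi[v \in \GMM{H_R, \pi}]$ and $p_u := \Pr_\pi[u \in \GMM{H_L, \pi}]$. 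I would then estimate $S = |M'|\cdot \E_{e \sim M'}[q_e]$ by Monte Carlo: sample $k = \Theta(\log n / \epsilon^2)$ edges $e_1, \ldots, e_k$ uniformly at random from $M'$, and for each $e_i = (u_i, v_i)$ form the Bernoulli indicator $B_i := \mathbb{1}[v_i \in \GMM{H_R, \pi_R^{(i)}}] \cdot \mathbb{1}[u_i \in \GMM{H_L, \pi_L^{(i)}}]$ with fresh independent random permutations. Since $\E[B_i \mid e_i] = q_{e_i}$, Hoeffding's inequality then gives $\widetilde{S} := (|M'|/k) \sum_i B_i$ within $\epsilon n$ of $S$ w.h.p.

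The two indicators inside each $B_i$ are evaluated in sublinear time by two invocations of \cref{prop:given-vertex}---one on $H_R$ with $K := V'_R$ and one on $H_L$ with $K := V'_L$---again with $O(1)$-time simulated adjacency-matrix access. The hypothesis that the target vertex be uniform in $K$ is automatic: the $R$-endpoint (respectively $L$-endpoint) of a uniformly random edge of $M'$ is uniform in $V'_R$ (respectively $V'_L$). Under the standing assumption $\mu(G) = \Omega(n)$ we have $|V'_R| = |V'_L| = |M'| = \Theta(n)$ w.h.p.\ (since $|M| \ge (1/2-\epsilon)\mu(G)$ and $p$ is a constant), so with \cref{prop:given-vertex}'s error parameter $\epsilon'$ tuned proportionally to $\epsilon$ each call runs in expected time $\widetilde{O}(n/\epsilon)$ and fails with probability $O(\epsilon)$. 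Defaulting $B_i := 0$ on any failure, a Chernoff bound caps the number of corrupted indicators at $O(\epsilon k)$, contributing an additional $O(\epsilon n)$ error that matches the Hoeffding error. The total cost is dominated by the $2k$ sublinear calls, fitting within the claimed $\widetilde{O}(n/\epsilon^5)$ budget. The hardest part will be balancing \cref{prop:given-vertex}'s per-call failure probability (which scales as $\epsilon' n/|K|$) against its per-call expected runtime (which scales as $1/\epsilon'$) across all $k$ calls; this is exactly what the $\mu(G) = \Omega(n)$ regime enables, since it keeps $|K| = \Theta(n)$ and decouples both budgets from the overall matching size.
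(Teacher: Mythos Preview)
Your approach is essentially the paper's: sample $k=\Theta(\log n/\epsilon^2)$ random edges of $M'$, query \cref{prop:given-vertex} once on $H_R$ and once on $H_L$ per sample (legitimate because the two graphs are vertex-disjoint, so a uniform endpoint of a random $M'$-edge is uniform in the respective $K$), and average. The one genuine gap is your invocation of a ``standing assumption'' $\mu(G)=\Omega(n)$. The lemma, as stated, carries no such hypothesis; \cref{lem:A-to-B} applies the vertex-sparsification reduction \emph{after} the semi-dynamic algorithm is in hand, so \cref{lem:implementation-gl} must work for arbitrary $\mu(G)$. Without it, $|M'|$ need not be $\Theta(n)$, and your bound $|K|=\Theta(n)$ (hence your choice of $\epsilon'\propto\epsilon$ and your per-call time $\widetilde{O}(n/\epsilon)$) collapses.

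The paper's fix is a one-line branch you are missing: if $|M'|\le \epsilon n$ then $\sum_{e\in M'} q_e\le |M'|\le \epsilon n$, so you may set $\widetilde S:=0$ (or just return $|M|$) and already satisfy the two-sided guarantee. Otherwise $|K|=|M'|>\epsilon n$, which is all \cref{prop:given-vertex} needs; but now the failure probability is $\epsilon' n/|K|\le \epsilon'/\epsilon$, so you must take $\epsilon'=\Theta(\epsilon^2)$ (not $\Theta(\epsilon)$) to keep it $O(\epsilon)$, and the per-call expected time becomes $\widetilde O(n^2/(\epsilon'\cdot\epsilon n))=\widetilde O(n/\epsilon^3)$, which across $k$ calls still lands in $\widetilde O(n/\epsilon^5)$. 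Two smaller points: \cref{prop:given-vertex} does not signal failure, so ``default $B_i:=0$ on failure'' is not available---just absorb the $O(\epsilon)$ failure probability into $\E[B_i]$ as a $\pm O(\epsilon)$ shift, exactly as the paper does; and you should mention the standard Markov-plus-$O(\log n)$-parallel-copies trick to convert the expected running time into a w.h.p.\ bound.
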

\begin{myproof}
    Since $M$ is given, we can construct $L$, $R$, $M'$, $V'_R$, $V'_L$, $U_R$, and $U_L$ in $O(n)$ time and also store for each vertex to which one of these sets it belongs.  However, we will not compute $H_R$ or $H_L$ explicitly as they may have $\Omega(n^2)$ edges. Note, however, that any adjacency matrix query to $H_R$ or $H_L$ can be answered in $O(1)$ time since we have adjacency matrix access to $G$ and explicitly have these graphs' vertex sets stored.
    
    Let us condition on the outcome of $M'$ for the rest of the proof. Observe that $\sum_{e \in M'} q_e \leq \sum_{e \in M'} 1 = |M'|$. So if $|M'| \leq \epsilon n$ then returning $\widetilde{\mu}'' = |M|$ proves the lemma. Thus, let us assume $|M'| > \epsilon n$. We do not know how to compute $q_e$ for every edge in $M'$. Instead, we show how to estimate the value of the sum $\sum_{e \in M'} q_e$. 
    
    Let $k = 48\log n / \epsilon^2$. For any $i \in [k]$, we pick an edge $e_i = (v_i, u_i)$ from $M'$ each uniformly at random (with replacement), assuming w.l.o.g. that $u_i \in L, v_i \in R$. For any $i \in [k]$, we run \cref{prop:given-vertex} once on graph $H_R$ for vertex $v_i$ and once on graph $H_L$ for vertex $u_i$ for error parameter $\epsilon' = \epsilon^2/2$. We then let $X_i$ be the indicator of the event that both $v_i$ and $u_i$ are returned to be matched by \cref{prop:given-vertex}. Since $H_R$ and $H_L$ are vertex disjoint by construction, the dependence of $u_i$ and $v_i$ (in that they are both endpoints of the same edge in $M'$) does not affect the guarantees of \cref{prop:given-vertex}. In particular, $u_i$ (resp. $v_i$) is still a vertex chosen u.a.r. from $V'_L$ (resp. $V'_R$). 
    
    Since $|V'_R| = |V'_L| = |M'| \geq \epsilon n$, the set $K$ in our call to \cref{prop:given-vertex} has size $\epsilon n$ at least. Hence, \cref{prop:given-vertex} takes $\widetilde{O}(n^2/(\epsilon' \epsilon n)) = \widetilde{O}(n/\epsilon^3)$ expected time for each $i \in [k]$, and has success probability $1- \epsilon' n /(\epsilon n) = 1-\epsilon/2$. Since we call it $k$ times, the total time-complexity is $\widetilde{O}(n k / \epsilon^2) = \widetilde{O}(n /\epsilon^5)$ in expectation. We will show later how to turn this into high probability.
    
    Since, as discussed, our call to \cref{prop:given-vertex} has failure probability $\leq \epsilon/2$, we get that
    \begin{equation}\label{eq:Mlrg-20}
        \E[X_i] = \frac{1}{|M'|} \sum_{(u, v) \in M'} (q_e \pm \epsilon/2) = \left(\frac{1}{|M'|} \sum_{(u, v) \in M'} q_e \right) \pm \epsilon/2.
    \end{equation}
    Define $X := \sum_i X_i$ and $Q := \frac{X|M'|}{k}$. Since the $X_i$'s are independent (as each call to \cref{prop:given-vertex} generates a fresh random permutation), we get from the Chernoff bound that with probability $1-2n^{-4}$, $X = \E[X] \pm \sqrt{12\E[X] \log n}$. As such, we get that w.h.p.
    \begin{flalign*}
        Q &= \frac{(\E[X] \pm \sqrt{12\E[X] \log n})|M'|}{k} = \frac{(k\E[X_i] \pm \sqrt{12 k\E[X_i] \log n})|M'|}{k}\\
        &= \E[X_i] |M'| \pm .5 \epsilon |M'| \tag{Since $\E[X_i] \leq 1$ and $k = 48 \log n / \epsilon^2$.}\\
        &= \sum_{(u, v) \in M'} q_e \pm \epsilon|M'| \tag{By \cref{eq:Mlrg-20}.}\\
        &= \sum_{(u, v) \in M'} q_e \pm .5 \epsilon n. \tag{Since $|M'| \leq n/2$.}
    \end{flalign*}
    Note also that $\ell$ is simple to approximate within a $(1+\epsilon)$ factor using the algorithm of \cite{behnezhad2021} as black-box. Therefore, returning $\widetilde{\mu}'' = |M| + \max\{\ell, Q\} - .5\epsilon n$ guarantees $\widetilde{\mu}' - \epsilon n \leq \widetilde{\mu}'' \leq \widetilde{\mu}'$ w.h.p.
    
    Since the expected running time is $\widetilde{O}(n/\epsilon^5)$, by Markov's inequality the algorithm terminates in $2 \times \widetilde{O}(n/\epsilon^5)$ time with probability at least $1/2$. Thus, we can run $O(\log n)$ independent instances of the algorithm, and return the output of the one that first terminates. This way, our algorithm w.h.p. terminates in $\widetilde{O}(n/\epsilon^5)$ time. Since the guarantee on $\widetilde{\mu}''$ holds with probability $1-1/\poly(n)$, it should hold for all $O(\log n)$ instances (and so the one that first terminates) still w.h.p.
\end{myproof}

Next, we turn to analyze the approximation ratio of \cref{alg:static-gl}. 

\cref{prop:greedy-random-vertex}, which was used in the proof of \cref{lem:mu-lb}, only gives a lower bound on the size of the matching. For our discussion of this section, however, we need a more fine-tuned bound guaranteed by the following proposition of  \cite{BehnezhadLM-SODA20}.

\begin{proposition}[{\cite[Lemma~5.2]{BehnezhadLM-SODA20}}]\label{prop:BLM}
    Let $0 < p \leq 1$, let $G=(A, B, E)$ be a bipartite graph, let $A' \subseteq A$ include each vertex of $A$ independently with probability $p$, and let $H$ be the induced subgraph of $G$ on vertex-set $A' \cup B$. Fix an arbitrary permutation $\pi$ over the edge-set of $H$ and fix an arbitrary matching $M$ of $G$. Let $X$ be the number of edges in $M$ whose endpoint in $A$ is matched in $\GMM{H, \pi}$; then
    $$
    \E_{A'}[X] \geq p(|M| - 2p|A|).
    $$
\end{proposition}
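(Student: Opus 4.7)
I would decompose the expectation as $\E_{A'}[X] = p|M| - \E_{A'}[Y]$, where $Y$ counts the edges $(a, b) \in M$ for which $a \in A'$ but $a \notin V(\GMM{H, \pi})$. The identity holds because for each $(a,b) \in M$ the event $a \in A'$ occurs with probability exactly $p$, independently across $M$-edges (since $M$ is a matching, its $A$-endpoints are distinct). Proving the proposition thus reduces to showing $\E_{A'}[Y] \le 2 p^2 |A|$.

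\textbf{Charging.} Write $M^\star := \GMM{H,\pi}$. If an edge $(a,b) \in M$ contributes to $Y$, then at the moment $(a,b)$ is processed by greedy, $a$ is unmatched, so by maximality $b$ must already be matched in $M^\star$, say to some $a^\star \in A' \setminus \{a\}$ via an earlier edge $(a^\star, b)$ in $\pi$. Each $Y$-contribution therefore produces a ``witness pair'' of two \emph{distinct} sampled vertices, which is the intended source of the target $p^2$ factor. Letting $V_B(M) := V(M) \cap B$, and using that both $M$ and $M^\star$ are matchings, both $a = a_b$ and $a^\star = a^\star_b$ are uniquely determined by the shared vertex $b \in V_B(M)$, and hence
\[
Y \;\le\; \sum_{b \in V_B(M)} \mathbf{1}\bigl[\,a_b \in A' \;\wedge\; b \in V(M^\star) \;\wedge\; a^\star_b \ne a_b\,\bigr].
\]
Since $a_b \ne a^\star_b$ are two distinct vertices of $A$, their inclusion indicators in $A'$ are independent Bernoulli$(p)$. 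Moreover, injectivity of the map $b \mapsto a^\star_b$ (restricted to matched $b$) bounds the expected number of ``candidates'' for $a^\star_b$ by $\E[|M^\star|] \le \E[|A'|] = p|A|$. Combined with the independent $p$-factor from $a_b \in A'$, this heuristically yields $\E_{A'}[Y] = O(p^2|A|)$.

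\textbf{Main obstacle.} The delicate point is formalizing the above, since the random set $A'$ simultaneously (i)~determines which $M$-edges survive via $a_b \in A'$, and (ii)~governs $M^\star$ itself and hence the identity of $a^\star_b$. These two effects are \emph{positively} correlated---conditioning on $a_b \in A'$ strictly increases $\Pr[b \in V(M^\star)]$, since $(a_b,b)$ becomes an edge of $H$---so one cannot naively multiply probabilities. I expect the cleanest resolution is a deferred-decisions argument: fix the Bernoulli bits on $A' \setminus \{a_b\}$, run greedy on $G[(A' \setminus \{a_b\}) \cup B]$ to obtain a ``base'' matching $M^\star_0$, and only then reveal whether $a_b \in A'$. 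Adding $a_b$ to the sample perturbs the greedy matching along at most a single alternating path originating at $a_b$, which can displace at most one other vertex of $V_B(M)$ from its greedy partner. The factor $2$ in the target bound $2p^2|A|$ should arise from accounting separately for (a)~contributions already present in the base matching $M^\star_0$ (where $b$'s partner $a^\star_b \ne a_b$ does not involve $a_b$'s sample at all), and (b)~the at-most-one additional contribution created by the alternating-path shift induced by inserting $a_b$.
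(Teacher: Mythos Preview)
The paper does not prove this proposition; it is quoted verbatim from \cite[Lemma~5.2]{BehnezhadLM-SODA20} and used as a black box in the proof of \cref{lem:mu-lb-gl}. So there is no in-paper argument to compare against, and I can only comment on the internal soundness of your plan.

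Your decomposition $\E[X]=p|M|-\E[Y]$ and the charging step (each $Y$-contribution forces $b$ to be matched in $M^\star$ to some $a^\star_b\neq a_b$, with both $a_b,a^\star_b\in A'$) are correct. You also correctly identify the obstacle: the identity of $a^\star_b$ is entangled with the coin of $a_b$.

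Where the proposal stalls is in the resolution. Your alternating-path/deferred-decisions outline is more elaborate than necessary, and the split into ``(a) base contributions'' plus ``(b) one extra from the shift'' is not carried to a bound; in particular, the (b) term is a red herring. The clean observation you are one step away from is this: whenever $a_b\in A'$ yet $a_b$ is \emph{unmatched} in $M^\star$, the alternating path you describe has length zero, because no edge of $M^\star$ touches $a_b$; removing $a_b$ from the sample therefore leaves the entire greedy run unchanged, i.e.\ $M^\star=M^\star_0(a_b)$. Hence the event $\{a_b\in A',\,a_b\notin V(M^\star)\}$ is contained in $\{a_b\in A'\}\cap\{b\text{ matched in }M^\star_0(a_b)\}$, and the second event depends only on $A'\setminus\{a_b\}$, so is genuinely independent of the first. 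This gives
\[
\E[Y]\;\le\;p\sum_{(a,b)\in M}\Pr\bigl[b\in V(M^\star_0(a))\bigr]
\;=\;p\sum_{(a,b)\in M}\Pr\bigl[b\in V(M^\star)\mid a\notin A'\bigr]
\;\le\;\frac{p}{1-p}\,\E|M^\star|\;\le\;\frac{p^2}{1-p}\,|A|,
\]
which is at most $2p^2|A|$ for $p\le 1/2$ (and for $p>1/2$ the stated bound is nonpositive, hence trivial). If you additionally invoke the monotonicity your alternating-path remark actually proves---that adding a vertex to the $A$-side can only enlarge the set of matched $B$-vertices---the middle inequality sharpens to $\Pr[b\in V(M^\star_0(a))]\le\Pr[b\in V(M^\star)]$ and you get $\E[Y]\le p^2|A|$, so the factor $2$ is slack.
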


\begin{lemma}\label{lem:mu-lb-gl}
    For \cref{alg:static-gl}, it holds that $\E[\widetilde{\mu}'] \geq .5018 \mu(G)$.
\end{lemma}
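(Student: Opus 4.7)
The plan is to mimic \cref{lem:mu-lb} (the bipartite case) while accounting for two new complications for general graphs: the random bipartization $G_B$ loses a factor of two in the number of augmenting-path endpoint edges surviving in each of $F_R$ and $F_L$, and the matching $M$ is only $(1/2-\epsilon)$-approximate (hence not maximal), so length-$1$ augmenting paths must be handled separately through $\ell$.

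First, I would fix an arbitrary maximum matching $M^\star$ of $G$ and decompose the augmenting paths of $M$ in $M\oplus M^\star$ into $L_1$ paths of length $1$ and $L_{\geq 3}$ paths of length $\geq 3$, so that $\mu(G)=|M|+L_1+L_{\geq 3}$. For the $\ell$-arm of the outer $\max$, note that the length-$1$ augmenting paths form a matching in $G[V\setminus V(M)]$ of size $L_1$; since any maximal matching (in particular $\GMM{G[V\setminus V(M)],\pi}$) is a $1/2$-approximation, this gives $\ell\geq L_1/2$ pointwise.

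For the $\sum_{e\in M'} q_e$-arm, I would exploit two structural observations. First, because $H_R$ and $H_L$ are vertex-disjoint, the random greedy matchings $\GMM{H_R,\pi}$ and $\GMM{H_L,\pi}$ depend on disjoint projections of $\pi$ and are independent conditioned on the bipartization $\sigma$ and the subsample $M'$; this factorizes $q_e=\Pr_\pi[v\in\GMM{H_R,\pi}]\cdot\Pr_\pi[u\in\GMM{H_L,\pi}]$ for each $e=(u,v)\in M'$ with $v\in R, u\in L$. Second, each length-$\geq 3$ augmenting path contributes exactly one endpoint $M^\star$-edge to $F_R$ with probability $1/2$ over the random placement of its unmatched $L$-side endpoint, and symmetrically one to $F_L$, so $\E_\sigma[|M^\star\cap F_R|]=\E_\sigma[|M^\star\cap F_L|]=L_{\geq 3}/2$. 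Applying \cref{prop:BLM} to $F_R$ and $F_L$ with the fixed matchings $M^\star\cap F_R$ and $M^\star\cap F_L$ and taking expectations over $\sigma$ lower bounds $\E[|\GMM{H_R,\pi}|]$ and $\E[|\GMM{H_L,\pi}|]$ by $p(L_{\geq 3}/2-2p|M|)$ each.

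Finally, I would combine the two arms via $\E[\max\{\ell,\sum_{e\in M'} q_e\}]\geq\max\{\E[\ell],\E[\sum_{e\in M'} q_e]\}$, add the deterministic $|M|$ term, and tune $p=.03$ against the tight regime ($|M|\approx\mu(G)/2$, $L_1=0$, all augmenting paths of length~$3$). The main obstacle I expect is obtaining a sharp enough lower bound on $\E[\sum_{e\in M'} q_e]$ in that tight regime: the naive inequality $q_e\geq \Pr_\pi[v\in\GMM{H_R,\pi}]+\Pr_\pi[u\in\GMM{H_L,\pi}]-1$ only yields $\E[\sum q_e]\geq\E[|\GMM{H_R,\pi}|]+\E[|\GMM{H_L,\pi}|]-p|M|$, which discards the strong positive correlation between the two endpoint-matching events (both are boosted by the shared event $e\in M'$) and can even be negative at the tight regime. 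To extract the $.5018$-approximation the proof should keep the product form $\Pr_\pi[v\in\GMM{H_R,\pi}]\cdot\Pr_\pi[u\in\GMM{H_L,\pi}]$ and argue per augmenting path: each length-$3$ augmenting path $(u,v,v',u')$ with $e=(v,v')\in M'$ is completely set up inside $H_R\cup H_L$ with probability $p/4$ over the subsampling and the independent placements of $u,u'$, after which the conditional matching probabilities of $v$ and $v'$ in $\GMM{H_R,\pi},\GMM{H_L,\pi}$ can be bounded via \cref{prop:BLM}.
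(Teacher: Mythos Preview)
Your plan has the right skeleton but misses the two ideas that make the paper's argument go through.

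\textbf{Missing reduction to length-$3$ augmenting paths.} You work directly with the endpoint edges of all length-$\geq 3$ augmenting paths in $M\oplus M^\star$. The problem is that for a path of length $\geq 5$, its two endpoint edges attach to \emph{different} $M$-edges, so knowing that ``the $L$-side endpoint of some $M^\star$-edge is matched in $H_L$'' and ``the $R$-side endpoint of some other $M^\star$-edge is matched in $H_R$'' tells you nothing about any single $q_e$. The paper fixes this by first collecting the $2(\mu(G)-|M|-L_1)$ endpoint edges into a matching $Z$ and observing that $M\oplus Z$ contains at least $|Z|-|M|=2\mu(G)-3|M|-2L_1$ length-$3$ augmenting paths; each such path survives the bipartization with probability $1/4$, yielding a set $\mc{P}$ with $\E|\mc{P}|\geq \tfrac14(2\mu(G)-3|M|-2L_1)$. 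For these length-$3$ paths, both endpoint edges hang off the \emph{same} $M$-edge, which is what makes the rest work.

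\textbf{The inclusion--exclusion you discard is exactly what the paper uses.} You are right that subtracting $|M'|$ (i.e., applying $q_e\geq \Pr[v\text{ matched}]+\Pr[u\text{ matched}]-1$ over all $e\in M'$) gives a negative bound at the tight regime. But the paper does not subtract $|M'|$; it subtracts $|M'_{\mc P}|$, where $M'_{\mc P}=M'\cap\{(u,v):(\cdot,u,v,\cdot)\in\mc P\}$. The point is that when you apply \cref{prop:BLM} to $F_L$ with the fixed matching $M^\star_L=\{(a,u):(a,u,\cdot,\cdot)\in\mc P\}$, the resulting count $X_L$ only records vertices $u$ that are both in $V(M^\star_L)\cap L$ \emph{and} matched in $H_L$; the latter forces $u\in V'_L$, hence $(u,v)\in M'_{\mc P}$. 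Thus $X_L,X_R\leq |M'_{\mc P}|$, and the honest inclusion--exclusion $Y\geq X_L+X_R-|M'_{\mc P}|$ gives
\[
\E_{M'}[Y]\;\geq\;2p(|\mc P|-2p|M|)-p|\mc P|\;=\;p|\mc P|-4p^2|M|.
\]
Since $\E|\mc P|\approx \mu(G)/8$ at the tight regime while $|M|\approx\mu(G)/2$, you are subtracting $p\cdot\mu(G)/8$ rather than $p\cdot\mu(G)/2$, and the bound is positive: with $p=.03$ one gets $\E[\widetilde{\mu}']\geq(\tfrac12+\tfrac{p}{8}-2p^2)\mu(G)\geq .5019\,\mu(G)$ when $L_1=0$. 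There is no need for the product/independence route you sketch; \cref{prop:BLM} is applied once on each side to the restricted matchings $M^\star_L,M^\star_R$, and the $-1$ in the union bound is replaced by $-1$ only over the small set $M'_{\mc P}$.
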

\begin{myproof}
    Fix an arbitrary maximum matching $M^\star$ of $G$. Observe that there are exactly $\mu(G) - |M^\star|$ augmenting paths in $M \oplus M^\star$ for $M$. Denoting the number of length one augmenting paths in $M \oplus M^\star$ by $L_1$, there are exactly $\mu(G) - |M| - L_1$ augmenting paths of length at least three. Each of these augmenting paths has exactly two (endpoint) edges that have one vertex matched in $M$ and one endpoint unmatched in $M$. Putting together these edges, we obtain a matching $Z$ with $2(\mu(G) - |M| - L_1)$ edges, all of which belong to $M^\star$. The number of length two components in $M \oplus Z$ is at most $|M|$. The rest of the components are length three augmenting paths for $M$. Thus, there are at least $|Z| - |M| = 2\mu(G) - 3|M| - 2L_1$ length three augmenting paths for $M$ in $M \oplus Z$. Let $P=(a, u, v, b)$ be one of these leng-three augmenting paths with $(u, v) \in M$ and $(a, u), (v, b) \in M^\star$. Suppose w.l.o.g. that we assign $u \in L$ and $v \in R$ in \cref{alg:static-gl}. Then $P$ remains an augmenting path in $G_B$ if $a \in R$ and $b \in L$, which happens with probability $1/4$. Under this event, we say $P$ {\em survives} to $G_B$. Let $\mc{P}$ be the set of all the length three augmenting paths in $M^\star \oplus M$ that survive to $G_B$, and note that
    \begin{equation}\label{eq:Mlrcu881}
        \E_{L, R}|\mc{P}| \geq \frac{1}{4} (2\mu(G) - 3|M| - 2L_1).
    \end{equation} 
    Define
    \begin{flalign*}
        M^\star_L &:= \{ (a, u) \in M^\star \mid u \in L, a \in R, (a, u, \cdot, \cdot) \in \mc{P} \},\\
        M^\star_R &:= \{ (v, u) \in M^\star \mid v \in R, b \in L, (\cdot, \cdot, v, b) \in \mc{P} \},\\
        M_{\mc{P}} &:= \{(u, v) \in M \mid (\cdot, u, v, \cdot) \in \mc{P} \}.
    \end{flalign*}
    Furthermore, define 
    $$
        F_L := G_B[V(M) \cap L, U_R], \qquad F_R := G_B[V(M) \cap R, U_L].
    $$
    
    Let $\pi$ be any arbitrary permutation of the edges in $E$. Let $X_L$ (resp. $X_R$) denote the number of vertices in $V(M^\star_L) \cap L$ that are matched by $\GMM{H_L, \pi}$ (resp. $\GMM{H_R, \pi}$). Noting that $H_L$ is an induced subgraph of $F_L$ including each of its vertices in its $V(M) \cap L$ part independently from each other with probability $p$, we can apply \cref{prop:BLM} (on graph $F_L$ fixing matching $M^\star_L$) to obtain that
    \begin{equation}\label{eq:gcr-93888}
        \E_{M'}[X_L] \geq p(|M^\star_L| - 2p|V(M) \cap L|) = p(|\mc{P}| - 2p|M|).
    \end{equation}
    With essentially the same proof, we also get that
    \begin{equation}\label{eq:gcr-93888-2}
        \E_{M'}[X_R] \geq p(|M^\star_R| - 2p|V(M) \cap R|) = p(|\mc{P}| - 2p|M|).
    \end{equation}
    
    Now let $M'_\mc{P}$ be the edges in $M_\mc{P}$ that also belong to subsample $M'$ of $M$ in \cref{alg:static-gl}. Let $Y$ be the number of edges $(u, v) \in M'_{\mc{P}}$ where $u$ is matched by $\GMM{H_L, \pi}$ and $v$ is matched by $\GMM{H_R, \pi}$. We have
    $$
    Y \geq |M'_{\mc{P}}| - (|M'_{\mc{P}}| - X_L) - (|M'_{\mc{P}}| - X_R) = X_R + X_L - |M'_{\mc{P}}|.
    $$
    Taking expectation over $M'$, plugging \cref{eq:gcr-93888,eq:gcr-93888-2}, and noting that $\E|M'_\mc{P}| = p|M_\mc{P}| = p|\mc{P}|$, we get
    $$
        \E_{M'}[Y] \geq 2p(|\mc{P}| - 2p|M|) - p|\mc{P}| = p|\mc{P}| - 4p^2|M|.
    $$
    Further taking expectation over the randomization of $L, R$, we get that
    \begin{flalign*}
        \E_{M', L, R}[Y] \geq p \E_{L, R}|\mc{P}| - 4p^2|M| &\stackrel{\cref{eq:Mlrcu881}}{\geq} p \cdot  \frac{1}{4} (2\mu(G)-3|M|-2L_1) - 4p^2|M|\\
        &= \frac{p}{2} \mu(G) - (\frac{3p}{4}+4p^2) |M| - \frac{p}{2}L_1.
    \end{flalign*}
    From this, we get that
    \begin{equation}\label{eq:dgc-21389}
        \E\Big[\sum_{e \in M'} q_e \Big] \geq \E\Big[\sum_{e \in M'_\mc{P}} q_e \Big] = \E[Y] \geq \frac{p}{2} \mu(G) - (\frac{3p}{4}+4p^2) |M| - \frac{p}{2}L_1.
    \end{equation}
    Therefore, we have
    \begin{flalign*}
        \E[\widetilde{\mu}'] &= |M| + \max \left\{\ell, \,\, \E\Big[\sum_{e \in M'} q_e \Big] \right\}\\
        &\geq |M| + \max \left\{\frac{L_1}{2}, \,\, \E\Big[\sum_{e \in M'} q_e \Big] \right\} \tag{Since $\ell$ is the size of a maximal matching in $G[V \setminus V(M)]$ and $\mu(G[V \setminus V(M)]) \geq L_1$.}\\
        &\geq |M| + \max\left\{ \frac{L_1}{2},\,\, \frac{p}{2} \mu(G) - (\frac{3p}{4}+4p^2) |M| - \frac{p}{2}L_1 \right\} \tag{By \cref{eq:dgc-21389}.}\\
        &= \max\left\{ |M| + \frac{L_1}{2} ,\,\, \frac{p}{2}\mu(G) + (1 - \frac{3p}{4}-4p^2)|M| - \frac{p}{2} L_1 \right\}\\
        &\geq (1-\epsilon) \max\left\{ \frac{\mu(G)}{2} + \frac{L_1}{2} ,\,\, (\frac{1}{2} + \frac{1}{8}p - 2p^2) \mu(G) - \frac{p}{2} L_1 \right\} \tag{Since $|M| \geq (1-\epsilon)\mu(G)/2$ and $(1-\frac{3p}{4}-4p^2) = 0.9739 > 0$ as $p=0.03$.}\\
        &\geq (1-\epsilon) \max\left\{ \frac{\mu(G)}{2} + \frac{L_1}{2} ,\,\, .5019 \mu(G) - .015 L_1 \right\} \tag{Since $p = .03$.}\\
        &\geq (1-\epsilon) .5018 \mu(G). \tag{This holds for all values of $L_1$.}
    \end{flalign*}
    This completes the proof.
\end{myproof}

\begin{lemma}\label{lem:mu-ub-gl}
    For \cref{alg:static-gl}, it holds with probability 1 that $\widetilde{\mu}' \leq \mu(G)$.
\end{lemma}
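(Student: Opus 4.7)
The plan is to bound each of the two candidates inside the $\max$ separately, showing that both $|M|+\ell$ and $|M|+\sum_{e\in M'} q_e$ are at most $\mu(G)$. The first bound is immediate: the greedy maximal matching counted by $\ell$ lives in $G[V\setminus V(M)]$, so it is vertex-disjoint from $M$, and its union with $M$ is a matching of $G$ of size $|M|+\ell$. The content of the lemma is therefore in the second bound, which I will prove pointwise for every permutation $\pi$, and then take expectation.

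For the second bound, fix any permutation $\pi$ and condition on $M'$. For each edge $e=(u,v)\in M'$ with $u\in L, v\in R$, let $\mathbb{1}_e(\pi)$ be the indicator of the event that $u$ is matched in $\GMM{H_L,\pi}$ and $v$ is matched in $\GMM{H_R,\pi}$, and let $Y_\pi:=\sum_{e\in M'}\mathbb{1}_e(\pi)$, so that $\E_\pi[Y_\pi\mid M']=\sum_{e\in M'}q_e$. When $\mathbb{1}_e(\pi)=1$, write $a_\pi\in U_R$ for the $\GMM{H_L,\pi}$-partner of $u$ and $b_\pi\in U_L$ for the $\GMM{H_R,\pi}$-partner of $v$; then $(a_\pi,u,v,b_\pi)$ is a length-three augmenting path for $M$ in $G$ (since $a_\pi,b_\pi\notin V(M)$ and all three edges lie in $G_B\subseteq G$).

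The key step is to check that, across distinct contributing $e\in M'$, these augmenting paths are pairwise vertex-disjoint. Distinct $e$'s have distinct middle endpoints $u,v$ because $M'\subseteq M$ is a matching; the $a_\pi$'s are distinct across $e$'s because they are endpoints of distinct edges of the matching $\GMM{H_L,\pi}$, and likewise the $b_\pi$'s are distinct because they come from $\GMM{H_R,\pi}$. Finally, the $L$-side vertices $\{u,b_\pi\}\subseteq L$ never collide with the $R$-side vertices $\{v,a_\pi\}\subseteq R$, and within a side, middle vertices ($u\in V(M)\cap L$, $v\in V(M)\cap R$) are disjoint from endpoint vertices ($b_\pi\in U_L$, $a_\pi\in U_R$), which follows from $H_L$ and $H_R$ being defined on disjoint vertex sets $(V'_L\cup U_R)$ and $(V'_R\cup U_L)$.

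Consequently, simultaneously augmenting $M$ along all $Y_\pi$ of these vertex-disjoint length-three paths yields a matching of $G$ of size exactly $|M|+Y_\pi$, so $\mu(G)\ge |M|+Y_\pi$ holds pointwise in $\pi$. Taking expectation over $\pi$ with $M'$ fixed gives $\mu(G)\ge |M|+\sum_{e\in M'}q_e$, and combining this with the elementary bound $\mu(G)\ge |M|+\ell$ yields $\mu(G)\ge |M|+\max\{\ell,\sum_{e\in M'}q_e\}=\widetilde{\mu}'$. No real obstacle appears; the only thing to be careful about is the case analysis verifying vertex-disjointness of the augmenting paths, which relies crucially on the bipartition of $G_B$ used to define $H_L$ and $H_R$.
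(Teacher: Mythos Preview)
Your proof is correct and follows essentially the same approach as the paper: construct, for each $e\in M'$ with both endpoints matched in the respective greedy matchings, a length-three augmenting path for $M$, verify these paths are vertex-disjoint, and conclude $\mu(G)\ge |M|+Y_\pi$. The only cosmetic difference is that you establish the pointwise bound for every $\pi$ and then take expectation, whereas the paper invokes the probabilistic method (some $\pi$ achieves $|\mc{Y}|\ge \sum_{e}q_e$); both are valid since $\mu(G)$ is deterministic. You are also more explicit than the paper about the $\ell$ bound and the vertex-disjointness case analysis, both of which the paper leaves implicit.
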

\begin{proof}
    Condition on the outcome of $M'$ in \cref{alg:static-gl}. Then run the process of constructing matchings $\GMM{H_R, \pi}$ and $\GMM{H_L, \pi}$ for a random $\pi$. Define
    $$
        \mc{Y} = \{(a, u, v, b) \mid (a, u) \in \GMM{H_L, \pi}, (u, v) \in M, (v, b) \in \GMM{H_R, \pi}\}.
    $$
    Note that $\mc{Y}$ is a collection of length-three augmenting paths for $M$. Hence, we can apply all of them at the same time on $M$. This implies that
    $
        \mu(G) \geq |M| + |\mc{Y}|.
    $
    Moreover, we have 
    $$
        \E_\pi[|\mc{Y}| \mid M'] = \sum_{e \in M'} \Pr[v \in \GMM{H_R, \pi}, u \in \GMM{H_L, \pi} \mid M'] = \sum_{e \in M'} q_e.
    $$
    Given this expected value, there must be a choice of $\pi$ with $|\mc{Y}| \geq \sum_{e \in M'} q_e$. This suffices to show
    $$
        \mu(G) \geq |M| + \sum_{e \in M'} q_e = \widetilde{\mu}'. \qedhere
    $$
\end{proof}

We are now ready to complete the proof of \cref{lem:A-for-general-1/2}.

\begin{proof}[Proof of \cref{lem:A-for-general-1/2}]
    The data structures that we store, as discussed, take only $(\poly\log n)$ worst-case time to maintain against an adaptive adversary. When the algorithm is queried, we return the output $\widetilde{\mu}''$ of \cref{lem:implementation-gl}. It takes $\widetilde{O}(n/\epsilon^5)$ time to produce this by \cref{lem:implementation-gl}, which is the desired query time of \cref{lem:A-for-general-1/2}. Moreover, for the approximation ratio, we have
    $$
        .5018 \mu(G) - \epsilon n
        \stackrel{\text{\cref{lem:mu-lb-gl}}}{\leq}
        \E[\widetilde{\mu}'] - \epsilon n
        \stackrel{\text{\cref{lem:implementation-gl}}}{\leq} 
        \E[\widetilde{\mu}''] \stackrel{\text{\cref{lem:implementation-gl}}}{\leq} 
        \E[\widetilde{\mu}']
        \stackrel{\text{\cref{lem:mu-ub-gl}}}{\leq}
        \mu(G).
    $$
    This completes the proof of \cref{lem:A-for-general-1/2}.
\end{proof}

\bibliographystyle{plainnat}
\bibliography{references}

\appendix
\section{Needed Sublinear Algorithms From \cite{behnezhad2021}}\label{sec:sublinear-apx}

In our proofs, we used \cref{prop:sublinear,prop:given-vertex} which are implied by a result of the author in \cite{behnezhad2021}. In this section, we prove why these propositions follow from \cite{behnezhad2021}.

In its Section~4, \cite{behnezhad2021} gives an algorithm that works in the adjacency list model. The algorithm is then adapted to the adjacency matrix model using a reduction that is provided in \cite[Section~5]{behnezhad2021}. The reduction works as follows. Let $G=(V, E)$ be the graph to which we have adjacency matrix access. A graph $H=(V_H, E_H)$ is then defined based on $G$ in such a way that any adjacency list query to $H$ can be answered with a single adjacency matrix query to $G$. The graph $H$ has two vertex disjoint copies $G_1=(V_1, E_1)$ and $G_2=(V_2, E_2)$ of  $G$ with a number of edges added between $G_1$ and $G_2$. Additionally each vertex in $V_2$ has $10n/\epsilon$ leaves adjacent to it where $n = |V|$. 

Partition $V_2$ into $V'_2$ and $V''_2 := V_2 \setminus V'_2$ such that $V'_2$ includes a vertex $v \in V_2$ iff the lowest rank edge of $v$ according to $\pi$ is to a vertex in $V_1 \cup V_2$ (i.e., not to a leaf of $v$). Let $A := \GMM{H[V_1], \pi} = \GMM{G, \pi}$, $B := \GMM{H, \pi} \cap ((V_1 \times V_1) \cup (V_1 \times V'_2))$, and $C = B \cap (V_1 \times V_1)$.

\begin{observation}\label{obs:mmlrg-2103}
    Any vertex in $V$ belongs to $V'_2$ with probability at most $\epsilon$.
\end{observation}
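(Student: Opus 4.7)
The plan is to bound, for the copy in $V_2$ of any fixed vertex $v \in V$, the probability that its minimum-rank incident edge under $\pi$ goes to $V_1 \cup V_2$ rather than to one of its $10n/\epsilon$ leaves. Since $\pi$ is a uniformly random permutation of the edge set $E_H$ of $H$, the identity of the lowest-ranked edge among those incident to a fixed vertex is uniformly distributed over the set of edges incident to that vertex. So the observation reduces to a simple counting ratio between the number of ``non-leaf'' edges incident to (the $V_2$-copy of) $v$ and the total degree of that vertex in $H$.

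First I would enumerate the edges of $H$ incident to the $V_2$-copy of $v$. By the construction recalled in the excerpt, $H$ consists of two vertex-disjoint copies $G_1$ and $G_2$ of $G$, some edges added between $V_1$ and $V_2$, and $10n/\epsilon$ leaves attached to each vertex of $V_2$. Therefore the copy of $v$ in $V_2$ has at most $n-1$ neighbors inside $V_2$ coming from $G_2$, plus at most $n$ neighbors in $V_1$ coming from the cross edges between $G_1$ and $G_2$, plus exactly $10n/\epsilon$ leaves. In particular, the number of edges incident to $v$ that are not to leaves is bounded by $2n$.

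Next I would use the uniformity of $\pi$. Conditioning on the set of edges incident to this copy of $v$, the edge with minimum rank is uniformly chosen among them, so
\[
\Pr\!\left[v \in V'_2\right] \;=\; \frac{\#\{\text{non-leaf edges at }v\}}{\#\{\text{all edges at }v\}} \;\le\; \frac{2n}{2n + 10n/\epsilon} \;=\; \frac{\epsilon}{5+\epsilon} \;<\; \epsilon,
\]
which gives the claimed bound. This step, and the observation as a whole, are routine: there is no real obstacle, since the construction of $H$ has been engineered precisely so that each vertex of $V_2$ is dominated by its leaves, making it unlikely to lie in $V'_2$. The only thing to be careful about is to use the correct count of non-leaf edges incident to a vertex in $V_2$ (at most $2n$, not just $n$), which still leaves ample slack relative to the $10n/\epsilon$ leaf edges for any $\epsilon \le 1$.
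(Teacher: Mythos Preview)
Your proposal is correct and takes essentially the same approach as the paper: both bound the probability by the ratio of non-leaf edges (at most $2n$) to total incident edges (at least $2n + 10n/\epsilon$), yielding $\frac{2n}{10n/\epsilon + 2n} < \epsilon$. Your write-up simply adds more detail on why the minimum-rank edge is uniform over incident edges and on the edge count, which the paper leaves implicit.
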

\begin{myproof}
    For a random permutation $\pi$ over $E_H$, the lowest rank edge of each vertex $v \in V_2$ goes to $V_1 \cup V_2$ with probability at most $\frac{2n}{10n/\epsilon + 2n} < \epsilon$.
\end{myproof}

\vspace{-0.2cm}
Define the {\em match-status} of a vertex $v$ in a matching $M$ to be the indicator $\pmb{1}(v \in V(M))$.

\vspace{-0.2cm}
\begin{claim}\label{cl:match-statusAC}
    For a random $\pi$, there are, in expectation, at most $3\epsilon n$ vertices in $V_1$ whose match-status for $A$ is different from $C$.
\end{claim}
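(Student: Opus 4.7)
The plan is to reduce the claim to a clean fact about greedy matchings under vertex removal. First, I would observe that $C$, viewed as a matching on $V_1$, is exactly $\GMM{G - D, \pi}$ (with $\pi$ restricted to the remaining edges), where $D \subseteq V_1$ is the (random) set of $V_1$-vertices matched to a $V_2$-vertex in $\GMM{H, \pi}$. This holds because a $V_1 \times V_1$ edge is added during the execution of $\GMM{H, \pi}$ iff, at its processing time, both endpoints are still free, which is equivalent to neither endpoint having been claimed earlier by another $V_1 \times V_1$ edge and neither endpoint lying in $D$. Hence the $V_1 \times V_1$ portion of $\GMM{H, \pi}$ equals the greedy matching that skips every $D$-incident edge, i.e., $\GMM{G - D, \pi}$.

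Next I would prove a deterministic vertex-removal lemma: for any graph $G'$, permutation $\pi'$, and single vertex $d \in V(G')$, $|V(\GMM{G', \pi'}) \triangle V(\GMM{G' - d, \pi'})| \leq 2$. The proof traces the two greedy executions in parallel. If $d$ is unmatched in $\GMM{G', \pi'}$, then every $d$-incident edge was skipped in $G'$ (its other endpoint was matched first), so both executions evolve identically and the matchings coincide, giving zero differences. Otherwise, let $(d, v)$ be the unique edge matching $d$ in $\GMM{G', \pi'}$; before the time of $(d, v)$ the two executions agree, and after that point they diverge along an alternating cascade $v, w_1, w_2, \ldots, w_\ell$ in which consecutive $w_i$ swap match partners between the two executions. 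A direct case analysis shows that every intermediate vertex of the cascade is matched in both executions (just to different partners), so only $d$ and the terminal vertex $w_\ell$ end up with differing match-status. Iterating this lemma, removing the elements of $D$ one at a time in any order, and applying the triangle inequality for symmetric difference then yields $|V(A) \triangle V(C)| \leq 2|D|$ deterministically for every $\pi$.

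Finally, I would take expectation. Since each $d \in D$ is matched via a distinct edge to some $V_2$-vertex whose lowest-rank edge goes to $V_1 \cup V_2$ (otherwise that $V_2$-vertex would have been claimed by a leaf first and thus unavailable to $V_1$), the corresponding $V_2$-vertex lies in $V'_2$; hence $|D| \leq |V'_2|$. Combining with $\E|V'_2| \leq \epsilon |V_2| = \epsilon n$ from \cref{obs:mmlrg-2103}, we conclude $\E|V(A) \triangle V(C)| \leq 2\E|D| \leq 2\epsilon n \leq 3\epsilon n$, as desired. Since $V(A), V(C) \subseteq V_1$, this bound applies directly to $V_1$-vertices with differing match-status.

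The main obstacle will be the vertex-removal lemma itself: carefully verifying that the alternating cascade terminates with exactly two status changes requires checking, at each step, that the swapped intermediate vertex stays matched in both executions and that the divergence cannot branch in either execution (which would happen only if a second $d$-incident edge created an independent disturbance, but $d$ appears in at most one matched edge). Once this single-vertex bound is in place, the iteration via the triangle inequality and the expectation bound via \cref{obs:mmlrg-2103} are routine.
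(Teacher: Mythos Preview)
Your proof is correct, but it takes a genuinely different route from the paper's.

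The paper argues directly about the symmetric difference $S = A \oplus B$: it shows that every connected component of $S$ is a path with at least one endpoint in $V'_2$ (by looking at the lowest-rank edge in the component), so there are at most $|V'_2|$ components and hence at most $2|V'_2|$ vertices whose match-status differs between $A$ and $B$; passing from $B$ to $C$ removes the $V_1 \times V'_2$ edges and costs at most another $|V'_2|$ vertices, for a total of $3|V'_2|$.

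You instead identify $C$ with $\GMM{G-D,\pi}$ for the set $D$ of $V_1$-vertices matched into $V_2$, then invoke a general single-vertex-removal stability lemma for greedy matchings and iterate over $D$, combining with $|D|\le |V'_2|$. This is more modular (the removal lemma is a clean reusable statement) and even yields the slightly sharper constant $2\epsilon n$. The paper's argument is more self-contained for this particular construction and avoids the identification step. One small point: your justification that $C=\GMM{G-D,\pi}$ is slightly compressed; the stated equivalence (``both endpoints free $\Leftrightarrow$ neither claimed by an earlier $V_1\times V_1$ edge and neither in $D$'') is correct, but deducing from it that the $V_1\times V_1$ edges of $\GMM{H,\pi}$ coincide with $\GMM{G-D,\pi}$ still needs the obvious induction showing the two processes agree on the $V_1\times V_1$ edges added so far. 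This is routine, so the proposal stands.
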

\begin{myproof}
    Define $S = A \oplus B$. We claim that any component of $S$ must be a path with at least one endpoint in $V'_2$. To see this, take an arbitrary component $C$ of $S$. The lowest rank edge $e \in C$ must have an endpoint in $V'_2$ or else we should have $e \in A, B$ as $e$ belongs to both $H[V_1]$ and $H$ and has no lower rank edge in either $A$ or $B$, which contradicts $e \in S$. This edge $e$ should be an endpoint of the path $C$, since any vertex in $V'_2$ has degree at most 1 in $S$ (as it cannot be matched in $A$). From this, we get that the total number of connected components in $S$ is at most $|V'_2|$. As a result, there are at most $2|V'_2|$ vertices with a different match-status in $A$ and $B$.
    
    Since $C$ is a sub-matching of $B$, excluding its edges that go from $V_1$ to $V'_2$, any vertex $v$ with a different match-status in $A$ and $C$ must either have a different match-status in $A$ and $B$, or it should be matched in both $A$ and $B$, but its match in $B$ is through a $V_1 \times V'_2$ edge. So, in total, at most $2|V'_2| + |V'_2|$ vertices may have a different match-status in $A$ and $C$. This completes the proof since $\E|V'_2| \leq \epsilon n$ by \cref{obs:mmlrg-2103}.
\end{myproof}

\vspace{-0.5cm}
\begin{proof}[Proof of \cref{prop:sublinear}]
It is shown in the proof of \cite[Lemma~5.4]{behnezhad2021} that one can w.h.p. find an estimate of $\E|C|$ (denoted $M_1(\pi)$ in \cite{behnezhad2021}) with an additive error of $O(\epsilon n)$ in $\widetilde{O}(n/\epsilon^3)$ time. \cref{cl:match-statusAC} above shows that $\E|C| = \E|A| \pm O(\epsilon n) = \E|\GMM{G, \pi}| \pm O(\epsilon n)$. Subtracting a sufficiently large additive factor of $O(\epsilon n)$ from the output gives \cref{prop:sublinear}.
\end{proof}


\vspace{-0.5cm}

\begin{proof}[Proof of \cref{prop:given-vertex}]
    It is shown in \cite[Eq (9)]{behnezhad2021} that for any arbitrary vertex $u \in V_H$, one can determine which edge of $u$ (if any) belongs to \GMM{H, \pi} in time $T(u, \pi)$ where $\sum_{u \in V_H} \E_\pi[T(u, \pi)] = \widetilde{O}(n^2 / \epsilon)$. Thus, for $v \sim K$ chosen u.a.r., 
    $
    \E_{\pi, v}[T(v, \pi)] = \frac{1}{|K|} \sum_{v \in K} \E_\pi[T(v, \pi)] = O(n^2/(\epsilon|K|)),
    $ which is the claimed time. Since the edge of $v$ in $\GMM{H, \pi}$ is also given (if any) by this process, we can determine if it belongs to $C$. By \cref{cl:match-statusAC}, at most $3\epsilon n$ vertices in $V_1$ (and so in $K$) have a different match-status in $C$ and $A = \GMM{G, \pi}$. Thus, probability (taken over $v \sim K$) of choosing a vertex that is not among these $\epsilon n$ out of $K$ is at least $(|K| - 3\epsilon n)/|K| = 1 - O(\epsilon n /|K|)$.
\end{proof}

\section{Proof of \cref{lem:A-to-B}}\label{sec:A-to-B-proof}

In this section, we argue why \cref{lem:A-to-B} holds. Let us restate the lemma first.

\noindent \textbf{\cref{lem:A-to-B} (restated).} {\em \lemAtoBstatement{}}

Let us prove the lemma step by step. First, we prove the following lemma which guarantees all the desired properties of algorithm $\mc{B}$, except that instead of a multiplicative approximation, it achieves a multiplicative-additive $(\alpha, \epsilon n)$-approximation.

\begin{lemma}\label{lem:xggg-t210}
    Let algorithm $\mc{A}$ be as in \cref{lem:A-to-B}. There is an algorithm $\mc{C}$ that maintains a number $\widetilde{\mu}_c$ such that at any point during the updates, it holds w.h.p. that $\alpha \mu(G) - O(\epsilon n) \leq \widetilde{\mu}_c \leq \mu(G)$. Algorithm $\mc{C}$ takes $O\left(\left(U(n) + \frac{Q(n, \epsilon)}{n} \right) \poly(\log n, 1/\epsilon) \right)$ worst-case update-time. Moreover, if $\mc{A}$ works against adaptive adversaries, then so does $\mc{C}$.
\end{lemma}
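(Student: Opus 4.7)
The plan is to derive $\mc{C}$ from $\mc{A}$ through three standard components: parallel repetition to upgrade the one-shot in-expectation guarantee of $\mc{A}$ into a high-probability one, a lazy recomputation schedule that exploits the fact that $\mu(G)$ changes by at most one per update, and the standard ``spreading'' trick that converts amortized cost into worst-case cost.

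First, maintain $k = \Theta(\log n / \epsilon^2)$ independent copies $\mc{A}_1, \dots, \mc{A}_k$ of $\mc{A}$ in parallel, all fed the same update stream; the total maintenance cost is $k \cdot U(n) = U(n) \poly(\log n, 1/\epsilon)$. Whenever a fresh estimate is required, query each copy to obtain $\widetilde{\mu}_1, \dots, \widetilde{\mu}_k$ and set $\bar{\mu} := \frac{1}{k}\sum_{i=1}^k \widetilde{\mu}_i - \epsilon n$. Because each $\widetilde{\mu}_i \in [0, n]$ is independent with a common mean $\mu^\star$ satisfying $\alpha \mu(G) - \epsilon n \le \mu^\star \le \mu(G)$, Hoeffding's inequality gives $\bigl|\bar{\mu} + \epsilon n - \mu^\star\bigr| \le \epsilon n$ with probability $1 - n^{-c}$ for any desired constant $c$. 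So at the instant of the query we have $\alpha\mu(G) - 3\epsilon n \le \bar{\mu} \le \mu(G)$ w.h.p., with the upper bound being exactly what the $-\epsilon n$ correction buys.

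Second, I would issue a fresh query only once every $T := \lfloor \epsilon n \rfloor$ updates and reuse the latest completed estimate in between. Since $\mu(G)$ changes by at most $T = \epsilon n$ over such a window, the estimate computed at the window's start remains a valid $(\alpha, O(\epsilon n))$-approximation throughout. The third ingredient is the spreading trick of \cite{GuptaPeng-FOCS13}: at the start of each window begin computing the next query while still using the previous estimate, performing $\Theta(kQ(n,\epsilon)/T)$ work per update so that the query completes strictly before the window ends and can then be swapped in as the new output. To make the spread-over-time query logically well-defined even as $G$ keeps changing, I would run it against a parallel ``frozen'' copy of $\mc{A}$ whose updates are buffered during the window and then flushed at its end; this does not change the asymptotic per-update bound. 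Summing everything, the worst-case per-update cost is $\bigl(U(n) + Q(n,\epsilon)/n\bigr)\poly(\log n, 1/\epsilon)$, as required.

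The main obstacle I foresee is preserving adaptivity. Under an adaptive adversary the output $\widetilde{\mu}_c$ can influence subsequent updates, so the internal randomness of the $\mc{A}_i$'s must remain ``safe'' in the face of this feedback. This is exactly the premise that $\mc{A}$ is adaptive-robust: each individual $\mc{A}_i$ continues to satisfy its expectation guarantee on the current graph no matter how the adversary responds, and the Hoeffding step only requires independence of the $\widetilde{\mu}_i$'s conditional on the query-time graph, which holds since the $\mc{A}_i$'s draw their coins independently. A small additional care is that the buffered/frozen copy used during spreading must itself be adaptive-robust, but this again follows directly from the robustness of $\mc{A}$. Thus $\mc{C}$ inherits adaptivity from $\mc{A}$, completing the plan.
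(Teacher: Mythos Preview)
Your proposal is correct and follows essentially the same three-ingredient approach as the paper: the lazy recomputation over $\epsilon n$-length windows, the spreading trick of \cite{GuptaPeng-FOCS13} to de-amortize, and averaging over independent copies to boost the in-expectation guarantee to high probability. Your version is in fact slightly more careful than the paper's in two places --- you correctly take $k=\Theta(\log n/\epsilon^2)$ copies (the paper loosely writes $O(\log n)$, which is absorbed into the $\poly(\log n,1/\epsilon)$ anyway), and you spell out the frozen/buffered-copy mechanism that makes the spread query well-defined --- but the argument is otherwise identical.
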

\begin{myproof}
    We run algorithm $\mc{A}$ in the background, paying a worst-case update-time of $U(n)$ because of it. We then take the lazy approach. We query $\mc{A}$ to produce the estimate $\widetilde{\mu}$, return $\widetilde{\mu}' := \widetilde{\mu} - 2\epsilon n$ as our output, then we do not change the output for the next $\epsilon n$ updates, and repeat the same process. Because every edge update can change the size of the maximum matching by at most one (even against adaptive adversaries), it will hold at all times that $\alpha \mu(G) - 3\epsilon n \leq \E[\widetilde{\mu}'] \leq \mu(G) - \epsilon n$. Moreover, because we query algorithm $\mc{A}$ every $\epsilon n$ updates, the overall amortized update time of the algorithm is $O(U(n) + \frac{Q(n, \epsilon)}{\epsilon n})$. It remains to $(i)$ turn the amortized update-time bound to worst-case, and  $(ii)$ turn the expected approximation bound to a high probability bound.
    
    Let us address $(i)$ first. This can be done using a well-known `spreading' idea (see \cite{GuptaPeng-FOCS13} for more details). Instead of executing the oracle of algorithm $\mc{A}$ over one update, we spread it over multiple updates. More precisely, we spread the $Q(n, \epsilon)$ time needed for the oracle over $\epsilon n / 2$ updates, each performing $2\frac{Q(n, \epsilon)}{\epsilon n}$ operations of it. When the process finishes, we update our solution as before, and immediately start spreading the next call to the oracle.
    
    We now address $(ii)$. To turn the approximation guarantee into a high probability bound, we simply run $O(\log n)$ independent instances of the algorithm above, and return the average of these $O(\log n)$ outputs as our output. Note that if algorithm $\mc{A}$ works against adaptive adversaries, then so should all of these $O(\log n)$ instances and thus our algorithm as well.
\end{myproof}

Let us now show how we can get rid of the additive $\epsilon n$ error.  First, note that if the maximum matching size is guaranteed to be $\Omega(n)$ at all times, then a multiplicative-additive $(\alpha, \epsilon n)$-approximation for it is indeed a multiplicative $(\alpha - O(\epsilon))$-approximation. Indeed up to a $\poly\log n$ increase in the update-time, this assumption comes w.l.o.g. due to a ``vertex sparsification'' idea of the literature \cite{AssadiKL16-Journal,Kiss-ITCS22}. In particular, suppose $\mu(G) \ll n$ and suppose that we know $\mu(G)$. Then the idea is to randomly contract the vertices into $O(\mu(G)/\epsilon)$ vertices, then remove self-loops and parallel edges. This way, it is not hard to see that the resulting graph still has a matching of size at least $(1-\epsilon)\mu(G)$ in expectation, but has much fewer vertices. The problem with this approach is that it only works against oblivious adversaries since an adaptive adversary may insert edges among the contracted nodes. However, \cite{Kiss-ITCS22} showed that taking $\poly\log n$ of these vertex sparsified subgraphs are resilient against adaptive adversaries also. In particular, the following was proved in \cite{Kiss-ITCS22}, which combined with \cref{lem:xggg-t210} implies \cref{lem:A-to-B}.

\begin{proposition}[{\cite[Corollary~4.101]{Kiss-ITCS22}}]
    If there is a dynamic algorithm for maintaining an $(\alpha, \delta n)$-approximate maximum matching for dynamic graphs in update time $O(T(n, \delta))$ then there is a randomized algorithm for maintaining an $(\alpha - \epsilon)$-approximate maximum matching with update time $O(T(n, \epsilon) \log^4 n/\epsilon^8)$ which works against adaptive adversaries given the underlying algorithm also does.
\end{proposition}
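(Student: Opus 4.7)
The plan is to reduce the additive error $\delta n$ of the given algorithm to a multiplicative $\epsilon$ loss via a vertex-sparsification argument, then layer enough independent copies on top to resist adaptive adversaries. The starting observation is that an $(\alpha, \delta n)$-approximation is automatically an $(\alpha - \epsilon)$-approximation whenever $\mu(G) \geq (\delta/\epsilon) n$, so the only regime needing attention is when $\mu(G) = o(n)$.

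For the small-$\mu(G)$ regime I would guess $k \in \{2^0, 2^1, \dots, 2^{\lceil \log n \rceil}\}$ as an upper bound on $\mu(G)$ within a factor of two. For each guess, pick a uniformly random map $\phi\colon V \to [Ck/\epsilon]$ for a sufficiently large constant $C$, and maintain the sparsified graph $G_\phi$ on vertex set $[Ck/\epsilon]$ whose edges are the images $(\phi(u),\phi(v))$ of the edges $(u,v) \in E(G)$ with $\phi(u) \neq \phi(v)$, with multiplicities tracked so that insertions and deletions can be processed in $O(1)$ time per update. A standard birthday-style calculation shows that, for any fixed maximum matching $M^\star$ of $G$, each of its edges survives with probability $1 - O(\epsilon)$ and any two edges of $M^\star$ collide with probability $O(\epsilon/k)$, so in expectation $\mu(G_\phi) \geq (1 - O(\epsilon)) \mu(G)$ whenever $\mu(G) \leq 2k$. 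Running the hypothesized base algorithm on $G_\phi$ with parameter $\epsilon' = \Theta(\epsilon^2)$ yields an estimate with additive error $\epsilon' \cdot (Ck/\epsilon) = O(\epsilon k)$; taking the correct guess (e.g.\ the largest $k$ whose estimate is at least $k/2$) then gives an $(\alpha - O(\epsilon))$ multiplicative approximation.

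The step above only works against oblivious adversaries, because once an adaptive adversary sees the reported estimate it can infer enough about $\phi$ to aim subsequent updates at a single bucket. To defeat this, I would maintain $\Theta(\log^3 n / \epsilon^6)$ independent sparsifiers per guess and output the median of their estimates. The key point is that between two consecutive changes to the reported number the adversary receives only $O(1)$ bits of new information about $\phi$, which in expectation compromises at most an $\epsilon$ fraction of the copies; a Chernoff-plus-union-bound argument then preserves the median's accuracy across all $\poly(n)$ updates w.h.p. Across $O(\log n)$ guesses and $\widetilde{O}(1/\epsilon^6)$ copies per guess, the total update time is $O(T(n,\epsilon) \log^4 n / \epsilon^8)$ as claimed. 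The main obstacle is exactly this adaptive-adversary step: converting the one-shot vertex-sparsification guarantee, which is intrinsically oblivious, into a per-update guarantee that survives an adversary who may correlate future insertions with past outputs requires the careful bookkeeping developed in \cite{Kiss-ITCS22}, and the $\log^4 n / \epsilon^8$ overhead is precisely the price of that robustification.
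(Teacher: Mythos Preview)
The paper does not prove this proposition; it is quoted verbatim as \cite[Corollary~4.101]{Kiss-ITCS22} and used as a black box. The only thing the paper offers is the informal paragraph preceding the proposition: contract the vertices randomly into $O(\mu(G)/\epsilon)$ buckets so that the additive $\epsilon n$ error becomes multiplicative, observe that this alone is only oblivious-adversary safe, and then cite \cite{Kiss-ITCS22} for the fact that running $\poly\log n$ independent sparsifiers suffices against adaptive adversaries.

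Your sketch follows exactly this outline and fleshes it out a bit more (the geometric guessing of $k$, the birthday calculation, the choice $\epsilon' = \Theta(\epsilon^2)$, and the median over independent copies). So relative to what the paper itself contains, your proposal is at least as detailed and takes the same route. The one place where you go beyond both the paper and a safe sketch is the ``$O(1)$ bits of information per output change'' heuristic for the adaptive step; that is plausible intuition but not a proof, and you rightly flag that the actual argument lives in \cite{Kiss-ITCS22}. Since the paper does not attempt that argument either, there is nothing to compare against here---your proposal matches the paper's own treatment, which is simply to invoke the external result.
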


\section{Basic Facts}\label{sec:deferred}

\begin{fact}\label{fact:quadratic-sum}
    Let $x_1, \ldots, x_n$ be an arbitrary set of reals. Then denoting $\bar{x} = \frac{1}{n} \sum_{i=1}^n x_i$, it holds that $\sum_{i=1}^n x_i^2 = n \bar{x}^2 + \sum_{i=1}^n (x_i - \bar{x})^2 \geq n \bar{x}^2$.
\end{fact}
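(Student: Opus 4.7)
The plan is to prove this by direct algebraic expansion; it is the standard bias-variance identity. First I would expand the squared term inside the sum on the right-hand side:
\[
(x_i - \bar{x})^2 = x_i^2 - 2 x_i \bar{x} + \bar{x}^2,
\]
and then sum over $i \in \{1, \ldots, n\}$ to obtain
\[
\sum_{i=1}^n (x_i - \bar{x})^2 = \sum_{i=1}^n x_i^2 \;-\; 2\bar{x} \sum_{i=1}^n x_i \;+\; n \bar{x}^2.
\]

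The key step is then to invoke the definition of $\bar{x}$, namely $\sum_{i=1}^n x_i = n \bar{x}$, which collapses the middle term to $-2 n \bar{x}^2$. Combined with the trailing $+ n\bar{x}^2$, the right-hand side simplifies to $\sum_{i=1}^n x_i^2 - n \bar{x}^2$. Rearranging gives exactly the claimed equality $\sum_{i=1}^n x_i^2 = n \bar{x}^2 + \sum_{i=1}^n (x_i - \bar{x})^2$.

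For the inequality, I would simply observe that $(x_i - \bar{x})^2 \geq 0$ for every $i$, so the second summand is non-negative and hence $\sum_{i=1}^n x_i^2 \geq n \bar{x}^2$. There is no real obstacle here; the statement is a one-line computation and the only thing to be careful about is the sign of the cross term.
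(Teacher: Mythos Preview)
Your proof is correct and essentially identical to the paper's: both are the direct algebraic expansion, using $\sum_{i=1}^n x_i = n\bar{x}$ (equivalently $\sum_{i=1}^n (x_i - \bar{x}) = 0$) to kill the cross term. The only cosmetic difference is that the paper expands $x_i^2 = (\bar{x} + y_i)^2$ with $y_i := x_i - \bar{x}$, whereas you expand $(x_i - \bar{x})^2$ and rearrange; the computations are the same.
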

\begin{myproof}
    For any $i \in [n]$ denote $y_i = (x_i - \bar{x})$. We have
    $$
        \sum_{i=1}^n x_i^2 = \sum_{i=1}^n (\bar{x} + y_i)^2 = \sum_{i=1}^n (\bar{x}^2 + y_i^2 + 2\bar{x} y_i) = n \bar{x}^2 + \sum_{i=1}^n y_i^2 + 2\bar{x} \sum_{i=1}^n y_i = n \bar{x}^2 + \sum_{i=1}^n y_i^2,
    $$
    where the last equality follows from $\sum_{i=1}^n y_i = \sum_{i=1}^n (x_i - \bar{x}) = \sum_{i=1}^n x_i - \sum_{i=1}^n \bar{x} = 0$.
\end{myproof}

\end{document}